\newtheorem{theorem}{Theorem}
\newtheorem{proposition}[theorem]{Proposition}
\newtheorem{lemma}[theorem]{Lemma}
\theoremstyle{definition}
\newtheorem{definition}[theorem]{Definition}
\theoremstyle{remark}
\newtheorem{remark}[theorem]{Remark} 
\newtheorem{example}[theorem]{Example}
\theoremstyle{corollary}
\newtheorem{corollary}[theorem]{Corollary}
\newcommand{\NN}{{\mathbb N}}
\newcommand{\ZZ}{{\mathbb Z}}
\newcommand{\QQ}{{\mathbb Q}}
\newcommand{\RR}{{\mathbb R}}
\newcommand{\RRn}{\RR^n}
\newcommand{\Rplus}{\RR_{>0}}
\newcommand{\Rplusn}{(\Rplus)^n}
\newcommand{\CC}{{\mathbb C}}
\newcommand{\CCn}{\CC^n}
\newcommand{\Cstar}{\CC^*}
\newcommand{\Cstarn}{(\Cstar)^n}
\newcommand{\Cstarm}{(\Cstar)^m}
\newcommand{\KK}{K}
\newcommand{\KKn}{\KK^n}
\newcommand{\Kstar}{K^*}
\newcommand{\Kstarn}{(\Kstar)^n}
\newcommand{\Kstarm}{(\Kstar)^m}
\newcommand{\VV}{V}
\newcommand{\Vstar}{V^*}
\newcommand{\laurentseries}{\KK[x_1^{\pm 1},\ldots, x_n^{\pm 1}]}
\newcommand{\xgamma}{X^\gamma}
\newcommand{\xalpha}{X^\alpha}
\newcommand{\xbeta}{X^\beta}
\newcommand{\xalphaminusxbeta}{\xalpha-\xbeta}
\newcommand{\ideal}[1]{\langle #1 \rangle}
\newcommand{\Zx}{\ZZ[x_1,\ldots,x_n]}
\newcommand{\Zxpm}{\ZZ[x_1^{\pm},\ldots,x_n^{\pm}]}
\newcommand{\tGG}{\mathtt{G}}
\newcommand{\tCC}{\mathtt{C}}
\newcommand{\tXX}{\mathtt{X}}
\newcommand{\tOO}{\mathtt{O}}
\newcommand{\tgg}{\mathtt{g}}
\newcommand{\tcc}{\mathtt{c}}
\newcommand{\txx}{\mathtt{x}}
\newcommand{\too}{\mathtt{o}}
\newcommand{\false}{\operatorname{false}}
\newcommand{\true}{\operatorname{true}}
\newcommand{\dotequal}{\mathrel{\dot{=}}}
\newcommand{\vars}{\operatorname{vars}}
\newcommand{\Classify}{\operatorname{Classify}}
\newcommand{\ProjectAndClassify}{\operatorname{ProjectAndClassify}}
\newcommand{\DecomposeAndClassify}{\operatorname{DecomposeProjectAndClassify}}
\newcommand{\GB}{\operatorname{GroebnerBasis}}
\newcommand{\LOR}{L_\text{OR}}
\renewcommand{\emptyset}{\varnothing}
\def\nmodels{\mathrel|\joinrel\neq}
\newcommand{\gb}{Gr\"obner basis\xspace}
\newcommand{\gbs}{Gr\"obner bases\xspace}
\newcommand{\crn}{chemical reaction network\xspace}
\newcommand{\crns}{chemical reaction networks\xspace}
\begin{document}
\title{Efficiently and Effectively Recognizing Toricity of Steady State
  Varieties}

\author{
  Dima Grigoriev, CNRS and the University of Lille, France\\
  \texttt{dmitry.grigoryev@math.univ-lille1.fr}
  \and
  Alexandru Iosif, RWTH Aachen University, Germany\\
  \texttt{iosif@aices.rwth-aachen.de}
  \and
  Hamid Rahkooy, CNRS, Inria, and the University of Lorraine, France\\
  \texttt{hamid.rahkooy@inria.fr}
  \and
  Thomas Sturm,
  CNRS, Inria, and the University of Lorraine, France\\
  MPI Informatics and Saarland University, Germany\\
  \texttt{thomas.sturm@loria.fr}
  \and
  Andreas Weber, University of Bonn, Germany\\
  \texttt{weber@informatik.uni-bonn.de}}

\date{April 15, 2020}

\maketitle

\begin{abstract}
  We consider the problem of testing whether the points in a complex or real
  variety with non-zero coordinates form a multiplicative group or, more
  generally, a coset of a multiplicative group. For the coset case, we study the
  notion of shifted toric varieties which generalizes the notion of toric
  varieties. This requires a geometric view on the varieties rather than an
  algebraic view on the ideals. We present algorithms and computations on 129
  models from the BioModels repository testing for group and coset structures
  over both the complex numbers and the real numbers. Our methods over the
  complex numbers are based on Gröbner basis techniques and binomiality tests.
  Over the real numbers we use first-order characterizations and employ real
  quantifier elimination. In combination with suitable prime decompositions and
  restrictions to subspaces it turns out that almost all models show coset
  structure. Beyond our practical computations, we give upper bounds on the
  asymptotic worst-case complexity of the corresponding problems by proposing
  single exponential algorithms that test complex or real varieties for toricity
  or shifted toricity. In the positive case, these algorithms produce generating
  binomials. In addition, we propose an asymptotically fast algorithm for
  testing membership in a binomial variety over the algebraic closure of the
  rational numbers.
\end{abstract}

\section{Introduction}
We are interested in situations where the points with non-zero coordinates in a
given complex or real variety form a multiplicative group or, more generally, a
coset. We illustrate this by means of a simple example. For $K\in\{\CC,\RR\}$,
$(K^*)^n$ denotes the direct power of the multiplicative group of the respective
field. Consider a family of ideals
\begin{equation}\label{eq:ideal_crn}
  I_k=\ideal{x^2-ky^2}
\end{equation}
with a rational parameter $k$. Let $V_k$ be the complex variety of $I_k$, and
let $\Vstar_k=\VV_k \cap (\Cstar)^2$. Then $\Vstar_1$ forms a \emph{group}, but
$\Vstar_{-1}$ does not, because it does not contain $(1,1)$. However,
$\Vstar_{-1}=(1,i) \cdot \Vstar_1$ forms a \emph{coset} of $\Vstar_1$. Over the
reals, $V_1^*\cap(\RR^*)^2$ is again a group, but
$V_{-1}^*\cap(\RR^*)^2 = \emptyset$ is not a coset of any group. Consider now
$\VV_1 = V_{11} \cup V_{12}$, where $V_{11}$ and $V_{12}$ are given by
$\ideal{x-y}$ and $\ideal{x+y}$, respectively. Notice that $\Vstar_{11}$ is
itself a group, and $\Vstar_{12}$ is a coset of $\Vstar_{11}$. Both
$\Vstar_{11}$ and $\Vstar_{12}$ are irreducible because their generating ideals
are prime. Under the additional condition of irreducibility $\Vstar_{11}$ forms
a \emph{torus} and $\Vstar_{12}$ forms a \emph{shifted torus}. We then call the
varieties $V_{11}$ and $V_{12}$ \emph{toric} and \emph{shifted toric},
respectively.

Toric varieties are well established and have an important role in algebraic
geometry \cite{fulton_introduction_2016,eisenbud-sturmfels-binomials}. However,
our principal motivation to study generalizations of toricity comes from the
sciences, specifically \textit{\crns} such as the following:
\begin{center}
  \ch[label-style=\scriptsize]{2 A <=>[$\frac{k}{2}$][$\frac12$] 2 B}.
\end{center}
Assuming, e.g., mass action kinetics \cite{voit2015150} one can derive a system
of autonomous ordinary differential equations describing the development of
concentrations of the species A and B as functions of time \cite[Section
2.1.2]{feinberg-book}. For the given reaction network one obtains a polynomial
vector field generating exactly our ideals $I_k$ in (\ref{eq:ideal_crn}). Our
methods thus detect whether equilibrium points with non-zero coordinates form a
group or a coset.

Detecting toricity of a variety in general, and of the steady state varieties of
chemical reaction networks in particular, is a difficult problem
\cite{sturm-parametric-steady-state-issac19}. The first issue in this regard is
finding suitable notions to describe the structure of the steady states.
Existing work typically addresses algebraic properties of the steady state
ideal, e.g., the existence of binomial Gröbner bases. In this article, in
contrast, we take a geometric approach, focusing on varieties rather than
ideals. We propose to study toricity and shifted toricity of varieties $V$ over
$K\in\{\CC,\RR\}$, which for irreducible varieties coincides with $V\cap(K^*)^n$ forming
a multiplicative group or coset, respectively. It is noteworthy that chemical
reaction network theory generally takes place in the interior of the first
orthant of $\RR^n$, i.e., all species concentrations and reaction rates are
assumed to be strictly positive \cite{feinberg-book}. Our considering
$(\CC^*)^n$ in contrast to $\CC^n$ is a first step in this direction, considering
also $(\RR^*)^n$ is another step.

Toric dynamical systems have been studied by Feinberg \cite{Feinberg1972} and by
Horn and Jackson \cite{Horn1972}. Craciun et al.~\cite{craciun_toric_2009}
showed that toric dynamical systems correspond to \textit{complex balancing}
\cite{feinberg-book}. Our generalized notions of toricity are inspired by
Grigoriev and Milman's work on \textit{binomial varieties}
\cite{grigoriev_milman2012}. There are further definitions in the literature,
where the use of the term ``toric'' is well motivated. Gatermann et
al.~considered \textit{deformed toricity} for steady state ideals
\cite{Gatermann2000}. The exact relation between the principle of complex
balancing and various definitions of toricity has obtained considerable
attention in the last years
\cite{perez_millan_chemical_2012,gatermann_bernsteins_2005,muller_sign_2016}.

Complex balancing itself generalizes \textit{detailed balancing}, which has
widely been used in the context of chemical reaction networks
\cite{feinberg_stability_1987,feinberg-book,Horn1972}. Gorban et
al.~\cite{GorbanYablonski:11a,GorbanMirkes:13a} related reversibility of
chemical reactions in detailed balance to binomiality of the corresponding
varieties. Historically, the principle of detailed balancing has attracted
considerable attention in the sciences. It was used by Boltzmann in 1872 in
order to prove his \textit{H-theorem} \cite{boltzmann1964lectures}, by Einstein
in 1916 for his quantum theory of emission and absorption of radiation
\cite{einstein1916strahlungs}, and by Wegscheider \cite{Wegscheider1901} and
Onsager \cite{onsager1931reciprocal} in the context of \textit{chemical
  kinetics}, which lead to Onsager's Nobel prize in Chemistry in 1968.
Pérez-Millán et al.~\cite{perez_millan_chemical_2012} consider steady state
ideals with binomial generators. They present a sufficient linear algebra
condition on the \textit{stoichiometry matrix} of a \crn in order to test
whether the steady state ideal has binomial generators. Conradi and Kahle
proposed a corresponding heuristic algorithm. They furthermore showed that the
sufficient condition is even equivalent when the ideal is homogenous
\cite{conradi2015detecting,kahle-binomial-package-2010,Kahle2010}. Based on the
above-mentioned linear algebra condition, MESSI systems have been introduced in
\cite{millan_structure_2018}. Recently, binomiality of steady state ideals was
used to infer network structure of chemical reaction networks out of measurement
data \cite{Wang_Lin_Sontag_Sorger_2019}.

Besides its scientific adequacy as a generalization of complex balancing there
are practical motivations for studying toricity. Relevant models are typically
quite large. For instance, with our comprehensive computations in this article
we will encounter systems up to 90 polynomials in dimension 71. This potentially
takes symbolic computation to its limits. A possible approach to overcome this
is to discover systematic occurrences of certain structural properties in the
input, and to exploit those structural properties towards more efficient
algorithms. From this point of view, toricity and shifted toricity are
interesting concepts because tools from toric geometry can be used as a
complexity reduction step in the multistationarity problem. Grigoriev and Weber
\cite{GrigorievWeber2012a} gave a complexity analysis for solving binomial
varieties, based on the computation of Smith normal forms. More interestingly,
toric and shifted toric models are known to have scale invariant
multistationarity in the space of linear conserved quantities, which further
reduces the dimension of the multistationarity problem \cite{Conradi2019}.

Our original contributions in this article are the following. Interested in the
geometric structure of real and complex varieties $V$ rather than the algebraic
structure of the corresponding ideals, we study primarily $V^*=V\cap(K^*)^n$. We
call $V$ shifted toric when $V$ is irreducible and $V^*$ is a coset. This
generalizes the notion of toric varieties $V$ for groups
$V^*$. Within this setting, we have two principal results:
\begin{itemize}
\item Relating our novel geometric view to the established algebraic view, we
  give a characterization in terms of Gröbner bases for $V^*$ to be a group or
  coset. (Proposition~\ref{prop:group-coset-iff}).
\item We show that Zariski closures of groups in $(\CC^*)^n$ are binomial
  varieties (Proposition~\ref{prop:groupstructure}). The converse had been
  shown in \cite{grigoriev_milman2012}.
\end{itemize}

We propose practical algorithms testing for given polynomial systems $F$
whether their varieties contain group or coset structures.
\begin{itemize}
\item We consider over the complex numbers $V(F)^*$ (Algorithm~\ref{alg:pacc})
  and $V(P_i)^*$ for prime components $\langle P_i\rangle$ of $\langle F\rangle$
  (Algorithm~\ref{alg:dacc}).
\item We consider the same over the real numbers, $V(F)^*$
  (Algorithm~\ref{alg:pacr}) and $V(P_i)^*$ (Algorithm~\ref{alg:dacr}).
\item With prime decomposition we find that for up to 98\% of the prime
  components $V_K(P_i)^*$ is either empty or a coset.
\end{itemize}
Our algorithms are implemented in Maple\footnote{Maple (2019). Maplesoft, a
  division of Waterloo Maple Inc., Waterloo, Ontario.} and
Reduce\footnote{\url{https://sourceforge.net/projects/reduce-algebra/}}
\cite{Hearn:05a,Hearn:67a} and systematically applied to the steady state
varieties of 129 models from the BioModels
repository\footnote{\url{https://www.ebi.ac.uk/biomodels/}}. Our objective was
to build on robust, off-the-shelf software, which has a chance to be accepted by
scientific communities outside symbolic computation in the foreseeable future.
As a consequence, our proposed algorithms must rely on existing implementations.
Over $\CC$ we use Gröbner bases \cite{Buchberger:65a,Faugere:99a,Faugere:02a}, and
over $\RR$ we use real quantifier elimination techniques.

Gröbner bases and real quantifier elimination mentioned above come with high
intrinsic complexity. The former are complete for exponential space
\cite{mayr_complexity_1982}. The latter are double exponential
\cite{Davenport:1988:RQE:53372.53374,Grigoriev:88a,Weispfenning:88a}. From a
more theoretical point of view we study the intrinsic complexity of the problems
actually addressed. We follow Chistov--Grigoriev's complexity results for
first-order quantifier elimination over algebraically closed fields
\cite{chistov_complexity_1984}, the algorithm constructing irreducible
components of a variety \cite{chistov_complexity_1984,
  grigoriev_factorization_1986} and Grigoriev--Vorobjov's algorithm for solving
polynomial system of inequalities \cite{grigoriev_solving_1988}. These results
are used to propose an algorithm to test within single exponential complexity
bounds whether:
\begin{itemize}
\item a given complex variety is toric or shifted toric
  (Theorem~\ref{complex});
\item a given real variety is toric or shifted toric
  (Theorem~\ref{real});
\item a given point belongs to a given binomial variety
  (Theorem~\ref{th:membership}).
\end{itemize}

The plan of the article is as follows. In Section~\ref{sec:preliminaries} we
present preliminaries from the literature and introduce our new notions and
related results. In Section \ref{sec:computations}, we present new algorithms
for group and coset tests over $\CC$ and $\RR$. As a first step towards irreducible
varieties we also use prime decompositions over the coefficient field, i.e.,
rational numbers. For the sake of a concise discussion, the major part of our
rather comprehensive computation results can be found in Appendix~\ref{app:a}.
In Section \ref{sec:complexities} we propose asymptotically fast algorithms for
the practical computations in Section~\ref{sec:computations}. The proposed
algorithms induce upper complexity bounds on the corresponding problems. In
Section \ref{sec:conclusions} we summarize our findings and mention perspectives
for future work.

\section{Toric, Shifted Toric and Binomial
  Varieties}\label{sec:preliminaries} 
We use $K$ to denote either $\CC$ or $\RR$ when definitions or results hold for both
fields. The natural numbers $\NN$ include $0$. For positive $n\in\NN$ and
$X=(x_1,\dots,x_n)$, the polynomial ring with coefficients in $\QQ$ and variables
$x_1$, \dots,~$x_n$ is written $\QQ[X]=\QQ[x_1,\dots,x_n]$. For
$\alpha=(\alpha_1,\dots,\alpha_n) \in \NN^n, X^\alpha=x_1^{\alpha_1}\dots x_n^{\alpha_n}$ is a monomial in
$\QQ[X]$. When mentioning Gröbner bases of ideals we always mean reduced Gröbner
bases; when not mentioned explicitly the term order is not relevant. Given a
polynomial $f \in \QQ[X]$, the variety of a $f$ over $K$ is
$V(f)=\{\,z\in K\mid f(z)=0\,\}\subseteq K^n$; this naturally generalizes to sets
$F$ of polynomials and ideals $I$. Vice versa, given a variety $V\subseteq K^n$, we
define the ideal of $V$ to be $I(V)=\{\,f\in\QQ[X]\mid\text{$f(z)=0$ for all
  $z\in V$}\,\}$. Recall that over algebraically closed fields,
$I(V(J)) = \sqrt{J}$, the radical of $J$.

Let $\Kstar$ be the multiplicative group of $\KK$. A subgroup
$G \subseteq \Kstarn$ is called a \textit{torus} over the direct product
$\Kstarn$, where multiplication is coordinate-wise, if there exists $m \in \NN$ such
that $G$ is isomorphic to $\Kstarm$. A variety $V \subseteq \KKn$ is called
\textit{toric} if it is irreducible and there exists a torus
$G \subseteq \Kstarn$ such that $V=\overline{G}$, the Zariski closure of $G$
\cite{sturmfels_grobner_1996}. It is noteworthy that there are alternative
definitions that requires the variety to be normal as well
\cite{fulton_introduction_2016}.

For a variety $V \subseteq \KKn$, by $\Vstar$ we denote $V \cap \Kstarn$, i.e., the points
in $V$ with non-zero coordinates. It is well-known that every torus is the
(irreducible) zero set of a set of Laurent binomials of the form $\xgamma-1$
where $\gamma \in \ZZ$ \cite{fulton_introduction_2016}. We are going to make use of the
following proposition, which is a consequence of results in
\cite{fulton_introduction_2016,sturmfels_grobner_1996}.

\begin{proposition}\label{idealtoricvariety}
  Let $V \subseteq \CCn$ be a variety. $V$ is a toric variety if and only $I(V)$ is
  prime and the reduced \gb (with respect to any term order) of $I(V)$ contains
  only binomials of the form $\xalphaminusxbeta$ where $\alpha$, $\beta \in \NN^n$.
\end{proposition}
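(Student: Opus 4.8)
The plan is to prove both implications of the biconditional by relating the geometric notion of toricity (via the Zariski closure of a torus) to the combinatorial structure of the reduced \gb of $I(V)$.

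For the forward direction, suppose $V \subseteq \CCn$ is toric, so $V = \overline{G}$ for some torus $G$ isomorphic to $(\Cstar)^m$. Since $V$ is irreducible (toricity requires this), $I(V)$ is prime; this handles one of the two conditions. For the binomiality of the reduced \gb, I would first recall, citing \cite{fulton_introduction_2016,sturmfels_grobner_1996}, that a torus $G$ arises as the kernel of a monomial map, equivalently as the zero set in $\Cstarn$ of Laurent binomials $\xgamma - 1$ with $\gamma \in \ZZ^n$. The lattice $L = \{\gamma \in \ZZ^n : \xgamma - 1 \text{ vanishes on } G\}$ is saturated (because $G$ is a subtorus, not merely a subgroup), and the ideal $I(V) = I(\overline{G})$ is the lattice ideal $I_L = \langle \xalpha - \xbeta : \alpha - \beta \in L\rangle$, which for a saturated lattice is exactly the toric ideal. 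The key classical fact I would invoke is that a toric ideal $I_L$ with $L$ saturated has the property that its reduced \gb with respect to any term order consists of pure binomials $\xalphaminusxbeta$ with $\alpha,\beta \in \NN^n$: this follows because $I_L$ is generated by such binomials and Buchberger's algorithm applied to a set of binomials produces only binomials (each $S$-polynomial of two binomials is a binomial, and reduction of a binomial modulo binomials yields a binomial or zero), while saturation of $L$ prevents any monomial from lying in $I_L$, so no binomial can reduce to a monomial and the \gb genuinely consists of differences of monomials.

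For the converse, suppose $I(V)$ is prime and its reduced \gb consists only of binomials $\xalphaminusxbeta$ with $\alpha,\beta \in \NN^n$. Being generated by such pure differences, $I(V)$ is what is sometimes called a \emph{pure binomial ideal}; since it is additionally prime (hence radical), results on binomial ideals \cite{eisenbud-sturmfels-binomials} show that a prime binomial ideal whose \gb contains no monomials is a toric ideal, i.e., $I(V) = I_L$ for a saturated sublattice $L \subseteq \ZZ^n$ (saturation is forced by primeness: if $k\gamma \in L$ then $\xgamma$ acts as a nilpotent-like element modulo a monomial, contradicting primeness unless $\gamma \in L$). Then $V = V(I_L)$, and $V^* = V \cap \Cstarn$ is precisely the subtorus of $\Cstarn$ cut out by the Laurent binomials $\xgamma - 1$, $\gamma \in L$, which is isomorphic to $(\Cstar)^{n - \operatorname{rank} L}$; moreover $V = \overline{V^*}$ because the binomial generators have no monomial factors, so no coordinate hyperplane is forced into $V$ and $V^*$ is dense in $V$. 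Hence $V$ is the Zariski closure of a torus, and being irreducible, $V$ is toric by definition.

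The main obstacle, and the place where care is needed, is the precise bookkeeping around \emph{saturation} of the lattice and its equivalence with primeness: the statement ``$\xalphaminusxbeta$-generated plus prime $\Rightarrow$ saturated lattice ideal'' needs the Eisenbud--Sturmfels classification of binomial primes, and conversely one must verify that a saturated-lattice toric ideal has a reduced \gb free of monomials for \emph{every} term order (not just one). Both facts are in the literature (\cite{eisenbud-sturmfels-binomials,sturmfels_grobner_1996}), so the honest proof is mostly a matter of assembling these citations correctly and checking that ``no monomial in $I(V)$'' is equivalent to ``the \gb has no monomial entries'' — which is immediate since the reduced \gb generates, and a monomial in the ideal would reduce to a monomial in the \gb. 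I would also double-check the degenerate edge cases ($L = 0$, giving $V = \CCn$; or $L$ of full rank, giving $V$ a point) to confirm the statement holds vacuously or trivially there.
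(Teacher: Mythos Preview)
Your proposal is correct. However, note that the paper does not actually supply a proof of this proposition: it is stated as ``a consequence of results in \cite{fulton_introduction_2016,sturmfels_grobner_1996}'' and left at that. So there is nothing to compare against beyond the bare citation.

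What you have written is precisely the unpacking of those citations that the paper omits. Your forward direction (torus $\Rightarrow$ saturated lattice $\Rightarrow$ toric ideal $\Rightarrow$ Buchberger preserves pure binomials, with no monomials because $(1,\dots,1)\in V$) is the standard route through Sturmfels' book. Your converse, invoking the Eisenbud--Sturmfels classification of binomial primes to deduce that a prime ideal generated by pure differences $X^\alpha-X^\beta$ with no monomials is a saturated lattice ideal, is also the right tool; the paper itself uses \cite{eisenbud-sturmfels-binomials} elsewhere (Proposition~\ref{prop:groupstructure}), so this is entirely in keeping with its toolkit. One small simplification you can make in the converse: once $I(V)$ is prime and $(1,\dots,1)\in V$ (which follows immediately from each generator $X^\alpha-X^\beta$ vanishing there), irreducibility of $V$ gives $\overline{V^*}=V$ directly, without needing to argue separately about coordinate hyperplanes.
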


By definition, $V$ is toric if and only if $V$ is irreducible and there exists a
torus $T$ such that $V=\overline T$. Assume that $V$ is given by a set of
generators of $I(V)$. Since $V$ is irreducible, then $\Vstar$ is irreducible,
hence $T=\Vstar$, $V = \overline \Vstar$, and $I(V)=I(\overline \Vstar)$.
Therefore, it suffices to compute a \gb of $I(V)$ instead of $I(\Vstar)$ and use
Proposition \ref{idealtoricvariety} in order to check if $\Vstar$ is a group.
Note that if there is a \gb of $I(V)$ where all elements are of the form
$X^\alpha-X^\beta$ and the ideal is prime, then $\Vstar$ is clearly non-empty. For more
detailed study of toric varieties refer to \cite{fulton_introduction_2016} and
\cite{sturmfels_grobner_1996}.

Not all subgroups of $\Kstarn$ are reducible. For example if $K=\CC$ and
$V=V(x^2-y^2) \subseteq \CCn$, one can check that $\Vstar = V \cap (\Cstar)^2$
is a group, however it is not reducible, hence not a torus. Actually $V$ can be
decomposed into the torus $V_1=V(x-y)\cap (\Cstar)^2$ and
$V_2=V(x+y) \cap (\Cstar)^2$, which is a coset of $\VV_1$. Varieties that admit
a group structure have interesting properties. A class of such varieties, called
binomial varieties are studied by Grigoriev and Milman
\cite{grigoriev_milman2012}, where the authors present a structure theorem for
them and discuss the complexity of their Nash resolution. We remind ourselves of
the definition and the structure theorem of binomial varieties, which will be
used for classifying steady state ideals.
\begin{definition}[Binomial Variety]
  A variety $V \subseteq \CCn$ is called binomial if
  $\Vstar := V \cap \Cstarn$ is the zero set of a finite set of
  binomials of the form $\xalpha - 1 \in \laurentseries$ and
  $V = \overline{\Vstar}$.
\end{definition}

Using \gbs instead of Laurent polynomials, one can see that if a
variety $V \subseteq \CCn$ is binomial then all elements of every \gb
of $I(V)$ are binomials of the form $\xalphaminusxbeta$, where
$\alpha,\beta \in \NN\setminus \{0\}$. The following theorem by
Grigoriev and Milman shows the structure of the binomial varieties by
precisely describing their irreducible components.

\begin{theorem}\cite[Theorem
  3.7]{grigoriev_milman2012}\label{structurebinomialvariety}
  The irreducible components of a binomial variety $V \subseteq \CCn$ include
  exactly one toric variety $\overline{T}$, where $T \subseteq \Cstarn$ is a
  torus, and several varieties $V_1=\overline{x_1T}$,
  \dots,~$V_r=\overline{x_rT}$, where $x_1T$, \ldots,~$x_rT \subseteq \Cstarn$
  are cosets of $T$ as a group, with respect to $x_1$, \dots,~$x_r \in \Cstarn$,
  respectively.
\end{theorem}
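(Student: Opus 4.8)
The plan is to replace $V$ by $\Vstar=V\cap\Cstarn$, bring the defining Laurent binomials into a diagonal shape by an invertible integer change of exponents, read off from this the decomposition of $\Vstar$ into cosets of a single subtorus, and then push this decomposition forward to the irreducible components of $V=\overline{\Vstar}$.

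First I would record that the map $W\mapsto W\cap\Cstarn$ is a bijection between the irreducible components of $V$ and the irreducible components of $\Vstar$ regarded as a subvariety of $\Cstarn$, with inverse $U\mapsto\overline U$. The point is that $\Vstar$ is dense in $V$ (because $V=\overline{\Vstar}$), so $V$ has no irreducible component lying inside a coordinate hyperplane: such a component would be disjoint from $\Vstar$, forcing $\Vstar$, hence $V$, into the union of the remaining components. Consequently every component of $V$ meets $\Cstarn$ in a dense, irreducible, locally closed subset, distinct components meet it in non-nested subsets, and closures of the components of $\Vstar$ return the components of $V$; this is routine once one notes that $\overline U\cap\Cstarn=U$ for a component $U$ of $\Vstar$. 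So it suffices to describe the components of $\Vstar\subseteq\Cstarn$, and on $\Cstarn$ one is free to apply monomial automorphisms.

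Next I write $\Vstar=\{\,z\in\Cstarn:z^{\alpha_i}=1,\ i=1,\dots,s\,\}$, let $L=\langle\alpha_1,\dots,\alpha_s\rangle\subseteq\ZZ^n$, and note that $z\in\Vstar$ iff $z^\alpha=1$ for all $\alpha\in L$. The elementary divisor theorem yields a $\ZZ$-basis $f_1,\dots,f_n$ of $\ZZ^n$ and integers $1\le d_1\mid d_2\mid\dots\mid d_k$ with $d_1f_1,\dots,d_kf_k$ a basis of $L$; the associated monomial map $\phi$ is an automorphism of the algebraic torus $\Cstarn$, and in the coordinates $w=\phi^{-1}(z)$ membership in $\phi^{-1}(\Vstar)$ reads $w_1^{d_1}=\dots=w_k^{d_k}=1$. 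Hence $\phi^{-1}(\Vstar)=\mu_{d_1}\times\dots\times\mu_{d_k}\times(\Cstar)^{n-k}$, a subgroup of $\Cstarn$, where $\mu_d\subseteq\Cstar$ is the group of $d$-th roots of unity. Its identity component is the subtorus $T_0=\{1\}^k\times(\Cstar)^{n-k}\cong(\Cstar)^{n-k}$, and it splits as the finite disjoint union $\bigsqcup_{\zeta\in R}\zeta T_0$ with $R=\mu_{d_1}\times\dots\times\mu_{d_k}\times\{1\}^{n-k}$. Each $\zeta T_0$ is irreducible and closed in $\Cstarn$, and none contains another: intersecting a putative inclusion $\overline{\zeta T_0}\subseteq\overline{\zeta'T_0}$ with $\Cstarn$ gives $\zeta T_0=\zeta'T_0$, so $\zeta=\zeta'$ by the choice of representatives. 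Thus these are exactly the components of $\phi^{-1}(\Vstar)$. Transporting back by $\phi$ and using the previous paragraph, the irreducible components of $V$ are precisely the $\overline{x_\zeta T}$, $\zeta\in R$, where $T:=\phi(T_0)$ is a subtorus of $\Cstarn$ and $x_\zeta:=\phi(\zeta)$; for $\zeta=e$ this is the toric variety $\overline T$, and for $\zeta\ne e$ it is the closure of a proper coset $x_\zeta T$. Finally, $\overline{x_\zeta T}$ is toric only for $\zeta=e$: a subtorus $T'\subseteq\Cstarn$ is the image of a homomorphism of algebraic tori, hence Zariski closed in $\Cstarn$ and containing $e$, so $\overline{x_\zeta T}=\overline{T'}$ would give $x_\zeta T=T'\ni e$ after intersecting with $\Cstarn$, forcing $\zeta=e$.

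I expect the main obstacle to be the bookkeeping of the first step: making the passage between $V\subseteq\CCn$ and $\Vstar\subseteq\Cstarn$ precise, so that no irreducible component is lost, merged, or created — in particular ruling out components of $V$ inside the coordinate hyperplanes — since it is only on $\Cstarn$ that the diagonalizing monomial automorphism is available. Once that correspondence is in place, the remainder is lattice arithmetic via Smith normal form together with formal manipulation of Zariski closures and cosets of subtori.
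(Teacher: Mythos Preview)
The paper does not prove this theorem at all: it is quoted verbatim from \cite[Theorem~3.7]{grigoriev_milman2012} and used as a black box, so there is no ``paper's own proof'' to compare against.

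That said, your argument is a correct and standard proof of the result. The reduction to $\Vstar$ via the component bijection is handled properly (the hypothesis $V=\overline{\Vstar}$ is exactly what rules out components in the coordinate hyperplanes), the Smith normal form step cleanly identifies $\phi^{-1}(\Vstar)$ with $\mu_{d_1}\times\cdots\times\mu_{d_k}\times(\Cstar)^{n-k}$, and the uniqueness of the toric component follows as you say because any subtorus of $\Cstarn$ is Zariski closed there and contains the identity. The only place to be slightly more careful is the last step: the paper's definition of ``torus'' is phrased as an abstract group isomorphism with some $(\Cstar)^m$, so you are implicitly using that in this context such a subgroup is in fact an algebraic subtorus (equivalently, that a toric variety contains $(1,\dots,1)$, which follows from Proposition~\ref{idealtoricvariety}). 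Once that is made explicit your argument is complete.
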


Later in this section we will show that the closure of any subgroup of
$\Cstarn$ is a binomial variety. Proposition \ref{idealtoricvariety}
gives the form of the polynomials in a \gb of the ideal describing the
toric component of a binomial variety.  For the components that are
cosets of the torus, one can easily derive the form of the reduced \gb
from the definition of the torus and Proposition
\ref{idealtoricvariety}. This is stated precisely in the following
proposition.  Intuitively, non-toric components of a binomial variety
can be considered as the \textit{shifts} of the toric component. This
motivates us to define \textit{shifted toric varieties}.

\begin{definition}[Shifted Toric Variety]\label{def:shiftedtoricvariety}
  A shifted torus in $\Kstarn$ is defined to be a coset of a torus in
  $\Kstarn$. A variety $\VV \subseteq \KKn$ is called shifted toric if
  it is the closure of a shifted torus.
\end{definition}

Since every group is a coset of itself, every torus is a shifted torus, and
therefore every toric variety is a shifted toric variety. However, a shifted
toric variety is not necessarily a toric variety. Following the definition of
shifted toric varieties and using Proposition \ref{idealtoricvariety}, we show
in the following proposition that ideals of shifted toric varieties have \gbs of
a specific form.

\begin{proposition}\label{prop:gb-shiftedtoric}
  $\Vstar \subseteq \CCn$ is a shifted torus if and only if $\Vstar$ is a zero set of
  Laurent binomials of the form $\left(\frac{X}{g}\right)^{\alpha} -1$, where
  $g \in \Vstar$ and $\alpha \in \ZZ^n$. Similarly, $V$ is shifted toric if and only if
  $I(V)$ is prime and the reduced \gb (with respect to any term order) of $I(V)$
  contains only binomials of the form $\xalpha + c \xbeta$ where $c \in \Cstar$,
  $\alpha$, $\beta \in \NN^n\setminus \{(0,\dots,0)\}$.
\end{proposition}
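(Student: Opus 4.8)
The plan is to reduce the shifted case to the toric case by means of the diagonal change of variables that scales each $x_i$ by a coordinate of the shift vector $g$, and then to quote two facts already at hand: every torus in $\Cstarn$ is the common zero set of Laurent binomials $X^{\gamma}-1$ with $\gamma\in\ZZ^n$, and the Gröbner basis characterisation of toric varieties in Proposition~\ref{idealtoricvariety}.

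For the first equivalence I would argue as follows. If $\Vstar=gT$ with $T$ a torus, then $1\in T$ gives $g=g\cdot 1\in gT=\Vstar$, so the shift already lies in $\Vstar$; writing $T$ as the common zero set of Laurent binomials $X^{\gamma_i}-1$ and using $x\in gT\iff g^{-1}x\in T$ exhibits $\Vstar$ as the common zero set of the Laurent binomials $(X/g)^{\gamma_i}-1$. Conversely, if $\Vstar$ is the common zero set of Laurent binomials $(X/g)^{\alpha_j}-1$ with $g\in\Vstar$, then $g^{-1}\Vstar$ is the common zero set of the $X^{\alpha_j}-1$, hence a closed subgroup of $\Cstarn$; it is irreducible, since $V$ is irreducible and $\Vstar$ is a nonempty, hence dense, Zariski-open subset of $V$, therefore connected, therefore a subtorus $T$, and $\Vstar=gT$. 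Passing to Zariski closures gives the statement for $V$: multiplication by $g\in\Cstarn$ is a homeomorphism of $\CCn$ for the Zariski topology, so $\overline{gT}=g\cdot\overline T$, and when $\Vstar$ is a shifted torus one has $V=\overline{\Vstar}$ by Definition~\ref{def:shiftedtoricvariety}.

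For the Gröbner basis form, fix $g=(g_1,\dots,g_n)\in\Cstarn$ and let $\phi_g$ be the $\CC$-algebra automorphism of $\CC[X]$ with $\phi_g(x_i)=g_i^{-1}x_i$, so that $V(\phi_g(J))=g\cdot V(J)$ for every ideal $J$ and $\phi_g^{-1}=\phi_{g^{-1}}$. Since $\phi_g$ multiplies every monomial by a nonzero scalar, it leaves the support of each polynomial unchanged, hence fixes every term order and carries leading terms to leading terms; consequently it maps a reduced \gb (with respect to any term order) to a reduced \gb after renormalising each image to be monic, and it maps prime ideals to prime ideals. If $V=\overline{gT}$ is shifted toric, then $\overline T$ is toric, so by Proposition~\ref{idealtoricvariety} $I(\overline T)$ is prime with reduced \gb consisting of pure differences $\xalphaminusxbeta$; then $\phi_g(I(\overline T))$ is prime with variety $g\cdot\overline T=V$, hence equals $I(V)$, and its renormalised reduced \gb consists of the binomials $\xalpha-g^{\alpha-\beta}\xbeta$, of the asserted shape with coefficient $-g^{\alpha-\beta}\in\Cstar$. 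Conversely, if $I(V)$ is prime with reduced \gb of binomials $\xalpha+c\xbeta$, $c\in\Cstar$, choose any point $g\in\Vstar$ — nonempty, because otherwise the irreducible $V$ would lie in a coordinate hyperplane, forcing a monomial into $I(V)$, which contradicts the shape of the basis (unless the basis is empty, in which case $V=\CCn$ is already toric). Each basis binomial vanishes at $g$, which forces $c=-g^{\alpha-\beta}$, so $\phi_{g^{-1}}$ sends it, after renormalisation, to $\xalphaminusxbeta$; thus $\phi_{g^{-1}}(I(V))$ is prime with a pure-binomial reduced \gb, so by Proposition~\ref{idealtoricvariety} its variety $g^{-1}\cdot V$ is a toric variety $\overline T$, and $V=g\cdot\overline T=\overline{gT}$ is shifted toric.

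I expect the substance of the argument to lie in the two ``preservation'' statements rather than in any computation: that the diagonal rescaling $\phi_g$ transports a \emph{reduced} Gröbner basis to a reduced Gröbner basis, and does so compatibly with \emph{every} term order (which is precisely what underwrites the ``with respect to any term order'' clause), and that scaling by an element of $\Cstarn$ commutes with Zariski closure and with the passage between a variety and its ideal. The one genuinely delicate bookkeeping step is in the converse direction: noticing that membership of $\xalpha+c\xbeta$ in $I(V)$ already pins down $c$ as $-g^{\alpha-\beta}$ for an \emph{arbitrary} $g\in\Vstar$, so that a single automorphism $\phi_g$ straightens every binomial of the basis simultaneously. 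A minor point to clear up along the way is that the subgroup produced by the translation in the first equivalence is genuinely a torus, which is where irreducibility of $V$ is used.
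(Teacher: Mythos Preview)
Your proof is correct and follows essentially the same route as the paper: translate by $g^{-1}$ to reduce to the toric case, then invoke Proposition~\ref{idealtoricvariety}. You are more explicit than the paper about the mechanism---naming the diagonal automorphism $\phi_g$, checking it preserves reduced Gr\"obner bases and primality, verifying $\Vstar\neq\emptyset$ in the converse, and observing that $c=-g^{\alpha-\beta}$ is already forced by any single $g\in\Vstar$---but the paper's ``equivalently'' and ``cleaning the denominator'' are exactly these steps compressed.
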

\begin{proof}
  $V$ is shifted toric if and only if there exists a torus $T$ and a coset $C$
  of $T$ with respect to some $g \in \KKn$ (i.e., $C=gT$), such that
  $V = \overline C$. Since $V$ is irreducible, then
  $\overline \Vstar = \overline C =V$ and
  $I(V)=I(\overline \Vstar) = I(\overline C)$, and this ideal is prime. Assume
  that $V$ given by a set of generators of $I(V)$. $V$ is shifted toric if and
  only if $I(\overline \Vstar)=I(V)$ is prime and $\Vstar$ is a coset. This
  holds if and only if $I(\Vstar)$ is prime and $g^{-1}\Vstar$ is a group. Note
  that $\overline {g^{-1}\Vstar}$ is irreducible if and only if
  $\overline \Vstar$ is irreducible. This holds if and only if $I(\Vstar)$ is
  prime, or equivalently $I(V)$ is prime. Therefore, by Proposition
  \ref{idealtoricvariety}, $V$ is shifted toric if and only if $I(V)$ is prime
  and all the elements of every \gb of $I(\overline {g^{-1}\Vstar})$ is of the
  form $X^\alpha -X^\beta$, for
  $\alpha,\beta \in \NN^n \setminus \{(0,\dots,0)\}$. Equivalently, all the elements of
  $I(V)$ are of the form
  $\left(\frac{X}{g}\right)^\alpha-\left(\frac{X}{g}\right)^\beta$. Cleaning the
  denominator, we have the desired form of the \gb elements.
\end{proof}

Proposition \ref{prop:gb-shiftedtoric} along with the structure theorem for
binomial ideals imply that the primary decomposition of an ideal generated by
binomials of the form $X^\alpha-X^\beta$ include an ideal generated by binomials of the
form $X^\alpha-X^\beta$ and several ideals generated by binomials of the form
$X^\alpha-c X^\beta$.

Using Proposition \ref{prop:gb-shiftedtoric}, one can design a randomised
algorithm for testing shifted toricity of a variety $V = \overline{\Vstar}$. Let
$g_1,\dots,g_m$ be \textit{generic points} in $\Vstar$ and consider the set of
Laurent binomials
$ G=\left\{\,\left( \frac{X}{g_i} \right) ^{\gamma_i} -1\ \middle\vert\ i=1,\dots,m\,
\right\}$ with symbolic exponents $\gamma_i =(\gamma_{i1},\ldots, \gamma_{in})$. Let
\begin{equation}
M=\left(\begin{array}{c c c}
           \gamma_{11} &  \cdots & \gamma_{1n} \\
           \vdots & \ddots & \vdots \\
           \gamma_{m1} & \cdots & \gamma_{mn} \\
        \end{array} \right)
\end{equation}
be the matrix of exponents of the Laurent binomials and make it row
reduced. This leads to a linear combination of rows with coefficients
in $\ZZ$, say $d_i$.  Then $\VV$ is shifted toric if and only if
\begin{equation}\label{eq:rand-shifted}
  \sum\limits_{i=1}^m d_i\gamma_i = 0,
\end{equation}
which holds if and only if
$\prod\limits_{i=1}^m \left( g_{i1}^{d_{i1}}\ldots g_{in}^{d_{in}} \right) =1$.
Solving the linear equations (\ref{eq:rand-shifted}) will give us the exponents
of the Laurent polynomials. Note that $G$ obtained in this way is a reduced \gb.

One can see that a binomial variety is the closure of a group and furthermore,
by Proposition \ref{structurebinomialvariety}, it can be decomposed into toric
and shifted toric varieties as its irreducible components. A natural question is
whether this property holds for every variety that is the closure of a group.
The answer to this question is positive and indeed such varieties are precisely
binomial varieties. This is explicitly formulated in a remark in \cite[after
Proposition 2.3]{eisenbud-sturmfels-binomials}.

\begin{proposition}\label{prop:groupstructure}
  Let $W$ be a subgroup of $\Cstarn$.  Then $\overline W$ is a
  binomial variety.
\end{proposition}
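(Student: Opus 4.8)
The plan is to show that $\overline{W}$, the Zariski closure of a subgroup $W \subseteq \Cstarn$, is the zero set of a collection of Laurent binomials of the form $X^\gamma - 1$ with $\gamma \in \ZZ^n$, since this is exactly the defining condition of a binomial variety (together with the tautological $\overline W = \overline{\overline W \cap \Cstarn}$, which holds because $W \subseteq \Cstarn$ is dense in $\overline W$ by construction). First I would pass to the \emph{character lattice} viewpoint. Consider the group $M$ of Laurent monomials $X^\gamma$, $\gamma \in \ZZ^n$, each regarded as a regular function $\Cstarn \to \Cstar$; this is a free abelian group isomorphic to $\ZZ^n$. Define
\begin{equation}
  L = \{\, \gamma \in \ZZ^n \mid X^\gamma(w) = 1 \text{ for all } w \in W \,\},
\end{equation}
the sublattice of characters that are identically $1$ on $W$. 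Since $\ZZ^n$ is Noetherian as a $\ZZ$-module, $L$ is finitely generated, say $L = \langle \gamma_1, \dots, \gamma_s \rangle$.

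The core claim is then: $\overline W = \{\, z \in \Cstarn \mid X^{\gamma_1}(z) = \dots = X^{\gamma_s}(z) = 1 \,\}$ (intersected with $\Cstarn$, and taking the Zariski closure in $\CCn$ — but the right-hand side is already closed in $\Cstarn$, and one checks it is a subtorus so its closure in $\CCn$ is a toric variety, or more simply one clears denominators to get honest polynomial binomials $X^\alpha - X^\beta$). The inclusion $\subseteq$ is immediate: every $X^{\gamma_i}$ vanishes (equals $1$) on $W$ by definition of $L$, and these are closed conditions, so they hold on $\overline W$. The reverse inclusion $\supseteq$ is the substantive point. Here I would argue as follows. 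Let $H = \{\, z \in \Cstarn \mid X^{\gamma_i}(z) = 1,\ i = 1,\dots,s \,\}$. This $H$ is an algebraic subgroup of $\Cstarn$ containing $W$, so $\overline W \subseteq H$ and $H$ is a closed subgroup. It remains to see $H \subseteq \overline W$, i.e.\ that $W$ is dense in $H$. Since $H$ is defined by the \emph{full} lattice $L$ of characters trivial on $W$, any regular function vanishing on $W$ — equivalently, any element of the ideal $I(W)$ — must vanish on $H$; concretely, one uses that the coordinate ring of a subgroup of $\Cstarn$ is spanned by characters (monomials), so $I(W) \cap \CC[M]$ is spanned by differences $X^\mu - X^\nu$ with $\mu - \nu \in L$, hence all such differences vanish on $H$. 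Because $\CC[\Cstarn] = \CC[M]$ is spanned by characters, $I(W) = I(H)$, and therefore $\overline W = V(I(W)) = V(I(H)) = \overline H = H$.

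The last step — that the ideal of functions vanishing on a subgroup of $\Cstarn$ is generated (inside $\CC[x_1^{\pm 1}, \dots, x_n^{\pm 1}]$) by character differences $X^\mu - X^\nu$ — is the technical heart, and I expect it to be the main obstacle. One clean way to see it: the quotient $\Cstarn / \overline W$ is again a torus $\Cstarm$ (a connected component argument may be needed if $W$ is not connected — but note $\overline W$, being a closed subgroup of a torus, has finitely many components, each a coset of the identity component, which is itself a torus, so the character-spanning property still holds by Maschke-type averaging over the finite component group), and pulling back the coordinate characters of this quotient torus gives the $\gamma_i$; the kernel of $\CC[M] \twoheadrightarrow \CC[\overline W]$ is then the $\CC$-span of $\{X^\mu - X^\nu : \mu \equiv \nu \bmod L\}$ by a direct computation with the monomial basis. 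Finally, to land literally in the definition of ``binomial variety'' from the excerpt, I would normalize: each relation $X^\mu - X^\nu$ with $\mu - \nu = \gamma \in L$ is, on $\Cstarn$, equivalent to $X^\gamma - 1 = 0$, giving the required Laurent binomials $X^\gamma - 1 \in \laurentseries$; and $\overline W = \overline{\,\overline W \cap \Cstarn\,}$ holds since $W$ is dense in $\overline W$ and $W \subseteq \Cstarn$. This exhibits $\overline W$ as a binomial variety, completing the proof. (The cited remark after Proposition 2.3 in \cite{eisenbud-sturmfels-binomials} gives essentially this argument, so one could alternatively just invoke it.)
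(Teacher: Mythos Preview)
Your proposal is correct and follows essentially the same route as the paper: both identify the lattice $L$ of exponent vectors $\gamma$ with $X^\gamma\equiv 1$ on $W$ and show that $I(W)$ in the Laurent ring is generated by the binomials $X^\gamma-1$ for $\gamma\in L$, using that $(1,\dots,1)\in W$. The only difference is presentational: the paper invokes Eisenbud--Sturmfels Proposition~2.3(a) and Theorem~2.1(b) as black boxes to obtain $I(W)=I(\rho)=\langle X^{m_i}-\rho(m_i)\rangle$ with $\rho(m_i)=1$, whereas you sketch the underlying character-theoretic argument (monomial spanning of $\CC[X^{\pm1}]$, hence $I(W)$ spanned by $X^\mu-X^\nu$ with $\mu-\nu\in L$) before noting the same citation would suffice.
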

\begin{proof}
  By definition of binomial variety, we have to prove that
  $\overline W$ is the zero set of binomials of the form
  $X^\alpha-X^\beta$ and
  $\overline{\left(\overline{W}\right) ^*}=\overline W$.  The equality
  $\overline{\left(\overline{W}\right) ^*}=\overline W$ directly comes
  from the definition of Zariski closure.

  For proving that the generators of $I(\overline W)$ have the desired
  form, we use the notations of
  \cite{eisenbud-sturmfels-binomials}. By Proposition 2.3(a) in the
  latter reference, $\CC[X^{\pm 1}]I(W)=I(\rho)$, for some partial
  character $\rho\in\text{Hom}(\ZZ^{n},\CC^{*})$. By Theorem 2.1(b) in
  the same reference,
  $I(\rho)=\langle x^{m_{1}}-\rho(m_{1}),\ldots,
  x^{m_{r}}-\rho(m_{r})\rangle$ where $m_{1},\ldots,m_{r} \in\ZZ^n$ is
  a basis of $L_{\rho}$. As $W$ is a group, $(1,\dots,1)\in W$; hence
  $\rho(m_{1})=\ldots=\rho(m_{r})=1$ and therefore $I(W)$ is generated
  by binomials of the form $X^{\alpha}-X^{\beta}$ where
  $\alpha,\beta \in \NN$. Since $\overline W = V(I(W))$, we have
  proved the proposition.
\end{proof}

Proposition \ref{prop:groupstructure} can also be proved over the positive real
numbers by considering the logarithm map on $\Rplusn$ acting coordinate-wise.
The image of this map forms a linear space. A basis of this linear space
provides a parametrization of a group.

From a computational point of view, a variety $V=V(I)$ is usually given by a set
of generators of $I$ and we would like to derive information about toricity,
binomiality or coset property of $V$ by computations over the generators of $I$.
This can be done via \gbs. Assume that $G$ is a \gb of $I$, hence $V=V(G)$, and
$\Vstar \ne \emptyset$. If all elements of $G$ are of the form
$X^\alpha-X^\beta$, then $\Vstar$ is a subgroup of $\Cstarn$. If all elements of
$G$ are of the form $c_\alpha X^\alpha-c_\beta X^\beta$ where $c_\alpha\ne 0$ and
$c_\beta \ne 0$, then $\Vstar$ is a coset of a subgroup of $\Cstarn$. Note that the
converse of the above does not hold. This is because $\overline \Vstar$ and $V$
may not be equal and therefore $I(\overline\Vstar)$ and $I(V)$ may not be equal,
which means that a \gb of $I(V)$ does not give information about group or coset
structure of $\Vstar$. In case $V$ is irreducible, we have that
$\overline \Vstar=V$, e.g., when $V$ is toric or shifted toric. In order to
solve this problem one needs to saturate $I$ with the multiplication of the
variables and then consider the radical of this saturation. Saturation removes
the points that are in $V$ but not in $\overline\Vstar$. The following
proposition states this precisely and is the essence of this section for
computations over complex numbers.

\begin{proposition}\label{prop:group-coset-iff}
  Let $I$ be an ideal in $\QQ[x_1,\dots,x_n]$, let $V:=V(I) \subseteq \CCn$ be the variety of
  $I$, and let $G \subseteq \QQ[x_1,\dots,x_n]$ be a reduced \gb of the radical of
  $I:\ideal{x_1\dots x_n}^\infty$. Then $\Vstar=\emptyset$ if and only if
  $G=\{1\}$. If $\Vstar\neq\emptyset$, then the following hold:
  \begin{enumerate}[(i)]
  \item $\Vstar$ is a subgroup of $\Cstarn$ if and only if all
    elements of $G$ are of the form $X^\alpha-X^\beta$.
  \item $\Vstar$ is a coset of a subgroup of $\Cstarn$ if and only if
    all elements of $G$ are of the form
    $c_\alpha X^\alpha-c_\beta X^\beta$ where $c_\alpha\ne 0$ and
    $c_\beta \ne 0$.
  \end{enumerate}
\end{proposition}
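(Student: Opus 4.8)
The plan is to reduce everything to the saturation and radical operations, so that the Gröbner basis $G$ describes exactly the ideal $I(\overline{V^*})$. First I would observe that $V^* = V \cap (\CC^*)^n$ is precisely the set of points of $V$ where no coordinate vanishes, and that $V(I : \langle x_1\cdots x_n\rangle^\infty)$ is the Zariski closure $\overline{V^*}$: the saturation removes exactly the components of $V$ contained in some coordinate hyperplane, while keeping the closure of the part meeting the torus. Passing to the radical changes neither the variety nor its closure, so $V(\sqrt{I : \langle x_1\cdots x_n\rangle^\infty}) = \overline{V^*}$, and since this ideal is radical it equals $I(\overline{V^*})$ by the Nullstellensatz. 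Hence $G$ is the reduced \gb of $I(\overline{V^*})$. The statement $V^* = \emptyset \iff G = \{1\}$ is then immediate: $V^* = \emptyset$ iff $\overline{V^*} = \emptyset$ iff $I(\overline{V^*}) = \langle 1\rangle$, whose reduced \gb is $\{1\}$.

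For the two equivalences, assume $V^* \neq \emptyset$. The key point is that $G$ is the reduced \gb of $I(\overline{V^*})$, not of $I(V)$, which is exactly the obstruction mentioned in the text before the proposition; once this is established the proof is essentially a translation of the binomial/coset dictionary. For (i), if all elements of $G$ have the form $X^\alpha - X^\beta$, then $\overline{V^*}$ is cut out by these binomials, and after saturating (already done) and dehomogenizing to Laurent binomials $X^{\alpha-\beta} - 1$, the point $(1,\dots,1)$ lies in $\overline{V^*}$ and $V^*$ is closed under coordinate-wise multiplication and inversion — i.e., $V^*$ is a subgroup of $(\CC^*)^n$, just as in the discussion following Proposition~\ref{idealtoricvariety}. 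Conversely, if $V^*$ is a subgroup, then by Proposition~\ref{prop:groupstructure} $\overline{V^*}$ is a binomial variety, so $I(\overline{V^*})$ is generated by binomials $X^\alpha - X^\beta$; since reducedness of a \gb does not introduce new terms, and reduction of a binomial against binomials of this shape again yields a binomial of this shape (the leading and trailing monomials can only be replaced by monomials, never combined into a trinomial), the reduced \gb $G$ consists only of such binomials. This last claim about stability of the binomial shape under Buchberger reduction is the one place requiring a small argument: S-polynomials of two binomials are binomials (or zero), and reducing a binomial by a binomial gives a binomial or zero, so Buchberger's algorithm preserves the property of all elements being binomials of the form $X^\alpha - X^\beta$; interreduction likewise.

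For (ii), I would argue by the shift trick already used in the proof of Proposition~\ref{prop:gb-shiftedtoric}. If $G$ consists of binomials $c_\alpha X^\alpha - c_\beta X^\beta$ with $c_\alpha, c_\beta \neq 0$, pick any $g \in V^*$ (nonempty by hypothesis); then substituting $X \mapsto gX$ sends each such binomial to another binomial $c_\alpha g^\alpha X^\alpha - c_\beta g^\beta X^\beta$, and a routine check shows one may rescale so that the coefficients cancel — more precisely, $g^{-1} V^*$ is the zero set of the transformed binomials and the evaluation identity forces these to be (up to units) of the pure form $X^{\alpha'} - X^{\beta'}$, whence $g^{-1}V^*$ is a group by part (i), so $V^*$ is a coset. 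Conversely, if $V^*$ is a coset $gW$ with $W$ a subgroup, then $g^{-1}\overline{V^*} = \overline{W}$ is binomial with reduced \gb $G'$ of the form $X^\alpha - X^\beta$ by (i); undoing the substitution $X \mapsto g^{-1}X$ turns each such element into $g^{-\alpha}X^\alpha - g^{-\beta}X^\beta$, a binomial with two nonzero coefficients, and this transformation of a reduced \gb is again a reduced \gb (monomial leading terms are preserved, only coefficients change), giving $G$ the asserted shape. The main obstacle throughout is the first paragraph — correctly identifying $G$ as the \gb of $I(\overline{V^*})$ rather than of $I(V)$, and carefully justifying the saturation-equals-closure-of-torus-part step; after that, (i) and (ii) follow from the binomial structure theorem, Proposition~\ref{prop:groupstructure}, and the elementary observation that Buchberger reduction and the substitution $X \mapsto gX$ both preserve the relevant binomial shapes.
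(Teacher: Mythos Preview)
Your proposal is correct and follows essentially the same route as the paper: identify $G$ as the reduced \gb of $I(\overline{V^*})$ via the saturation--Nullstellensatz identity $I(\overline{V^*}) = \sqrt{I:\langle x_1\cdots x_n\rangle^\infty}$, then invoke Proposition~\ref{prop:groupstructure} for the group direction and the shift $X\mapsto gX$ for the coset direction. You are more explicit than the paper in two places---the Buchberger-stability of the binomial shape $X^\alpha-X^\beta$, and the actual mechanics of part~(ii), which the paper dismisses as ``analogous''---but these are elaborations of the same argument, not a different one.
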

\begin{proof}
  In order to prove the proposition we use \cite[Chapter 4, Theorem 10
  (iii)]{cox2015ideals}, which states that over an algebraically closed field,
  we have that $\overline{V(J_1)\setminus V(J_2)} = V(J_1:J_2^\infty)$ for
  ideals $J_1$ and $J_2$. Set $J_1=I$ and $J_2=\ideal{x_1\dots x_n}$. Since
  $\Vstar= V(I) \setminus V(x_1\dots x_n)$, and $V$ is a variety over $\CC$
  which is algebraically closed, we have that
  $ \overline \Vstar = V(I:\ideal{x_1\dots x_n}^\infty)$. Then
  \begin{equation}\label{eq:ideal-saturation-var}
    I(\overline \Vstar) = I(V(I:\ideal{x_1\dots x_n}^\infty)) =
    \sqrt{I:\ideal{x_1\dots x_n}^\infty}.
  \end{equation}
  $\Vstar = \emptyset$ if and only if
  $V(I) \setminus V(x_1\dots x_n) = \emptyset$, which is the case if
  and only if $V(I) \subseteq V(x_1\dots x_n)$. This happens if and
  only if $I(V(I)) = I(V(x_1\dots x_n))$, if and only if
  $\ideal{x_1,\dots,x_n} \subseteq \sqrt{I}$, which is the case if and
  only if some product of the variables is in $I$, or equivalently
  $I:\ideal{x_1\dots x_n}^\infty = \ideal{1}$, i.e., $G={1}$.

  For proving (i), let $\Vstar$ be a group. By Proposition
  \ref{prop:groupstructure}, we have that all the elements of every
  \gb of $I(\overline \Vstar)$ are of the form $X^\alpha -
  X^\beta$. But according to (\ref{eq:ideal-saturation-var})
  and the assumption that $G$ is a \gb of
  $\sqrt{I:\ideal{x_1\dots x_n}^\infty}$, this condition holds.

  For the converse, let the elements of $G$ have the desired form. Then
  $V(\sqrt{I:\ideal{x_1\dots x_n}^\infty})^*$ is obviously a group. But
  $V(\sqrt{I:\ideal{x_1\dots x_n}^\infty})^*=V(I:\ideal{x_1\dots x_n}^\infty)^*$
  and therefore the latter is a group. Now using \cite[Chapter 4, Theorem 10
  (iii)]{cox2015ideals}, we have that
  $V(I:\ideal{x_1\dots x_n}^\infty)^* = (\overline{V(I)\setminus V(x_1\dots
    x_n)})^*$. One can easily check that the latter is equal to $V(I)^*$. Hence
  $\Vstar=V(I)^*$ is a group and we are done.

  The proof of part (ii) is analogous to that of part (i) above.
\end{proof}

For the rest of this section, we present the monomial parametrization of a torus
and state propositions that allow one to find the cosets of a torus as
irreducible components of a binomial variety using roots of unity. Readers
primarily interested in our algorithms in Section~\ref{sec:computations} can
safely skip this part.

We start with introducing the monomial parametrization of shifted toric
varieties. Let $T\subseteq \Kstarn$ be a torus of dimension $m$, hence
$T \simeq \Kstarm$, and let $x_0\in \Kstarn$. Following the monomial
parametrization of a torus given in \cite[Corollary
2.6]{eisenbud-sturmfels-binomials}, one can see that the coset $x_0T$ of $T$ can
be seen as the image of the following monomial map, which is the monomial
parametrization of $x_0T$.
\begin{displaymath}
  \textstyle
  \varphi_{(x_0,A)}: \Kstarm\to \Kstarn,\quad
  \varphi_{(x_0,A)}(t_1,\ldots,t_m)=\left((x_0)_1\prod\limits_{i=1}^mt_i^{A_{i1}},\ldots,
    (x_0)_n\prod\limits_{i=1}^mt_i^{A_{in}}\right),
\end{displaymath}
where $A\in\ZZ^{d\times n}$is a rank $m$ matrix. Note that while the matrix $A$
is not unique, it only depends on $T$ and not on $x_0$. In particular, $T$ is
its own coset with respect to the unity $\bold 1 := (1,\ldots,1) \in \Kstarn$.
Note that if $B\in\ZZ^{d\times n}$ is another matrix such that $T$ equals the
image of $\varphi_{(\bold 1,B)}$, then $B$ corresponds to a re-parametrization of
$T$.

\begin{example}
  $V(xy-1)\cap\left(\Cstar\right)^2$ can be seen as the image of
  $\varphi_{(\bold 1,(1,-1))} :\Cstar \to (\Cstar)^2$ with
  $\varphi_{(\bold 1,(1,-1))} (t)= (t,t^{-1})$ or as the image of
  $\varphi_{(\bold 1,(-1,1))}:\Cstar \to (\Cstar)^2 $ with
  $\varphi_{(\bold 1,(-1,1))} (t)= (t^{-1},t)$.
\end{example}

\begin{proposition}\label{lemma:Roots1}
  If $G\subseteq \Cstarn$ is a group and reducible into $r \in \NN$
  cosets of a torus $T$, then there exist
  $y_1,\ldots, y_r \in \Cstarn$ whose coordinates are roots of unity
  and $G = \bigcup_{i=1}^r y_iT$ is the irreducible
  decomposition of $G$.
\end{proposition}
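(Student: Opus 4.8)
The plan is to exploit the structure theorem for binomial varieties (Theorem~\ref{structurebinomialvariety}) together with Proposition~\ref{prop:groupstructure}. Since $G$ is a subgroup of $\Cstarn$, Proposition~\ref{prop:groupstructure} tells us that $\overline G$ is a binomial variety. By Theorem~\ref{structurebinomialvariety}, its irreducible components are exactly one toric component $\overline T$, with $T\subseteq\Cstarn$ a torus, and components $\overline{x_1 T},\dots,\overline{x_r T}$ for suitable $x_1,\dots,x_r\in\Cstarn$. Intersecting with $\Cstarn$ and using $\overline G\cap\Cstarn = G$ (which holds because $G$, being a group, equals $(\overline G)^*$; this is the same observation used in the proof of Proposition~\ref{prop:groupstructure}), we obtain $G = T\cup x_1T\cup\dots\cup x_rT$, and this is the irreducible decomposition of $G$. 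So the torus $T$ in the statement is precisely this toric component, and $r$ cosets appear as required; what remains is to replace the shift vectors $x_i$ by vectors $y_i$ whose coordinates are roots of unity.

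For that, first I would note that $x_iT$ and $y_iT$ coincide as cosets of $T$ whenever $x_i^{-1}y_i\in T$, so it suffices to find, in each coset $x_iT$, a representative all of whose coordinates are roots of unity. The key structural input is that $G$ is a \emph{finite} union of cosets of $T$: this forces the quotient group $G/T$ to be finite, say of order $r+1$, hence every element of $G/T$ has finite order. Concretely, for each $i$ there is a positive integer $k_i$ with $(x_iT)^{k_i}=T$, i.e.\ $x_i^{k_i}\in T$. Let $N$ be a common multiple of the $k_i$, so $x_i^N\in T$ for all $i$. Now pick any $g_i\in x_iT$; since $g_i^N\in T$ and $T$ is a divisible group (being isomorphic to $\Cstarm$, as every element of $\Cstar$ has $N$-th roots), there is $t_i\in T$ with $t_i^N = g_i^N$. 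Then $y_i := g_i t_i^{-1}$ satisfies $y_i\in x_iT$ and $y_i^N = g_i^N t_i^{-N} = \bold 1$, so every coordinate of $y_i$ is an $N$-th root of unity. Replacing $x_i$ by $y_i$ in the decomposition above gives $G=\bigcup_{i=1}^r y_iT$ with the $y_i$ as claimed (and one may take $y_0=\bold 1$ for the torus component itself).

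The main obstacle is making the step ``$G$ is a finite union of cosets of $T$, hence $G/T$ is finite, hence torsion'' fully rigorous, and in particular justifying that the finitely many irreducible components from Theorem~\ref{structurebinomialvariety} really correspond to finitely many cosets with well-defined finite order in $G/T$ — this uses that distinct cosets of $T$ are disjoint and that an irreducible component cannot be a proper union of translates of $T$. Once $G/T$ is known to be finite, the divisibility of $\Cstarm$ and the extraction of roots of unity as representatives is routine. A small additional point worth stating explicitly is that the matrix $A$ parametrizing $T$ is common to all cosets (as recorded after the monomial parametrization above), so the $y_iT$ are genuinely translates of the \emph{same} torus; this is automatic here since all components share the toric component $\overline T$.
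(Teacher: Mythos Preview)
Your argument is correct, and in fact cleaner than the paper's. The paper argues by contradiction via the monomial parametrization $\varphi_{(x_0,A)}$: assuming some coset $\tilde y_iT$ admits no representative with root-of-unity coordinates, it shows that the cosets $T,\tilde y_i^NT,\tilde y_i^{2N}T,\ldots$ are pairwise distinct for a suitable $N$, contradicting the finiteness of the decomposition. You instead go directly through group theory: since $1\in G$ and $G$ is a union of cosets of $T$, one of those cosets is $T$ itself, so $T\le G$ with $[G:T]=r$ finite; hence every $x_iT$ has finite order in $G/T$, and divisibility of $T\cong(\Cstar)^m$ lets you adjust each representative to have $y_i^N=\bold 1$. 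This is a more conceptual route and avoids the parametrization machinery entirely.

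Two small remarks. First, your detour through Proposition~\ref{prop:groupstructure} and Theorem~\ref{structurebinomialvariety} is unnecessary: the decomposition $G=\bigcup_{i=1}^r x_iT$ is \emph{given} by hypothesis, so you can start immediately at the quotient argument. Second, the ``main obstacle'' you flag is not really an obstacle: once $T\subseteq G$ (which follows from $1\in G$ as above), the cosets of $T$ partition $G$ by elementary group theory, so $|G/T|=r$ is immediate and every element of $G/T$ has order dividing $r$. You may simply take $N=r$.
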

\begin{proof}
  If $G=T$, then it is its own coset with respect to $\bold 1$ and we are done.
  Otherwise, let $S_i = \tilde y_iT$ be a proper coset of $T$ and suppose that
  there is no $\xi\in\Cstarn$ such that the coordinates of $\xi$ are roots of
  unity and $S_i=\xi T$. This means that for all such $\xi$ the image of
  $\varphi_{(\tilde y_i,T)}$ is different from the image of
  $\varphi_{(\xi, T)}$. Hence there exists $t\in\Cstarm$ such that for all
  $s\in\Cstarm$ one has $\varphi_{(\tilde y_i,T)}(t)\ne \varphi_{(\xi,T)}(s)$.
  In other words, there exists $t\in\Cstarm$ such that for all $s \in \Cstarm$
  one has $\tilde y_i t^{A} \ne \xi s^{A}$ for some $i\in \{1,\ldots,n\}$. As
  the coordinates of $\xi$ are roots of unity, there is a natural number $N$
  such that $\xi^N = \bold 1$. Therefore, for all $s\in\Cstarm$ one has
  $\tilde y_i^{N} t^{NA} \ne s^{NA}$ for some $i\in \{1,\ldots,n\}$. As the
  image of $\varphi$ is invariant under $A \mapsto NA$, the cosets
  $\tilde y_i^NT$ and $T$ are distinct. By using a similar argument and
  induction, one can prove that $T, \tilde y_i^NT$, $\tilde y_i^{2N}T, \ldots$
  are distinct. As $G$ is closed under multiplication, it contains all these
  cosets. However, this contradicts the assumption that $G$ is reducible into a
  torus and a finite number of its cosets.
\end{proof}

\begin{remark}
  Let $S_1$ and $S_2$ be two cosets of a torus $T
  \subseteq\Cstarn$. The coset $S_1$ is called the complex conjugate
  of the coset $S_2$, written $S_1=S_2^+$, when every point of $S_1$
  is the complex conjugate of a point of $S_2$ and every point of
  $S_2$ is the complex conjugate of a point of $S_1$. As the complex
  conjugate is an automorphism of $\Cstarm$, $S_1$ is the complex
  conjugate of $S_2$ if and only if $S_1$ contains the complex
  conjugate of some point of $S_2$. A pair $S_1,S_2$ is called a pair
  of complex conjugates if $S_1=S_2^+$.  If $G\subseteq \Cstarn$ is
  reducible into a finite number of cosets of a torus then they come
  in pairs of complex conjugates.  To see this, denote the toric
  component of $G$ by $T$. If $G=T$, then clearly $G=G^+$. Suppose
  that $G$ contains a proper coset $S$ of $T$. By
  Lemma~\ref{lemma:Roots1} there is a point $\xi\in S$ whose
  coordinates are roots of unity. Then $\xi\xi^+=\bold 1$. As $G$ is a
  group, $\xi^+$ is an element of $G$. As $S=\xi T$ and
  $\xi^+T=(\xi T)^+=S^+$, we conclude that $S^+$ is contained in $G$.
\end{remark}

\begin{proposition}\label{lemma:Roots2}
  Let $P=(X/\xi)^u- (X/\xi)^v\in\CC[X]$ be a non-zero irreducible polynomial,
  where the coordinates of $\xi\in\Cstarn$ are roots of unity. Then for all $i$
  in $\NN$ there exist $g_i$ in $\CC[X]$ and $\alpha_i$, $\beta_i$ in $\NN^n$
  such that $g_iP=X^{i\alpha_i}-X^{i\beta_i}$.
\end{proposition}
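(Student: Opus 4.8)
The plan is to produce, for each $i$, an explicit polynomial multiple of $P$ having the prescribed form; the construction will in fact not use irreducibility of $P$. First I would normalize: since $P\in\CC[X]$ and $P\neq 0$, the two Laurent monomials $(X/\xi)^u$ and $(X/\xi)^v$ have distinct exponent vectors $u\ne v$ and cannot cancel, so each of them must already be a genuine monomial of $\CC[X]$; that is, $u,v\in\NN^n$. I would then abbreviate $m_1:=(X/\xi)^u=\xi^{-u}X^u$ and $m_2:=(X/\xi)^v=\xi^{-v}X^v$, both lying in $\CC[X]$, so that $P=m_1-m_2$, where $\xi^{-u}:=\prod_j\xi_j^{-u_j}\in\Cstar$.

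The second step uses the roots-of-unity hypothesis. Pick a positive integer $N$ with $\xi_j^{N}=1$ for every $j$. Then $\xi^{Nu}=\xi^{Nv}=1$, hence $m_1^{N}=X^{Nu}$ and $m_2^{N}=X^{Nv}$. Applying the elementary factorization $a^{N}-b^{N}=(a-b)\sum_{j=0}^{N-1}a^{N-1-j}b^{j}$ with $a=m_1$ and $b=m_2$ shows that $P$ divides $X^{Nu}-X^{Nv}$ in $\CC[X]$; the cofactor $q:=\sum_{j=0}^{N-1}m_1^{N-1-j}m_2^{j}$ genuinely belongs to $\CC[X]$ because every monomial exponent $(N-1-j)u+jv$ occurring in it lies in $\NN^n$.

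For the final step, fix an arbitrary $i\in\NN$ and apply the same factorization once more, now with $a=X^{Nu}$ and $b=X^{Nv}$: this gives that $X^{Nu}-X^{Nv}$, and hence $P$, divides $(X^{Nu})^{i}-(X^{Nv})^{i}=X^{iNu}-X^{iNv}$. Setting $\alpha_i:=Nu\in\NN^n$, $\beta_i:=Nv\in\NN^n$, and
\[
  g_i:=q\cdot\sum_{k=0}^{i-1}(X^{Nu})^{i-1-k}(X^{Nv})^{k}\in\CC[X]
\]
(with $g_0:=0$ for the degenerate case), one obtains $g_iP=X^{iNu}-X^{iNv}=X^{i\alpha_i}-X^{i\beta_i}$, as required.

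I do not expect a genuine obstacle. The only places asking for a little care are the opening reduction to $u,v\in\NN^n$, which is exactly what keeps $q$ and the $g_i$ inside $\CC[X]$ rather than in the Laurent ring, and the trivial case $i=0$, where both sides are $0$. It is worth noting that $\alpha_i$ and $\beta_i$ can be taken independent of $i$, so the real content is just the single divisibility $P\mid X^{Nu}-X^{Nv}$ together with the fact that $A-B$ divides $A^{i}-B^{i}$.
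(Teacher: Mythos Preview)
Your proof is correct and follows essentially the same idea as the paper's: both show that $P\mid X^{Mu}-X^{Mv}$ for an integer $M$ tied to the orders of the roots of unity involved, and then bootstrap to $X^{iMu}-X^{iMv}$. The paper packages the first divisibility as the product $\prod_{k=1}^{m}(X^{u}-\gamma^{k}X^{v})=X^{mu}-X^{mv}$ over the cyclic group generated by $\gamma=\xi^{u-v}$ (with $m$ its order), whereas you use the equivalent telescoping identity $(a-b)\mid a^{N}-b^{N}$ with $a=\xi^{-u}X^{u}$, $b=\xi^{-v}X^{v}$; your side remarks that irreducibility is never used and that one must first check $u,v\in\NN^{n}$ to keep the cofactors in $\CC[X]$ are both accurate.
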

\begin{proof}
  Note that $P=(X^{u}-\xi^{u-v}X^v)/\xi^u$.  As the coordinates of
  $\xi$ are roots of unity, $\gamma=\xi^{u-v}$ is also a root of
  unity. Let $m$ be the smallest positive integer such that
  $\gamma^m=1$. As
  $\mathfrak U_m =\{\gamma,\gamma^2,\ldots,\gamma^m\}$ is a group of
  roots of unity of order $m$, it is clear that
\[
(X^{u}-\gamma X^{v})(X^{u}-\gamma^2 X^{v})\cdots(X^{u}-\gamma^m
X^{v})=(X^{mu}-X^{mv}).
\]
Hence one can take
\[
  g_1=\xi^u(X^{u}-\gamma^2 X^{v})(X^{u}-\gamma^3
  X^{v})\dots (X^{u}-\gamma^m X^{v}).
\]
Substituting $\mathfrak U_m$ with the group $\mathfrak U_{mi}$, i.e.,
the group of roots of unity of order $mi$, and following the steps for
constructing $g_1$ accordingly, one can construct $g_i$ for all $i \geq 2$.
\end{proof}

\section{Algorithmic Classification of Biomodels}\label{sec:computations}
We want to apply our concept of shifted toricity to biomodels focusing on the
BioModels\footnote{\url{https://www.ebi.ac.uk/biomodels/}} repository of
mathematical models of biological and biomedical systems \cite{BioModels2015a}.
The BioModels repository uses the \emph{Systems Biology Markup Language (SBML)}
\cite{10.1093/bioinformatics/btg015,10.1042/bst0311472}. SBML is a
representation format, based on XML, for communicating and storing computational
models of biological processes. It is a free and open standard with widespread
software support and a community of users and developers. SBML models have been
typically created in the context of numerical computations or simulations and
must be processed carefully with symbolic computation. For instance, numerical
values, like reaction rate constants, contained in the models are often
represented as truncated fixed point floats, and the available SBML parsers
possibly introduce further rounding errors when implicitly performing
substitutions with those values. Such issues are addressed by
ODEbase\footnote{\url{http://odebase.cs.uni-bonn.de/}}, which provides
pre-processed versions of BioModels for use in symbolic computation. We consider
here \emph{all} models from ODEbase where the vector field of the ODE is
polynomial over $\QQ$ after application of certain SBML-specific rules and
substitution of parameter values. This amounts to a total of 129 models
considered in this article.

Following a convention often used in publications on chemical reaction network
theory in the context of symbolic computation, ODEbase replaces names of species
concentrations by more abstract names $x_i$ using numbers as indices. With the
application of SBML rules some of those $x_i$ vanish in the ODEbase toolchain.
We therefore consider, more abstractly, \emph{ordered sets $X$ of variables},
tacitly assuming that the order establishes a mapping between indeterminates in
$\QQ[X]$ and coordinates in $K^{|X|}$. As a matter of fact, the variables will
also vanish during our own algorithms discussed throughout this section. The
following example illustrates this.

\begin{example}[\texttt{BIOMD0000000198}]\label{ex:b198}
  Consider the following system in $\QQ[x_{2}, \dots, x_{10}]$:
  \begin{multline*}
    F=\{
    -350 x_{2}+800 x_{3}, 350 x_{2}-1650 x_{3}, 4250 x_{3}-100 x_{4}+x_{5}, 100
    x_{4}-x_{5},\\
    -350 x_{6}+800 x_{7}, 350 x_{6}-1650 x_{7}, 1700 x_{7}-5
    x_{8}+50 x_{9}, x_{10}+125 x_{8}-1330 x_{9}, -x_{10}+80 x_{9} \}.
  \end{multline*}
  From its Gröbner basis
  $G=\{x_{2}, x_{6}, x_{5}-100 x_{4}, 8 x_{8}-x_{10}, x_{7}, x_{3}, -x_{10}+80
  x_{9}\}$ we can read off that for every point in $V_\CC(F)\subseteq\CC^9$,
  e.g., the $x_2$-coordinate must be $0$. It follows that
  $V_\CC(F)^*=\emptyset$. Geometrically, $V_\CC(F)$ lives in $\CC^5$ with
  coordinates $x_4$, $x_5$, $x_8$, $x_9$, $x_{10}$. Thinking about toricity as a
  geometric notion, it makes sense to study the variety as an object in that
  lower dimensional space. Hence, consider
  \begin{displaymath}
    \hat G=G\setminus\{x_{2},
    x_{3}, x_{6}, x_{7}\}=\{ x_{5}-100 x_{4}, 8 x_{8}-x_{10},
    -x_{10}+80 x_{9}\}\subseteq\QQ[x_4,x_5,x_8,x_9,x_{10}]. 
  \end{displaymath}
  It turns out that $V_\CC(\hat G)$ is shifted toric in $\CC^5$.
\end{example}

\begin{definition}[Compatible and canonical projection spaces]\label{def:umax}
  Let $K \in \{\RR,\CC\}$, let $X$ be an ordered set of variables, and let
  $V_K \neq \emptyset$ be a variety in $K^{|X|}$. We say that a subset
  $\hat X \subseteq X$ \emph{describes a compatible projection space
    $K^{|\hat X|} \subseteq K^{|X|}$ with respect to $V_K$} if for the projection
  $\pi_{X \setminus \hat{X}}: K^{|X|} \to K^{|X \setminus \hat X|}$ into the complement of
  $K^{|\hat X|}$ we have $\pi_{X \setminus \hat{X}}(V_k) = \{\mathbf{0}\}$. In other
  words, for all $x \in X \setminus \hat{X}$ and $a \in V_K$ the $x$-coordinate of
  $a$ equals $0$. It is easy to see that there is a unique such $\hat X$ with
  minimum cardinality, which, we say, \emph{describes the canonical projection
    space with respect to $V_K$}.
\end{definition}

If $\hat X$ describes a canonical projection space and $\hat X \neq \emptyset$, then
$\pi_{\hat X}(V_K)^* \neq \emptyset$. On the other hand, if $V_K^*$ is empty, and
$\hat X$ describes a compatible but not canonical projection space, then still
$\pi_{\hat X}(V_k)^* = \emptyset$. In that latter case the intuition is that the
projection does not remove information from $V_K$ that is relevant for obtaining
$\pi_{\hat Y}(V_K)^*$ in the canonical projection space described by
$Y \subseteq X$. When $V_K$ is given as $V_K(F)$ for $F \subseteq \QQ[X]$, then we write
$\hat F = F \cap \QQ[\hat X]$, as we did with the Gröbner basis $G$ in
Example~\ref{ex:b198}.

The principle domain of interest for us is $\RR$-space, where, e.g.,
concentrations of species are located in the interior of the first orthant. In
the literature there has been considerable attention to $\CC$-space. We
therefore start our algorithmic considerations over $\CC$ in
Subsection~\ref{se:comp-c}, and then turn to $\RR$ in
Subsection~\ref{se:comp-r}.

In Example~\ref{ex:b198} we could conclude that $V_\CC(\hat F)^*$ is shifted
toric because $\hat F$ consists of binomials of the characteristic shape
according to Proposition~\ref{prop:group-coset-iff}, and one can easily see from
its linearity that it generates a prime ideal over $\CC$. As prime ideal
decomposition is related to polynomial factorization, decomposition or even
primeness tests over our fields $\CC$ and $\RR$ of interest are not well
supported in off-the-shelf computer algebra systems. In our algorithms we
therefore limit ourselves to the properties ``group'' and ``coset'' rather than
``toric'' and ``shifted toric''. Nevertheless, we will consider prime
decompositions over $\QQ$, which are well supported in software and provide at
least partial decompositions over $\CC$ and $\RR$. Note that for us the relevant
notion is prime decomposition in contrast to primary decomposition, as the
former corresponds to the irreducibility of the corresponding varieties.

\begin{example}[Comparison of $V_\CC^*$ with $V_\RR^*$]
  For $F_1=\langle x^2+2\rangle$, $F_2=\langle (x^2-1)(x^2+2)\rangle$ the
  following holds:
  \begin{enumerate}[(i)]
  \item $V_\CC(F_1)^*=\{i\sqrt{2},-i\sqrt{2}\}$ is a coset in $\CC^*$, because
    $(-i\sqrt{2})^{-1}V_\CC(F_1)^*=\{1, -1\}$ is a group. In contrast,
    $V_\RR(F_1)^*=\emptyset$ is not a coset in $\RR^*$.
  \item $V_\CC(F_2)^*=\{1,-1,i\sqrt{2},-i\sqrt{2}\}$ is a coset in $\CC^*$ if
    and only if it is group, due to $1\in V_\CC(F_2)^*$. This is not the case
    because it is not closed under multiplication:
    $(i\sqrt{2})(-i\sqrt{2})=2\notin V_\CC(F_2)^*$. In contrast,
    $V_\RR(F_2)^*=\{1,-1\}$ is a group.
  \end{enumerate}
\end{example}

\subsection{Classification over $\CC$}\label{se:comp-c}
Our methods over $\CC$ are, naturally, based on Gröbner bases
\cite{Buchberger:65a,Faugere:99a,Faugere:02a}, for which we rely on the
commercial computer algebra system Maple. We generally leave it to Maple to find
a good term order. Our classifications hold over any algebraically closed
extension field of the coefficient field $\QQ$, including $\CC$ as well as, e.g.,
the countable algebraic closure of $\QQ$.

With the following discussion of Algorithm~\ref{alg:pacc} we will introduce some
textbook facts from commutative algebra as lemmas, together with short proofs.
The algorithm recognizes for a given ideal basis $F$ whether $V_\CC(F)^*$ is a
coset.

\begin{algorithm}
  \caption{$\ProjectAndClassify_\CC$\label{alg:pacc}}
  \begin{algorithmic}[1]
    \REQUIRE 1.~~$X$, a finite ordered set of variables;\quad
    2.~~$F\subseteq\QQ[X]$ finite and non-empty\smallskip

    \ENSURE
    1.~$\hat{X}\subseteq X$;\quad
    2.~$\gamma\in\{\tGG, \tCC, \tOO, \tXX, \tgg, \tcc, \too,
    \txx\}$
    \smallskip
    
    $\hat{X}$ describes a compatible projection space with respect to $V_\CC(F)$.
    The letter $\gamma$ classifies $V_\CC(F)^*$ in $\hat X$-space, using upper
    case when $\hat X=X$:
    \smallskip
    
    \begin{center}
      $\tGG$/$\tgg$ -- $V_\CC(F)^*$ is a group;\quad
      $\tCC$/$\tcc$ -- $V_\CC(F)^*$ is a proper coset;\quad
      $\tOO$/$\too$ -- $V_\CC(F)^*=\emptyset$;\quad
      $\tXX$/$\txx$ else.
    \end{center}
    \smallskip

    \STATE{$G:=\GB(F)$}
    \STATE{$X':=G\cap X$}
    \STATE{$\hat X:=X\setminus X'$}
    \STATE{$\hat G:=G\setminus X'$}
    \COMMENT{$\langle\emptyset\rangle=\langle0\rangle$}
    \STATE $\tilde G:=\operatorname{Radical}(\operatorname{Saturate}(
    \hat G, \prod\hat X))$
    \COMMENT{$\tilde G$ is a Gröbner basis}
    \STATE{$\gamma:=\Classify_\CC(\hat X, \tilde G)$}
    \IF{$\hat X\neq X$}
    \STATE{convert $\gamma$ to a lower case letter}
    \ENDIF
    \RETURN{$\hat X$, $\gamma$}
  \end{algorithmic}
\end{algorithm}

In line 1 we compute a Gröbner basis $G$ of $F$. Recall from the previous
section that we generally consider reduced Gröbner bases. We may safely assume
that $\vars(G)\subseteq\vars(F)$. In line 2, the variables $X'$ occurring as elements of
$G$ are exactly those that must be zero for all points in $V_\CC(F)$. Removing
$X'$ from $X$ in line 3 yields $\hat X$ which describes a compatible but not
necessarily canonical projection space according to Definition~\ref{def:umax},
as the following example illustrates.

\begin{example}\label{ex:heu}
  Consider $F = \{x_2^2, x_1+x_2, x_2+x_3+1\}$. Then for $\hat X \subseteq X$ to
  describe the canonical projection space with respect to $V(F)$ it must not
  contain $x_2$. However, $x_2$ does not show up in the Gröbner basis
  $G = \{x_3^2+2x_3+1, x_2+x_3+1, x_1-x_3-1\}$ of $F$.
\end{example}

This idea of line 3 is to have a good heuristic method at no extra computational
cost. Removing $X'$ from $G$ in line 4 is equivalent to plugging $0$ into all
$X'$ in $G$, which in turn realizes the projection of $V_\CC(F)=V_\CC(G)$ into
$\hat X$-space. Note that we follow the convention that the empty set is a
generator of the trivial ideal \cite[Definition~1.36]{BeckerWeispfenning:93a}.
In line 5 we obtain $\tilde G$ by saturating $\hat G$ and subsequently taking
the radical. In line 6 we call Algorithm~\ref{alg:cc} in order to apply
Proposition~\ref{prop:group-coset-iff} with $I=\langle\hat G\rangle$ and $G=\tilde G$.

\begin{algorithm}
  \caption{$\Classify_\CC$\label{alg:cc}}
  \begin{algorithmic}[1]
    \REQUIRE
    1.~~$\hat X$, a finite ordered set of variables;\quad
    2.~~ $\tilde G\subseteq\QQ[\hat X]$, a Gröbner basis of a saturated radical
    ideal\smallskip
    
    \ENSURE $\gamma\in\{\tGG, \tCC, \tOO, \tXX\}$
    \smallskip
    
    \IF{$\hat X=\emptyset$ \OR $\tilde G=\{1\}$}
    \RETURN{$\tOO$}
    \ELSIF{all elements of $\tilde G$ are of the form
      $\hat X^{\alpha}-\hat X^{\beta}$ with
      $\alpha$, $\beta\in\NN^{m}$}
    \RETURN{$\tGG$}
    \ELSIF{all elements of $\tilde G$ are of the form
      $c_\alpha \hat X^{\alpha}-c_\beta \hat X^{\beta}$ with
      $\alpha$, $\beta\in\NN^{m}$, $c_\alpha\neq0$, $c_\beta\neq0$}
    \RETURN{$\tCC$}
    \ELSE
    \RETURN{$\tXX$}
    \ENDIF
  \end{algorithmic}
\end{algorithm}

In line 1 of Algorithm~\ref{alg:cc}, if $\hat X=\emptyset$, then we are in
zero-dimensional $\CC$-space and certainly $V_\CC(\hat G)^*=\emptyset$.
Otherwise $\tilde G=\{1\}$ is an equivalent criterion for $V_\CC(\hat G)^*=\emptyset$
by Proposition~\ref{prop:group-coset-iff}. From line 3 on we know that
$V_\CC(\hat G)^*\neq\emptyset$ and apply in line 3 and line 5 the criteria from
part (i) and (ii) of Proposition~\ref{prop:group-coset-iff}, respectively. In
the negative case we return $\tXX$ in line 8.

This takes us back to line 7 of Algorithm~\ref{alg:pacc}. For convenience, we
patch the classification letter $\gamma$ from upper case to lower case when proper
projection has taken place. That information could alternatively be
reconstructed by comparing $X$ with $\hat X$, which is returned in line 10 along
with $\gamma$.

\begin{example}[\texttt{BIOMD0000000519}]
  Consider $F=\{f_1,f_2,f_3\}\subseteq\QQ[X]$, where  $X=\{x_1,x_2,x_3\}$:
  \begin{align*}
    f_1 ={}& - 110569195060524661790966049 x_{1}^{2} - 110569195060524661790966049
           x_{1} x_{2}\\
         & {}- 110569195060524661790966049 x_{1} x_{3} +
           8268303407262959414915925880 x_{1},\\
    f_2 ={}&  - 39340519602534770292542037060 x_{1}^{2} -
           64716470904160708181625699581 x_{1} x_{2}\\
         &  {}- 39340519602534770292542037060 x_{1} x_{3} +
           4720862352304172435105044447200 x_{1}\\
         &  {}- 25375951301625937889083662521 x_{2}^{2} -
           25375951301625937889083662521 x_{2} x_{3}\\
         &  {}+ 1783712878395505546690039502520 x_{2},\\
    f_3 ={}& - 40542202233642354036972112493 x_{1} x_{2} -
           40542202233642354036972112493 x_{2}^{2}\\
         &  {}- 40542202233642354036972112493 x_{2} x_{3} +
           4865064268037082484436653499160 x_{2}\\
         & {}- 1101385347722460000000000000000 x_{3}.
  \end{align*}
  We obtain $\hat X=X$ and $\hat G=G=\{\hat g_1,\dots,\hat g_4\}$. For space
  reasons, we present $\hat g_1$, \dots,~$\hat g_4$ with approximate coefficients
  here:
  \begin{align*}
    \hat g_1&\approx 5.72\times 10^{41} x_{3}
            -1.05\times 10^{42} x_{2}+
            1.47\times 10^{42} x_{1},\\
    \hat g_2&\approx 3.63\times 10^{80}x_1^2-6.37\times10^{80}x_1,\\
    \hat g_3&\approx 8.89\times 10^{67}x_1x_2-2.44\times10^{69}x_1,\\
    \hat g_4&\approx 2.34\times 10^{111}x_2^2-9.39\times 10^{112}x_1-5.82\times 10^{112}x_2.
  \end{align*}
  Notice that $g_4$ is not binomial. After saturation we obtain
  $\tilde G=\{\tilde g_1,\tilde g_2,\tilde g_3\}$ with
  \begin{displaymath}
    \tilde  g_1 = \hat g_2,\quad
    \tilde g_2 = \hat g_3,\quad
    \tilde g_3 \approx 2.66\times 10^{92}x_3-1.21\times 10^{94},
  \end{displaymath}
  which is classified as $\gamma=\tCC$. Again, for $\tilde g_3$ we computed exactly
  but present here only approximate coefficients.
\end{example}

It is important to understand that, although we are using tools from ideal
theory, our results in Section~\ref{sec:preliminaries} clarify that our
classification solely depends on geometry. In particular, results are invariant
with respect to the input ideal basis $F$ in Algorithm~\ref{alg:pacc}.

In Appendix~\ref{app:cb} we discuss practical aspects of our implementation and
give in Table~\ref{tab:cbc} classification results from applying
Algorithm~\ref{alg:pacc} to the 129 models introduced at the beginning of this
section. We also address there the quality of our heuristic method for computing
the description $\hat X$ of the projection space in Algorithm~\ref{alg:pacc}.
For our discussion here we note that our algorithm terminates within a time
limit of 6 hours per model on 104 out of the 129 models. We obtain 2 $\tGG$, 20
$\tCC$, and 6 $\tcc$, which can be summarized as $V_\CC(\tilde G)^*$ forming a
coset. Furthermore we have 4 $\tOO$ and 42 $\too$, i.e.,
$V_\CC(\tilde G)^*=\emptyset$. The rest is 29 $\tXX$ and one single $\txx$. In terms of
percentages of the 104 successful computations this yields the following
picture: \begin{center}
  \begin{tikzpicture}
    \pie[color={blue!20, blue!10, red}, radius=1.5, sum=100, before
    number=\footnotesize, after number=\%] {
      26.9/coset,
      44.2/empty
    }
  \end{tikzpicture}
\end{center}

We are now going to turn to prime decompositions over $\QQ$ of the generating
ideals $F$ of our varieties $V_\CC(F)$. Recall that shifted toricity requires,
in addition to the coset structure of $V_\CC(F)^*$, irreducibility of
$V_\CC(F)$, which in turn corresponds to prime decompositions of $F$ even over
$\CC$. From that point of view, our decompositions considered here are only a
heuristic step into the right direction. On the other hand, the following
example suggests that beyond the irreducibility issue, prime decompositions over
$\QQ$ can improve our hit rate on cosets.

\begin{example}[\texttt{BIOMD0000000359}]\label{ex:b359}
  Consider $F\subseteq\QQ[X]$, where $X=\{x_1,\dots,x_7, x_9\}$:
  \begin{multline*}
    F=\{-125 x_{1} x_{2} - 125 x_{1} x_{5} - 11 x_{1} x_{7} + 19250 x_{3} +
    19250 x_{4},
    -5 x_{1} x_{2} + 20 x_{3} x_{7} + 770 x_{3},\\
    5 x_{1} x_{2} - 20 x_{3} x_{7} - 1190 x_{3},
    250 x_{1} x_{5} - 300 x_{4} x_{6} + 21000 x_{3} - 38500 x_{4} + x_{9},\\
    -2500 x_{1} x_{5} - 27 x_{5} x_{6} + 385000 x_{4} + 10 x_{7},
    -3000 x_{4} x_{6} - 27 x_{5} x_{6} + 10 x_{7} + 10 x_{9},\\
    -220 x_{1} x_{7} - 10000 x_{3} x_{7} + 27 x_{5} x_{6} - 10 x_{7},
    11 x_{1} x_{7}, 1000 x_{3} x_{7} + 300 x_{4} x_{6} - x_{9}\}.
  \end{multline*}
  Applying Algorithm~\ref{alg:pacc} to $X$ and $F$ yields
  $\hat X=\{x_{1}, x_{2}, x_{4},\dots, x_{7}\}$,
  $\hat G=\{x_{1} x_{7}, x_{4} x_{7}, x_{1} x_{5}-154 x_{4}, x_{1} x_{2}, x_{4}
  x_{6}, 27 x_{5} x_{6}-10 x_{7}, x_{2} x_{4}\}$, and the saturated radical
  basis $\tilde G=\{1\}$. The classification result is $\hat X$ together with
  $\gamma=\too$.

  The following is a prime decomposition of $F$ over $\QQ$:
  \begin{multline*}
    \mathcal{P}=
    (
    \{x_{1}, x_{3}, x_{4}, x_{9}, -27 x_{5} x_{6}+10 x_{7}\},
    \{x_{1}, x_{3}, x_{4}, x_{5}, x_{7}, x_{9}\},
    \{x_{1}, x_{3}, x_{4}, x_{6}, x_{7}, x_{9}\},\\
    \{x_{2}, x_{3}, x_{4}, x_{5}, x_{7}, x_{9}\},
    \{x_{2}, x_{3}, x_{6}, x_{7}, x_{9}, x_{1} x_{5}-154 x_{4}\}
    ).
  \end{multline*}
  Considering each prime component individually yields respective compatible
  subsets of variables
  $\hat{\mathcal{X}}=(\{x_{5}, x_{6}, x_{7}\}, \emptyset, \emptyset, \emptyset,
  \{x_{1}, x_{4}, x_{5}\})$ and Gröbner bases
  \begin{displaymath}
    \mathcal{\tilde G}=\hat{\mathcal{G}}=
    (\{-27 x_{5} x_{6}+10 x_{7}\}, \{0\}, \{0\}, \{0\}, \{x_{1} x_{5}-154
    x_{4}\}),
  \end{displaymath}
  which are already saturated. Application of Algorithm~\ref{alg:cc} to pairs of
  elements of $\hat{\mathcal{X}}$ and $\mathcal{\tilde G}$ yields
  $\Gamma=(\tcc,\too,\too,\too,\tcc)$. This tells us that $V_\CC(F)^*$ has two
  components, which live in different 3-dimensional subspaces of $\CC^8$. Both
  of them are cosets.
\end{example}

\begin{algorithm}[h]
  \caption{$\DecomposeAndClassify_\CC$\label{alg:dacc}}
  \begin{algorithmic}[1]
    \REQUIRE 1.~~$X$, a finite ordered set of variables;\quad
    2.~~$F\subseteq\QQ[X]$ finite and non-empty\smallskip
    
    \ENSURE 1.~~$\mathcal{P}\in\wp(\QQ[X])^k$;\quad
    2.~~$\hat{\mathcal{X}}\in\wp(X)^k$;\quad
    3.~~$\Gamma\in\{\tGG, \tCC, \tOO, \tXX, \tgg,
    \tcc, \too, \txx\}^k$ \smallskip

    \begin{sloppypar}
      $\mathcal{P}=(P_1,\dots,P_k)$ are Gröbner bases of a prime decomposition over
      $\QQ$ of $\langle F\rangle$. In
      ${\hat{\mathcal{X}}=(\hat X_1,\dots,\hat X_k)}$, $\hat X_i$ describes a
      compatible projection space with respect to $V_\CC(P_i)$. In
      $\Gamma=(\gamma_1,\dots,\gamma_k)$, the letter $\gamma_i$ classifies
      $V_\CC(P_i)^*$ in $\hat X_i$-space, using upper case when $\hat X_i=X$:
    \end{sloppypar}
    \smallskip

    \begin{center}
      $\tGG$/$\tgg$ -- $V_\CC(P_i)^*$ is a group;\quad
      $\tCC$/$\tcc$ -- $V_\CC(P_i)^*$ is a proper coset;\quad
      $\tOO$/$\too$ -- $V_\CC(P_i)^*=\emptyset$;\quad
      $\tXX$/$\txx$ else.
    \end{center}
    \smallskip
    
    \STATE
    $\mathcal{P}=(P_1,\dots,P_k):=\operatorname{PrimeDecomposition_\QQ}(F)$
    \COMMENT{$P_1$, \dots, $P_k$ are Gröbner bases}
    \FOR{$i=1,\dots,k$}
    \STATE{$X_i':=P_i\cap X$}
    \STATE{$\hat X_i:=X_i\setminus X_i'$}
    \STATE{$\hat P_i:=P_i\setminus X_i'$}
    \COMMENT{$\langle\emptyset\rangle=\langle0\rangle$}
    \STATE $\tilde P_i:= \operatorname{Saturate}(\hat P_i, \prod\hat X_i)$
    \COMMENT{$\tilde P_i$ is a Gröbner basis; the product runs over the set $\hat
      X_i$}
    \STATE{$\gamma_i:=\Classify_\CC(\hat X_i, \tilde P_i)$}
    \COMMENT{call Algorithm~\ref{alg:cc}}
    \IF{$\hat X_i\neq X$}
    \STATE{convert $\gamma_i$ to a lower case letter}
    \ENDIF
    \ENDFOR
    \RETURN{$\mathcal{P}$, $(\hat X_1,\dots, \hat X_k)$,
      $(\gamma_1,\dots,\gamma_k)$}
  \end{algorithmic}
\end{algorithm}

Algorithm~\ref{alg:dacc} formalizes the approach outlined in
Example~\ref{ex:b359}. We use the Weierstrass $\wp$ for power sets. The algorithm
starts with the computation of a prime decomposition in line 1. We have
\begin{displaymath}
  \bigcup_{i=1}^kV_\CC(P_i) = V_\CC(F) = V_\CC(\sqrt{\langle F \rangle}) \quad
  \text{and}\quad
  \bigcap_{i=1}^k\langle P_i \rangle = \sqrt{\langle F \rangle}.
\end{displaymath}
Note that the obtained prime ideals $\langle P_i \rangle$ are also radical.

\begin{lemma}\label{lem:primeisrad}
  Let $I\subseteq\QQ[X]$ be a prime ideal. Then $I$ is a radical ideal.
\end{lemma}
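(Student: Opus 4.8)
The plan is to give the standard one-line argument from commutative algebra, phrased directly in terms of the definition of a prime ideal. Recall that $I$ is prime means $I \neq \QQ[X]$ and, for all $f, g \in \QQ[X]$, if $fg \in I$ then $f \in I$ or $g \in I$. Recall also that $\sqrt{I} = \{\, f \in \QQ[X] \mid f^k \in I \text{ for some } k \in \NN \,\}$, and that always $I \subseteq \sqrt{I}$; so the content of the lemma is the reverse inclusion $\sqrt{I} \subseteq I$.

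First I would take an arbitrary $f \in \sqrt{I}$, so that $f^k \in I$ for some $k \geq 1$, and argue by induction on $k$ that $f \in I$. The base case $k = 1$ is immediate. For the inductive step, write $f^k = f \cdot f^{k-1} \in I$; since $I$ is prime, either $f \in I$, in which case we are done, or $f^{k-1} \in I$, in which case the induction hypothesis applied to $f^{k-1} \in I \subseteq \sqrt{I}$ — more directly, $f^{k-1} \in I$ means $f \in \sqrt{I}$ with a smaller exponent — yields $f \in I$. Hence $\sqrt{I} \subseteq I$, and combined with $I \subseteq \sqrt{I}$ we get $\sqrt{I} = I$, which is exactly the statement that $I$ is radical.

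There is essentially no obstacle here; this is a textbook fact, and the only thing to be careful about is that the empty-ideal degenerate case is excluded by $I \neq \QQ[X]$ being part of the definition of prime, so that the induction is well-posed (the exponent $k$ can be taken positive). If one prefers, the lemma can alternatively be cited from a standard reference such as \cite{cox2015ideals}; I would keep the short self-contained argument above since the surrounding text is already in the habit of recording such facts as lemmas with brief proofs.
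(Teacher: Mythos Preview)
Your proposal is correct and follows essentially the same approach as the paper: both argue by induction on the exponent $k$ (the paper uses $s$), splitting $f^k = f\cdot f^{k-1}$ and invoking primality to conclude $f\in I$ or $f^{k-1}\in I$. Your version merely adds a bit of framing around the definition of $\sqrt{I}$ and the trivial inclusion $I\subseteq\sqrt{I}$, but the mathematical content is identical.
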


\begin{proof}
  Let $f^s\in I$. We show by induction on $s$ that $f\in I$. If $s=1$, then we
  are done. Otherwise consider $f^s=ff^{s-1}\in I$. Since $I$ is prime, we have
  $f\in I$ or $f^{s-1}\in I$. In the latter case, $f\in I$ by the induction
  hypothesis.
\end{proof}

In lines 3--10, Algorithm~\ref{alg:dacc} follows in a for-loop essentially
Algorithm~\ref{alg:pacc} for each prime component basis $P_i$. In line 5 we note
that $\langle \hat P_i\rangle$ is prime by the following lemma.

\begin{lemma}
  Let $G\subseteq\QQ[X]$ be a reduced Gröbner basis of a prime ideal
  $\langle G\rangle$, and let $x\in G\cap X$. Then
  $\langle G\setminus\{x\}\rangle$ is prime.
\end{lemma}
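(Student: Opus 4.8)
The plan is to reduce the claim about $\langle G \setminus \{x\} \rangle$ to a statement about the quotient ring $\QQ[X]/\langle G \rangle$, which is an integral domain by hypothesis. Write $X = \{x, y_1, \dots, y_{n-1}\}$ with $x$ the distinguished variable, and set $G' = G \setminus \{x\}$. Since $G$ is a reduced Gröbner basis containing $x$, reducedness forces $x$ to divide no term of any other element of $G$; hence no element of $G'$ involves $x$, i.e.\ $G' \subseteq \QQ[y_1, \dots, y_{n-1}]$. The first step is to observe that $\langle G \rangle = \langle x \rangle + \langle G' \rangle$ inside $\QQ[X]$, and therefore the substitution homomorphism $\QQ[X] \to \QQ[y_1, \dots, y_{n-1}]$ sending $x \mapsto 0$ induces a ring isomorphism
\begin{equation*}
  \QQ[X]/\langle G \rangle \;\xrightarrow{\ \sim\ }\; \QQ[y_1,\dots,y_{n-1}]\big/\langle G' \rangle_{\QQ[y_1,\dots,y_{n-1}]}.
\end{equation*}

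The second step is to relate the ideal $\langle G' \rangle_{\QQ[y_1,\dots,y_{n-1}]}$ generated by $G'$ in the smaller polynomial ring to the ideal $\langle G' \rangle$ generated by $G'$ in $\QQ[X]$. Since $G'$ consists of polynomials not involving $x$, one has a direct-sum decomposition $\QQ[X] = \QQ[y_1,\dots,y_{n-1}] \oplus x\,\QQ[X]$ as $\QQ[y_1,\dots,y_{n-1}]$-modules, from which it follows that $\langle G' \rangle \cap \QQ[y_1,\dots,y_{n-1}] = \langle G' \rangle_{\QQ[y_1,\dots,y_{n-1}]}$ and, more to the point, that $\QQ[X]/\langle G' \rangle \cong \big(\QQ[y_1,\dots,y_{n-1}]/\langle G'\rangle_{\QQ[y_1,\dots,y_{n-1}]}\big)[x]$ is a polynomial ring in $x$ over the smaller quotient. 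Combining this with the isomorphism of the first step, the smaller quotient $\QQ[y_1,\dots,y_{n-1}]/\langle G' \rangle_{\QQ[y_1,\dots,y_{n-1}]} \cong \QQ[X]/\langle G \rangle$ is an integral domain, since $\langle G \rangle$ is prime. A polynomial ring in one variable over an integral domain is again an integral domain, so $\QQ[X]/\langle G' \rangle$ is a domain, which is exactly the assertion that $\langle G \setminus \{x\} \rangle = \langle G' \rangle$ is prime.

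I expect the main obstacle to be purely bookkeeping rather than conceptual: one must be careful to track \emph{in which ring} each ideal is being generated, because ``$\langle G' \rangle$'' is ambiguous between $\QQ[X]$ and $\QQ[y_1,\dots,y_{n-1}]$, and the whole argument hinges on the compatibility of these two readings via the splitting $\QQ[X] = \QQ[y_1,\dots,y_{n-1}] \oplus x\,\QQ[X]$. A subtle point worth checking is that $x \in \langle G \rangle$ genuinely implies $G' \subseteq \QQ[y_1,\dots,y_{n-1}]$: this uses that $G$ is \emph{reduced}, so that the leading term of no $g \in G'$ is divisible by the leading term $x$ of the generator $x$, and moreover no non-leading term of $g$ is divisible by $x$ either --- otherwise $g$ would not be reduced modulo $G$. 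Once these algebraic identifications are in place, the conclusion is immediate from the standard fact that $R[x]$ is a domain whenever $R$ is. An alternative, slightly slicker route avoids Gröbner-basis reducedness at the cost of a geometric argument: $V(G') \subseteq \KKn$ is the cylinder $V(G \setminus \{x\})$ over the irreducible variety $V(G)$ in the hyperplane $x = 0$, and a cylinder over an irreducible variety is irreducible, hence $\langle G' \rangle = I(V(G'))$ is prime; but this requires working over an algebraically closed field and invoking the Nullstellensatz, so the ring-theoretic argument above is cleaner in the present $\QQ$-coefficient setting.
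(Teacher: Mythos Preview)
Your proof is correct but takes a genuinely different route from the paper's. The paper identifies $\langle G\setminus\{x\}\rangle$ (read as an ideal of $\QQ[X\setminus\{x\}]$) with the elimination ideal $\langle G\rangle\cap\QQ[X\setminus\{x\}]$, and then argues in two lines that a contraction of a prime ideal is prime: if $fg$ lies in the elimination ideal then by primality of $\langle G\rangle$ one factor, say $f$, lies in $\langle G\rangle$, and since $f$ does not involve $x$ it already lies in $\langle G\rangle\cap\QQ[X\setminus\{x\}]$. Your approach instead passes through quotient rings, exhibiting $\QQ[X]/\langle G'\rangle$ as a polynomial ring in $x$ over $\QQ[X\setminus\{x\}]/\langle G'\rangle\cong\QQ[X]/\langle G\rangle$, which is a domain by hypothesis.

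What each buys: the paper's argument is shorter and leans on the standard elimination-ideal picture, though its opening ``Notice that\ldots'' glosses over exactly the reducedness point you spell out (that no element of $G'$ involves $x$). Your argument is more self-contained --- after the reducedness observation it needs no further Gr\"obner-basis facts --- and it proves slightly more: you obtain primality of $\langle G'\rangle$ both in $\QQ[X\setminus\{x\}]$ (from your first displayed isomorphism) and in the full ring $\QQ[X]$, whereas the paper only treats the former. Note, incidentally, that the paper's intended ambient ring for $\langle G\setminus\{x\}\rangle$ is $\QQ[X\setminus\{x\}]$, matching the use in Algorithm~\ref{alg:dacc}; your first step already settles that case, and the extension to $\QQ[X]$ is a pleasant bonus.
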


\begin{proof}
  Notice that $\langle G\setminus\{x\}\rangle$ is the elimination ideal
  $\langle G\rangle_x=\langle G\rangle\cap\QQ[X\setminus\{x\}]$. Let
  $fg\in\langle G\rangle_x\subseteq\langle G\rangle$. Then
  w.l.o.g.~$f\in\langle G\rangle$, because $\langle G\rangle$ is prime. Since
  $x$ does not occur in $fg$, it does not occur in $f$ either. Hence
  $f\in \langle G\rangle_x$.
\end{proof}

It follows that $\langle \hat P_i\rangle$ is also radical by
Lemma~\ref{lem:primeisrad}. When computing $\tilde P_i$ in line 6 primality is again
preserved, as the following lemma shows.

\begin{lemma}
  Let $I\subseteq\QQ[X]$ be a prime ideal, and let $f\in\QQ[X]$. Then
  $I:\langle f\rangle^\infty$ is a prime ideal.
\end{lemma}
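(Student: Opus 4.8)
The plan is to unwind the definition of saturation and reduce everything to the defining property of a prime ideal. Write
\begin{displaymath}
  S := I:\langle f\rangle^\infty = \{\, g\in\QQ[X] \mid f^m g\in I \text{ for some } m\in\NN \,\}.
\end{displaymath}
That $S$ is an ideal with $I\subseteq S$ is standard (for the inclusion, take $m=0$), so the only real point is primality.

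First I would dispose of the degenerate case $f\in I$: then $f = f^1\cdot 1\in I$ forces $1\in S$, i.e.\ $S=\langle 1\rangle=\QQ[X]$. In the situation of line~6 of Algorithm~\ref{alg:dacc} this never happens, since there $f=\prod\hat X_i$ and, $\langle\hat P_i\rangle$ being prime, a product of variables lies in $\langle\hat P_i\rangle$ only if one of the variables does --- which is impossible because by the preceding lemmas no element of $\hat X_i$ occurs in $\hat P_i$. So one may assume $f\notin I$. Then, $I$ being prime, $f^m\notin I$ for every $m$, hence $1\notin S$ and $S$ is a proper ideal. In fact in this case $S=I$: if $f^m g\in I$ with $f^m\notin I$, primality of $I$ gives $g\in I$. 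This already settles the lemma (and even shows $\tilde P_i=\hat P_i$, so the Gröbner basis claim in line~6 is immediate), but I would also record the general argument.

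For primality in general, suppose $gh\in S$, say $f^m g h\in I$ for some $m\in\NN$. Regard this product as $(f^m g)\cdot h\in I$. Since $I$ is prime, either $f^m g\in I$, in which case $g\in S$ by definition of $S$, or $h\in I\subseteq S$, in which case $h\in S$. Hence $S$ is prime.

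The whole argument is essentially bookkeeping rather than computation. The one place that needs a moment's care is the convention on the unit ideal: the statement as written is only correct once one knows $f\notin I$ (otherwise $S=\QQ[X]$, which one usually does not call prime), and I would make explicit that this mild hypothesis holds in the intended application by the way $\hat P_i$ and $\hat X_i$ were produced in the two preceding lemmas.
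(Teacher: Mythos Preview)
Your proof is correct and follows essentially the same approach as the paper's: both unwind the definition of saturation and use primality of $I$ to split the product. You are in fact more careful than the paper, which simply declares $\langle 1\rangle$ ``prime'' in the case $f^s\in I$, whereas you correctly flag the convention issue and verify that this case does not arise in the intended application; your additional observation that $S=I$ whenever $f\notin I$ is a nice bonus that the paper does not make explicit.
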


\begin{proof}
  Recall that $I\subseteq I:\langle f\rangle^\infty$. Let
  $gh\in I:\langle f\rangle^\infty$. We must show that
  $g\in I:\langle f\rangle^\infty$ or $h\in I:\langle f\rangle^\infty$. By
  definition there is $s\in\NN$ such that $f^sgh\in I$. If $f^s\in I$, then
  $I:\langle f\rangle^\infty=\langle1\rangle$, which is prime. Otherwise
  $gh\in I$ and therefore $g\in I\subseteq I:\langle f\rangle^\infty$ or
  $h\in I\subseteq I:\langle f\rangle^\infty$.
\end{proof}

We once more call Lemma~\ref{lem:primeisrad} to obtain that $\langle
\tilde P_i\rangle$ is also radical. Therefore, the radical ideal computation in line
5 of Algorithm~\ref{alg:pacc} is not necessary here.

In Appendix~\ref{app:cp} we discuss practical aspects of our computations and
give in Table~\ref{tab:pc} classification results using Algorithm~\ref{alg:dacc}
on the 129 models introduced at the beginning of this section. We succeed on 105
out of the 129 models within a time limit of 6 hours per model. This yields 3426
prime components to test altogether. We obtain 2 $\tGG$, 22 $\tCC$, and 1085
$\tcc$, which can be summarized as $V_\CC(\tilde P_i)^*$ forming a coset.
Furthermore, we have 2242 $\too$, i.e., $V_\CC(\tilde P_i)^*=\emptyset$. The rest is
only 34 $\tXX$ and 41 $\txx$. Again we visualize these results in terms of
percentages of the total of 3426 prime components:
\begin{center}
  \begin{tikzpicture}
    \pie[color={blue!20, blue!10}, radius=1.5, sum=100, before
    number=\footnotesize, after number=\%] {
      32.4/coset,
      65.4/empty
    }
  \end{tikzpicture}
\end{center}
Recall that our selection from the BioModels repository presented here is
essentially complete with respect to polynomial examples. This comes with the
disadvantage that our data is somewhat dominated by \texttt{BIOMD0000000281},
which contributes 1008 $\tcc$ and 2136 $\too$. We have verified that the ideal
dimensions for the 1008 $\tcc$ components are positive, pointing at non-trivial
coset structures in contrast to isolated points. For the sake of scientific
rigor we also present the statistics without \texttt{BIOMD0000000281}:
\begin{center}
  \begin{tikzpicture}
    \pie[color={blue!20, blue!10}, radius=1.5, sum=100, before
    number=\footnotesize, after number=\%] {
      35.8/coset,
      37.6/empty
    }
  \end{tikzpicture}
\end{center}

\subsection{Classification over $\RR$}\label{se:comp-r}
Our primary tool over $\RR$ is real quantifier elimination
\cite{Tarski:48a,Collins:75,CollinsHong:91,Weispfenning:88a,Grigoriev:88a,Weispfenning:97b}.
We use implementations by the fourth author and his students
\cite{DolzmannSturm:97c,Sturm:99a} in Redlog
\cite{DolzmannSturm:97a,SeidlSturm:03c,Sturm:06a,Sturm:07a,Sturm:17a}, which is
integrated with the open-source computer algebra system Reduce
\cite{Hearn:67a,Hearn:05a}. Our strategy is to apply virtual substitution
methods \cite{Weispfenning:97b,KostaSturm:16a,Kosta:16a} for quantifier
elimination within the relevant degree bounds and fall back into partial
cylindrical algebraic decomposition \cite{DolzmannSeidl:04a,Seidl:06a} with
subproblems where this is not possible. Our results hold over any real-closed
field, including $\RR$ as well as, e.g, the countable field of real algebraic
numbers.

Algorithm~\ref{alg:pacr} is the real counterpart to Algorithm~\ref{alg:pacc} in
Subsection~\ref{se:comp-c}.
\begin{algorithm}
  \caption{$\ProjectAndClassify_\RR$\label{alg:pacr}}
  \begin{algorithmic}[1]
    \REQUIRE 1.~~$X$, a finite ordered set of variables;\quad
    2.~~$F\subseteq\QQ[X]$ finite and non-empty\smallskip

    \ENSURE
    1.~$\hat{X}\subseteq X$;\quad
    2.~$\gamma\in\{\tGG, \tCC, \tOO, \tXX, \tgg, \tcc, \too,
    \txx\}$
    \smallskip

    $\hat{X}$ describes the canonical projection space with respect to $V_\RR(F)$.
    The letter $\gamma$ classifies $V_\RR(F)^*$ in $\hat X$-space, using upper
    case when $\hat X=X$:
    \smallskip
    
    \begin{center}
      $\tGG$/$\tgg$ -- $V_\RR(F)^*$ is a group;\quad
      $\tCC$/$\tcc$ -- $V_\RR(F)^*$ is a proper coset;\quad
      $\tOO$/$\too$ -- $V_\RR(F)^*=\emptyset$;\quad
      $\tXX$/$\txx$ else.
    \end{center}
    \smallskip

    \STATE{$\hat X := X$}
    \STATE{$\hat F := F$}
    \FOR{$x_i\in X$}
    \IF{$\RR\models\underline{\forall}(\bigwedge_{f\in\hat F}f=0\longrightarrow
      x_i=0)$}
    \STATE{$\hat X:=\hat X\setminus\{x_i\}$}
    \STATE{$\hat F:=\hat F[x_i/0]$}
    \ENDIF
    \ENDFOR
    \STATE{$\gamma:=\Classify_\RR(\hat X, \hat F)$}
    \IF{$\hat{X}\neq X$}
    \STATE{convert $\gamma$ to a lower case letter}
    \ENDIF
    \RETURN{$\hat{X}$, $\gamma$}
  \end{algorithmic}
\end{algorithm}
In line 4 we construct for each $x_i\in X$ the first-order $\LOR$-formula, 
where $\LOR$ denotes the language of ordered rings:
\begin{displaymath}
  \textstyle
  \psi\dotequal
  \underline{\forall}\biggl(\bigwedge\limits_{f\in\hat F}f=0\longrightarrow
    x_i=0\biggl),
\end{displaymath}
The underlined universal quantifier denotes the universal closure, which
universally quantifies all variables freely occurring within its scope. Our
formula $\psi$ straightforwardly states that for all points in $V_\RR(\hat F)$
with coordinates $x_j\in X$, which occur as variables in the polynomials
$f\in\hat F$, the specific coordinate $x_i$ is zero. The if-condition
$\RR\models\psi$ expresses that $\RR$ is a model of this formula, meaning that
the formula holds in $\RR$ or, equivalently, in the model class of real closed
fields.

A \emph{real quantifier elimination procedure} computes for any given
first-order $\LOR$-formula $\varphi$ an equivalent quantifier-free
$\LOR$-formula $\varphi'$, where the variables in $\varphi'$ are a subset of the
variables freely occurring in $\varphi$. Since $\psi$ contains no free
occurrences of variables, the corresponding $\psi'$ will be variable-free and
can be easily simplified to either $\true$ or $\false$. In the former case the
if-condition holds, in the latter case it does not.

When some $x_i$ is identified to vanish in all points of $V_\RR(\hat F)$ it is
removed from $\hat X$ in line 5. Notice that in contrast to
Algorithm~\ref{alg:pacc} the final $\hat X$ describes not only a compatible but
the canonical projection space with respect to $V_\RR(F)$. Accordingly, $x_i$ is
set to zero within $\hat F$ in line 6, where $[x_i/0]$ is a postfix operator
substituting the term $0$ for the variable $x_i$ in its argument $\hat F$. From
line 9 on, Algorithm~\ref{alg:pacr} proceeds like its complex counterpart
Algorithm~\ref{alg:pacc} but using Algorithm~\ref{alg:cr} for real
classification.
\begin{algorithm}
  \caption{$\Classify_\RR$\label{alg:cr}}
  \begin{algorithmic}[1]
    \REQUIRE
    1.~~$\hat X$, a finite ordered set of variables, w.l.o.g.~$\hat
    X=\{x_1,\dots,x_n\}$;\quad
    2.~~$\hat F\subseteq\QQ[\hat X]$ finite
    \smallskip
    
    \ENSURE $\gamma\in\{\tGG, \tCC, \tOO, \tXX\}$
    \smallskip

    \STATE define operator $\Phi(t_1,\dots,t_n) :=
    (\bigwedge_{f\in \hat F}f=0)[x_1/t_1,\dots,x_n/t_n]$
    \IF{$\hat X=\emptyset$ \OR
      $\RR\nmodels\underline{\exists}(
      \bigwedge_{i=1}^nx_i\neq0\land
      \Phi(x_1,\dots,x_n))$}
    \RETURN $\tOO$
    \ENDIF
    \STATE $\tau_{\text{inv}}:=\underline{\forall}(
    \bigwedge_{i=1}^n g_i\neq0\land
    \bigwedge_{i=1}^n x_i\neq0\land\newline
    \phantom{\tau_{\text{inv}}:=\underline{\forall}(}
    \quad\Phi(g_1,\dots,g_n)\land
    \Phi(g_1x_1,\dots,g_nx_n)
    \longrightarrow\Phi(g_1x_1^{-1},\dots,g_nx_n^{-1})$
    \IF{$\RR\nmodels\tau_{\text{inv}}$}
    \RETURN{$\tXX$}
    \ENDIF
    \STATE $\tau_{\text{mult}}:=\underline{\forall}(
    \bigwedge_{i=1}^n g_i\neq0\land
    \bigwedge_{i=1}^n x_i\neq0\land
    \bigwedge_{y=1}^n y_i\neq0\land
    \Phi(g_1,\dots,g_n)\land\newline
    \phantom{\tau_{\text{mult}}:=\underline{\forall}(}
    \quad\Phi(g_1x_1,\dots,g_nx_n)\land
    \Phi(g_1y_1,\dots,g_ny_n)\longrightarrow \Phi(g_1x_1y_1,\dots,g_nx_ny_n)$
    \IF{$\RR\nmodels\tau_{\text{mult}}$}
    \RETURN $\tXX$
    \ENDIF
    \STATE{$\tau_{\text{group}}:=\Phi(1,\dots,1)$}
    \IF{$\RR\models\tau_{\text{group}}$}
    \RETURN{$\tGG$}
    \ELSE
    \RETURN{$\tCC$}
    \ENDIF
  \end{algorithmic}
\end{algorithm}
In line 1 we define $\Phi(t_1,\dots,t_n)$ to generate a first order
$\LOR$-formula which states that $(t_1,\dots,t_n)\in V_\RR(\hat F)$, where the
$t_i$ are $\LOR$-terms. In lines 2--4 we handle the case
$V_\RR(\hat F)^*=\emptyset$. Hence in line 5 we know
$V_\RR(\hat F)^*\neq\emptyset$. We are going to use the following
characterization of cosets.

\begin{proposition}\label{prop:coset}
  Let $K^*$ be a multiplicative group. Let $C\subseteq (K^*)^n$,
  $C\neq\emptyset$. Then the following are equivalent:
  \begin{enumerate}[(i)]
  \item $C$ is a coset;
  \item there exists $g_0\in(K^*)^n$ such that $g_0^{-1}C$ is a group;
  \item there exists $g_0\in C$ such that $g_0^{-1}C$ is a group;
  \item for all $g\in C$ we have that $g^{-1}C$ is a group.
  \end{enumerate}
\end{proposition}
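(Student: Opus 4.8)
The plan is to establish the cyclic chain $(i)\Rightarrow(iv)\Rightarrow(iii)\Rightarrow(ii)\Rightarrow(i)$, which is the most economical route since two of the four links require essentially no work. The link $(iv)\Rightarrow(iii)$ follows because $C\neq\emptyset$ lets us pick any $g\in C$ and take $g_0:=g$; and $(iii)\Rightarrow(ii)$ is immediate, since a witness $g_0\in C$ lies in particular in $(K^*)^n$.

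For $(ii)\Rightarrow(i)$ I would just unwind the definition of a coset: if $H:=g_0^{-1}C$ is a subgroup of $(K^*)^n$, then multiplying back by $g_0$ gives $C=g_0 H$, which is by definition a coset of the subgroup $H$.

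The only implication with real content is $(i)\Rightarrow(iv)$. Here I would write $C=aH$ for a subgroup $H$ of $(K^*)^n$ and some $a\in(K^*)^n$; note that $a\in C$ automatically, since the identity $(1,\dots,1)$ lies in $H$. Now fix an arbitrary $g\in C$ and write $g=ah$ with $h\in H$. Using that $(K^*)^n$ is abelian and that $h^{-1}H=H$ because $h\in H$ and $H$ is a subgroup, one computes
\[
g^{-1}C=(ah)^{-1}(aH)=h^{-1}a^{-1}aH=h^{-1}H=H.
\]
Hence $g^{-1}C=H$ is a group, which is exactly what $(iv)$ asserts (and, as $g\in C$ was arbitrary, it holds for all $g\in C$).

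I do not expect a genuine obstacle: the computation is elementary abelian group theory. The one conceptual point worth flagging is that $(i)\Rightarrow(iv)$ secretly encodes the fact that the subgroup $g^{-1}C$ does not depend on the chosen base point $g\in C$ — precisely the familiar statement that two cosets of the same subgroup are equal or disjoint, here made explicit by the displayed identity. One could alternatively organize the whole proof around that uniqueness observation, but routing everything through the cyclic chain above keeps the argument shortest.
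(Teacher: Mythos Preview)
Your proof is correct and follows essentially the same approach as the paper: the key computation in your $(i)\Rightarrow(iv)$---writing $C=aH$, $g=ah$, and observing $g^{-1}C=h^{-1}H=H$---is exactly the paper's $(iii)\Rightarrow(iv)$ step, and the remaining implications are handled identically in both proofs (definition-unwinding and trivial inclusions). The only difference is organizational: the paper treats $(i)\leftrightarrow(ii)$, $(ii)\leftrightarrow(iii)$, $(iii)\leftrightarrow(iv)$ pairwise rather than running your single cycle.
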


\begin{proof}
  The equivalence between (i) and (ii) is the definition of a coset. When (ii)
  holds, then $C=g_0G$ for a group $G$, hence $g_0\cdot1\in C$, which shows
  (iii). The implication from (iii) to (ii) is obvious, and so is the
  implication from (iv) to (iii). It remains to be shown that (iii) implies
  (iv).

  Assume that $g_0\in C$ and $G=g_0^{-1}C$ is a group; equivalently $C = g_0G$.
  Let $g\in C$. Then there is $y\in G$ such that $g=g_0y$. It follows that
  $g^{-1}C=(g_0y)^{-1}C=y^{-1}g_0^{-1}C=y^{-1}G=G$.
\end{proof}

Proposition~\ref{prop:coset}(iv) yields a first-order characterization for
$V_\RR(\hat F)^*$ to be a coset, which could be informally stated as follows:
\begin{equation}
  \label{eq:AAA}
  \forall g, x, y\in(\RR^*)^n\mathord{:}\
  g\in V_\RR(\hat F)\land gx\in V_\RR(\hat F)\land
  gy\in V_\RR(\hat F)\Rightarrow gx^{-1}\in V_\RR(\hat F) \land gxy\in
  V_\RR(\hat F).
\end{equation}
As a first-order $\LOR$-formula this yields:
\begin{multline*}
  \textstyle
  \tau\dotequal\underline{\forall}\biggl(
  \bigwedge\limits_{i=1}^n g_i\neq0\land
  \bigwedge\limits_{i=1}^n x_i\neq0\land
  \bigwedge\limits_{y=1}^n y_i\neq0\land{}
  \Phi(g_1,\dots,g_n)\land
  \Phi(g_1x_1,\dots,g_nx_n)\land{}\\
  \Phi(g_1y_1,\dots,g_ny_n)
  \longrightarrow \Phi(g_1x_1^{-1},\dots,g_nx_n^{-1})\land
  \Phi(g_1x_1y_1,\dots,g_nx_ny_n)\biggr).
\end{multline*}
In the equations originating from $\Phi(g_1x_1^{-1},\dots,g_nx_n^{-1})$
principal denominators containing variables from $x_1$, \dots,~$x_n$ can be
equivalently dropped, because the left hand side of the implication requires
those variables to be different from zero. The first-order $\LOR$-formula
$\tau$ can be equivalently transformed into
$\tau_{\text{inv}}\land\tau_{\text{mult}}$ with $\tau_{\text{inv}}$ and
$\tau_{\text{mult}}$ as in line 5 and line 9 of Algorithm~\ref{alg:cr},
respectively. Therefore it is correct to exit with $\gamma=\tXX$ in line 7 or
11 when either part does not hold. This splitting into subproblems has two 
advantages. First, separate quantifier eliminations on smaller problems are
more efficient. Second, when $\tau_{\text{inv}}$ does not hold in line 6, then $
\tau_{\text{mult}}$ need not be considered at all.

When reaching line 13, we know that $V_\RR(\hat F)^*$ is a coset and apply the
following corollary, which concludes our discussion of Algorithm~\ref{alg:cr}.
\begin{corollary}\label{cor:groupwhenone}
  Let $C$ be a coset. Then $C$ is group if and only if $1\in C$.
\end{corollary}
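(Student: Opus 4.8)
The plan is to obtain both directions immediately from Proposition~\ref{prop:coset}, reading the symbol $1$ in the statement as the neutral element $\mathbf 1=(1,\dots,1)$ of the ambient group $(K^*)^n$.

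For the direction ``$C$ is a group $\Rightarrow\ 1\in C$'', I would simply note that a group contains its neutral element, and that the neutral element of a subgroup of $(K^*)^n$, under the coordinate-wise multiplication inherited from $(K^*)^n$, must be $\mathbf 1$. Hence $\mathbf 1\in C$.

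For the converse, I would invoke the equivalence of items (i) and (iv) in Proposition~\ref{prop:coset}. Since $C$ is assumed to be a coset and $\mathbf 1\in C$, part (iv) applied to the particular choice $g=\mathbf 1\in C$ yields that $g^{-1}C=\mathbf 1^{-1}C=C$ is a group, which is exactly the claim. Alternatively one could argue by hand: writing $C=g_0 G$ for a group $G$, the assumption $\mathbf 1\in C$ gives $g_0^{-1}\in G$, so $g_0^{-1}\in G$ and thus $C=g_0 G=g_0 g_0^{-1} G=G$ is a group; but routing through Proposition~\ref{prop:coset} is cleaner.

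I do not anticipate any genuine obstacle here; the only point that warrants an explicit line of justification is the identification, in the first direction, of the abstract neutral element of $C$ with $\mathbf 1$, which holds because a subgroup of $(K^*)^n$ shares the identity of $(K^*)^n$.
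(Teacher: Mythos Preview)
Your proposal is correct and follows essentially the same route as the paper: the paper invokes Proposition~\ref{prop:coset}(iv) with $g=1$ for the direction $1\in C\Rightarrow C$ is a group, and appeals to the definition of a group for the converse. Your only addition is the explicit remark that the neutral element of a subgroup of $(K^*)^n$ coincides with $\mathbf 1$, which is harmless extra care.
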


\begin{proof}
  If $1\in C$, then $C=1^{-1}C$ is a group by Proposition~\ref{prop:coset}(iv).
  The converse implication follows from the definition of a group.
\end{proof}

\begin{remark}
  As an alternative to (\ref{eq:AAA}), Proposition~\ref{prop:coset}(iii) yields
  the following characterization of cosets, which might appear more natural
  because it is closer to the original definition of cosets:
  \begin{equation}
    \label{eq:EAA}
    \exists g\in(\RR^*)^n~\forall x, y\in(\RR^*)^n\mathord{:}\
    gx\in V_\RR(\hat F)\land
    gy\in V_\RR(\hat F)\Rightarrow gx^{-1}\in V_\RR(\hat F) \land gxy\in
    V_\RR(\hat F).
  \end{equation}
  The first difference to observe is that in (\ref{eq:EAA}) in contrast to
  (\ref{eq:AAA}) there is quantifier alternation from $\exists$ to $\forall$.
  The number of quantifier alternations is known to be a critical parameter for
  asymptotic complexity of the real quantifier elimination problem
  \cite{Grigoriev:88a,Weispfenning:88a}. Furthermore, in the presence of the
  leading existential quantifier prohibits our splitting into two independent
  smaller problems. Experimental computations on the complete dataset considered
  here have confirmed that formulation (\ref{eq:AAA}) is clearly preferable.
\end{remark}

\begin{example}[\texttt{BIOMD0000000159}]
  Consider $F\subseteq\QQ[X]$, where $X=\{x_1,\dots,x_3\}$:
  \begin{displaymath}
    F=\{- 32 x_{1} x_{2} + 3,  - x_{2} + x_{3}, 4 x_{1} - x_{3}\}.
  \end{displaymath}
  In lines 3--8 of Algorithm~\ref{alg:pacr} we consecutively apply real
  quantifier elimination to the following formulas:
  \begin{displaymath}
    \forall x_1\forall x_2\forall x_3(
    - 32 x_{1} x_{2} + 3=0
    \land - x_{2} + x_{3}=0
    \land 4 x_{1} - x_{3}=0
    \longrightarrow
    x_i=0),\quad
    i=1, \dots,3.
  \end{displaymath}
  Neither of them holds in $\RR$ so that in line 9 we enter
  Algorithm~\ref{alg:cr} with $\hat X=X$ and $\hat F=F$.
  
  In line 2 of Algorithm~\ref{alg:cr} we test
  \begin{equation}
    \label{eq:rex1}
    \exists x_1\exists x_2\exists x_3(x_1\neq0\land x_2\neq0\land x_3\neq0
    \land - 32 x_{1} x_{2} + 3=0
    \land - x_{2} + x_{3}=0
    \land 4 x_{1} - x_{3}=0).
  \end{equation}
  Real quantifier elimination \cite{Kosta:16a} confirms that (\ref{eq:rex1})
  holds in $\RR$, and extended quantifier elimination \cite{KostaSturm:16a} even
  gives us a witness
  \begin{equation}
    \label{eq:witness}
    \textstyle
    (x_1,x_2,x_3)=\left(
      \frac{\sqrt{3}}{8 \sqrt{2}},
      \frac{\sqrt{3}}{2 \sqrt{2}},
      \frac{\sqrt{3}}{2 \sqrt{2}}\right)\in V_\RR(F)^*.
  \end{equation}
  Therefore we set up $\tau_{\text{inv}}$ in line 5 as follows:
  \begin{multline}
    \label{eq:rex2}
    \tau_{\text{inv}}\dotequal
    \forall g_1\forall g_2\forall g_3
    \forall x_1\forall x_2\forall x_3(
    g_1=0\land g_2=0\land g_3=0\land
    x_1=0\land x_2=0\land x_3=0\\
    {}\land- 32 g_{1} g_{2} + 3=0
    \land - g_{2} + g_{3}=0
    \land 4 g_{1} - g_{3}=0\\
    {}\land - 32 g_1x_{1} g_2x_{2} + 3=0
    \land - g_2x_{2} + g_3x_{3}=0
    \land 4 g_1x_{1} - g_3x_{3}=0\\
    {}\longrightarrow
    {- 32 g_{1} g_{2} + 3x_1x_2}=0
    \land - g_{2}x_3 + g_{3}x_2=0
    \land 4 g_{1}x_3 - g_{3}x_1=0).
  \end{multline}
  Notice that in the three equations in last line of (\ref{eq:rex2}) we have
  equivalently dropped denominators $x_1x_2$, $x_2x_3$, and $x_1x_3$,
  respectively. The inequalities for $x_1$, \dots,~$x_3$ in first line of
  (\ref{eq:rex2}) ensure that those denominators do not vanish. In line 10,
  quantifier elimination confirms that $\tau_{\text{inv}}$ holds in $\RR$, and
  so does $\tau_{\text{mult}}$ in line 10. Thus we reach line 13 and set up the
  following formula to test whether our coset $\hat F$ is even a group:
  \begin{displaymath}
    - 32 + 3=0
    \land - 1 + 1=0
    \land 4 - 1=0.
  \end{displaymath}
  This is obviously not the case. Algorithm~\ref{alg:cr} returns \lq$\tCC$\rq{}
  in line 17, and Algorithm~\ref{alg:pacr} finally returns $\{x_1,x_2,x_3\}$ and
  \lq$\tCC$\rq{} in line 13.
\end{example}

In Appendix~\ref{app:cb} we discuss practical aspects of our implementation and
give in Table~\ref{tab:cbc} classification results from applying
Algorithm~\ref{alg:pacr} to the 129 models introduced at the beginning of this
section. For our discussion here we note that our algorithm terminates within a
time limit of 6 hours per model on 94 out of the 129 models. We obtain 20 $\tCC$
and 6 $\tcc$, which can be summarized as $V_\RR(\tilde G)^*$ forming a coset.
Furthermore we have 4 $\tOO$ and 42 $\too$, i.e., $V_\RR(\tilde G)^*=\emptyset$. The
rest is 21 $\tXX$ and one single $\txx$. In terms of percentages of the 94
successful computations this gives the following picture:
\begin{center}
  \begin{tikzpicture}
    \pie[color={red!20, red!10}, radius=1.5, sum=100,
    before number=\footnotesize] {
      27.7/coset,
      48.9/empty
    }
  \end{tikzpicture}
\end{center}

In analogy to Algorithm~\ref{alg:dacc} in Subsection~\ref{se:comp-c},
Algorithm~\ref{alg:dacr} applies prime decompositions over $\QQ$ also in the
real case.
\begin{algorithm}
  \caption{$\DecomposeAndClassify_\RR$\label{alg:dacr}}
  \begin{algorithmic}[1]
    \REQUIRE 1.~~$X$, a finite ordered set of variables;\quad
    2.~~$F\subseteq\QQ[X]$ finite and non-empty\smallskip
    
    \ENSURE 1.~~$\mathcal{P}\in\wp(\QQ[X])^k$;\quad
    2.~~$\hat{\mathcal{X}}\in\wp(X)^k$;\quad
    3.~~$\Gamma\in\{\tGG, \tCC, \tOO, \tXX, \tgg,
    \tcc, \too, \txx\}^k$ \smallskip

    \begin{sloppypar}
      $\mathcal{P}=(P_1,\dots,P_k)$ are Gröbner bases of a prime decomposition over
      $\QQ$ of $\langle F\rangle$. In
      ${\hat{\mathcal{X}}=(\hat X_1,\dots,\hat X_k)}$, $\hat X_i$ describes the
      canonical projection space with respect to $V_\RR(P_i)$. In
      $\Gamma=(\gamma_1,\dots,\gamma_k)$, the letter $\gamma_i$ classifies
      $V_\RR(P_i)^*$ in $\hat X_i$-space, using upper case when $\hat X_i=X$:
    \end{sloppypar}
    \smallskip

    \begin{center}
      $\tGG$/$\tgg$ -- $V_\RR(P_i)^*$ is a group;\quad
      $\tCC$/$\tcc$ -- $V_\RR(P_i)^*$ is a proper coset;\quad
      $\tOO$/$\too$ -- $V_\RR(P_i)^*=\emptyset$;\quad
      $\tXX$/$\txx$ else.
    \end{center}
    \smallskip
    
    \STATE
    $\mathcal{P}=(P_1,\dots,P_k):=\operatorname{PrimeDecomposition_\QQ}(F)$
    \FOR{$i=1,\dots,k$}
    \STATE{$\text{$\hat X_i$, $\gamma_i$}:=\ProjectAndClassify_\RR(X, P_i)$}
    \ENDFOR
    \RETURN{$\mathcal{P}$, $(\hat X_1,\dots, \hat X_k)$,
      $(\gamma_1,\dots,\gamma_k)$}
  \end{algorithmic}
\end{algorithm}
It starts with the computation of a prime decomposition in line 1. We
then have
\begin{displaymath}
  \bigcup_{i=1}^kV_\RR(P_i)=V_\RR(F)\quad
  \text{and}\quad
  \RR\models\bigvee_{i=1}^k\bigwedge_{p\in P_i}p=0
  \longleftrightarrow
  \bigwedge_{f\in F}f=0.
\end{displaymath}
In lines 3--4 we apply Algorithm~\ref{alg:pacr} to each component and collect
the results.

In Appendix~\ref{app:cp} we discuss practical aspects of our computations and
give in Table~\ref{tab:pc} classification results using Algorithm~\ref{alg:dacr}
on the 129 models introduced at the beginning of this section. We succeed on 88
out of the 129 models within a time limit of 6 hours per model. This yields 3390
prime components to test altogether. We obtain 2 $\tGG$, 22 $\tCC$, and 1083
$\tcc$, which can be summarized as $V_\RR(\tilde P_i)^*$ forming a coset.
Furthermore, we have 7 $\tOO$ and 2232 $\too$, i.e., $V_\RR(\tilde P_i)^*=\emptyset$.
The rest is only 18 $\tXX$ and 26 $\txx$. In left hand side picture below, we
visualize these results in terms of percentages of the total of 3390 prime
components. In the right hand side picture, we see the corresponding statistics
without \texttt{BIOMD0000000281}:
\begin{center}
  \begin{tikzpicture}
    \pie[color={red!20, red!10}, radius=1.5, sum=100, before
    number=\footnotesize] {
      32.7/coset,
      66.0/empty
    }
    \pie[color={red!20, red!10}, radius=1.5, sum=100, before
    number=\footnotesize, pos={6,0}] {
      40.2/coset,
      41.9/empty
    }
  \end{tikzpicture}
\end{center}
Recall from the discussion at the end of Subsection~\ref{se:comp-c} that we
consider the left hand side picture more adequate and add the right hand
side one for the sake of scientific rigor.
 
\section{Upper Complexity Bounds}\label{sec:complexities}
In this section we give asymptotic upper bounds on the worst case complexity of
problems addressed in this paper. In Subsection~\ref{subsec:algebraicallyclosed}
we derive bounds for recognizing toric and shifted toric varieties over
algebraically closed fields of characteristic zero. In
Subsection~\ref{subsection:realclosed} we derive corresponding bounds for toric
varieties over real closed fields. Subsection~\ref{subsection:binomiality}
finally gives bounds for the membership problem in subgroups of
$({\overline\QQ}^*)^n$, which correspond to binomial varieties.

\subsection{Toricity over Algebraically Closed Fields of Characteristic
  Zero}\label{subsec:algebraicallyclosed}
As mentioned in Section \ref{sec:preliminaries}, a torus $G$ can be represented
as the set of solutions of binomials of the form
$x_1^{a_{1i}}\cdots x_n^{a_{ni}}=1$, $1\le i\le n-m$, where $a_{ji}\in \ZZ$ and
$m$ is equal to the dimension of $G$ and every Gröbner basis of a toric variety
$\overline G$ consists only of binomials of the form
$x_1^{b_{1i}}\cdots x_n^{b_{ni}}- x_1^{c_{1i}}\cdots x_n^{c_{ni}}$ where
integers $b_{ji}$, $c_{ji}$ are non-negative \cite{sturmfels_grobner_1996}.

Let $f_1$, \dots,~$f_k\in \Zx$ and $\VV \subseteq \CCn$ be the algebraic variety
of common zeroes of $f_1$, \dots,~$f_k$. We design an algorithm which recognizes
whether $\VV$ is toric. Note that $f_1$, \dots,~$f_k$ are not necessary
binomials. To estimate the complexity of the algorithm we suppose that
$\deg(f_i)\le d$, $1\le i\le k$ and that the bit-size of each integer
coefficient of $f_1$, \dots,~$f_k$ does not exceed $L$. Invoking
\cite{chistov_algorithm_1986,grigoriev_factorization_1986} we first verify
whether $\VV$ is irreducible and $\overline{\Vstar}=\VV$ (if this is not true
then $\VV$ is not toric). The complexity of the algorithms from
\cite{chistov_algorithm_1986,grigoriev_factorization_1986} can be bounded by
$(Ld^{n^2})^{O(1)}$. Then we verify that $\Vstar$ is a group. This holds if and
only if we have the following first-order formula in the theory of algebraically
closed fields of characteristic zero:
\begin{eqnarray}\label{1}
  \textstyle
  \forall x\forall y(x\neq0\land y\neq0\land
  \bigwedge_if_i(x)=0\land \bigwedge_if_i(y)=0\longrightarrow
  \bigwedge_if_i(xy)=0\land \bigwedge_if_i(1/x)=0).
\end{eqnarray}
This can be verified via the algorithm in \cite{chistov_complexity_1984}. The
complexity of this step is bounded by $(Ld^{n^2})^{O(1)}$ as well.

So far, the algorithm has verified whether $\VV$ is toric. Now we show how to
find a system of binomial equations determining $\Vstar$. In order to do so, in
this subsection we find a set of Laurent binomials determining $\Vstar$ instead
of binomials in $\CC[X]$ whose set of solutions is $\VV$. This is because as it
has been shown by Mayr and Meyer in their seminal work, the number of binomials
and their degrees in a Gröbner basis of $I(\VV)$ can be double-exponential
\cite{mayr_complexity_1982}. Using the algorithms from
\cite{chistov_algorithm_1986,grigoriev_factorization_1986} one can produce
$m=\dim \VV$ coordinates among $x_1$, \dots,~$x_n$ which form a transcendental
basis of $\Vstar$. Without loss of generality, assume that $\{x_1,\dots,x_m\}$
be a transcendental basis. Fix an integer $j$, $m<j\le n$ and project $\Vstar$
on the space generated by $x_1$, \dots,~$x_m$, $x_j$, which is isomorphic to
$(\Cstar)^{m+1}$ invoking again \cite{chistov_complexity_1984}. Let
$W\subseteq (\Cstar)^{m+1}$ be image of the projection. Due to the choice of the
transcendental basis, we have that $\dim W=m$ and $W$ is a hypersurface.
Therefore, $W$ can be determined by a single polynomial
$h:=h_j\in \ZZ[x_1,\dots,x_m, x_j]$ (c.f. e.g., \cite{Shafarevich}).
Moreover, $\deg W\le \deg \Vstar\le d^n$; the latter follows from Bezout
inequality \cite{Shafarevich}. The algorithm from \cite{chistov_complexity_1984}
constructs $h$ and a generic point of $W$ within the complexity
$(Ld^{n^2})^{O(1)}$.

Observe that $W$ is also a group. Hence $h$ can be rewritten as a binomial of
the form $x_1^{q_{j1}}\cdots x_m^{q_{jm}}\cdot x_j^{q_j} -1 \in \Zxpm$ for suitable
relatively prime integers $q_{j1}$, \dots,~$q_{jm}$, $q_j \in \ZZ$. Doing so for every
$j$, $m<j\le n$, the algorithm yields polynomials $h_j \in \Zx$. Denote by
$H \subseteq \Cstarn$ the variety given by equations $h_j$, $m<j\le n$. Clearly,
$\Vstar\subseteq H$, $\dim H=m$ and $H$ is a group, therefore, $\Vstar$ is an
irreducible component of $H$. Moreover, $H$ is a binomial variety and hence,
$\Vstar$ is its subgroup (of a finite index). In particular,
$(1,\dots,1)\in \Vstar$, and every irreducible component $H_1$ of $H$ has the form
$H_1=g\Vstar$ for an arbitrary element $g\in H_1$. Moreover, one can choose $g$
such that its coordinates are roots of unity \cite[Remark 3.1, Remark
5.2]{grigoriev_milman2012}.

In addition, in order to obtain the Laurent binomials defining $\Vstar$, the
algorithm finds a $\ZZ$-basis of the intersection
$\QQ(Q_{m+1},\dots,Q_n) \cap \ZZ^n$ of the $\QQ$-linear space generated by
vectors $Q_j=(q_{j1},\dots,q_{jm},0,\dots,0,q_j,0\dots,0)$, $m<j\le n$ with the
grid $\ZZ^n$ \cite[Remark 3.1]{grigoriev_milman2012}. To find that
$\ZZ$-basis, the algorithm first applies
\cite{frumkin_application_1976,dumas_efficient_2001} to produce (within
polynomial complexity) a $\ZZ$-basis $Z$ of the space of integer solutions of
the linear system with rows $Q_j$, $m<j\le n$, and subsequently applies
\cite{frumkin_application_1976,dumas_efficient_2001} to construct a $\ZZ$-basis
of the linear system with the rows from $Z$.

Recall from the Definition~\ref{def:shiftedtoricvariety} that for a torus
$\Vstar\subseteq \Cstarn$ and a point $g\in \Cstarn$ we call $g\Vstar$ a shifted torus,
and $\overline{g\Vstar}\subseteq \CCn$ a shifted toric variety. In particular, every
irreducible component of a binomial variety in $\Cstarn$ is a shifted torus
(Proposition~\ref{structurebinomialvariety}). One can modify the described
algorithm to test whether an input variety $\VV\subseteq \CCn$ is shifted toric. To this
end, pick an arbitrary point $g=(g_1,\dots,g_n)\in \Vstar$ and test whether
$g^{-1}\Vstar$ is a torus. If the latter holds, the algorithm produces binomial
equations for $g^{-1}\Vstar$. Clearly a binomial equation
$x_1^{s_1}\cdots x_n^{s_n}=1$ vanishes on $g^{-1}\Vstar$ if and only if
$x_1^{s_1}\cdots x_n^{s_n}=g_1^{s_1}\cdots g_n^{s_n}$ vanishes on $\Vstar$ .

The following theorem summarizes our algorithm.

\begin{theorem}\label{complex}
  Let $f_1$, \dots, $f_k\in \Zx$, $\deg(f_i)\le d$, $1\le i\le k$ with bit-sizes of integer
  coefficients of $f_1$, \dots,~$f_k$ at most $L$. One can design an algorithm which
  tests whether the variety $\VV\subseteq \CCn$ determined by $f_1$,
  \dots,~$f_k$ is (shifted) toric. In the positive case, the algorithm yields a
  transcendental basis $x_{i_1}$, \dots,~$x_{i_m}$, $m=\dim \VV$ of $\VV$. It
  furthermore yields binomial equations defining $\Vstar=\VV\cap \Cstarn$. Each
  binomial equation has the form
  $x_1^{s_1}\cdots x_n^{s_n}=g_1^{s_1}\cdots g_n^{s_n}$ for all
  $(g_1,\dots,g_n)\in \Vstar$ and for suitable integers $s_1$,
  \dots,~$s_n\in \ZZ$ satisfying $|s_i|<O(d^{n^2})$ . The complexity of the designed
  algorithm does not exceed $(Ld^{n^2})^{O(1)}$.
\end{theorem}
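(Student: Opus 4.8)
The plan is to assemble the algorithm from the ingredients prepared in the preceding discussion and to check that each step is correct and stays within the claimed bound $(Ld^{n^2})^{O(1)}$.

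First I would apply the irreducible-decomposition and factorization algorithms of \cite{chistov_algorithm_1986,grigoriev_factorization_1986} to test whether $\VV$ is irreducible with $\overline{\Vstar}=\VV$; if this fails, $\VV$ is not toric, and (after picking $g\in\Vstar$ and replacing $\Vstar$ by $g^{-1}\Vstar$) not shifted toric either. Next, the group property of $\Vstar$ is captured by the first-order sentence~\eqref{1} in the theory of algebraically closed fields of characteristic zero, decided by Chistov--Grigoriev quantifier elimination \cite{chistov_complexity_1984}; both steps cost $(Ld^{n^2})^{O(1)}$.

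The core of the proof is producing the defining binomials \emph{without} a Gr\"obner basis of $I(\VV)$, whose degrees can be doubly exponential by Mayr--Meyer \cite{mayr_complexity_1982}. Using the dimension computation I would extract a transcendence basis, w.l.o.g.\ $x_1,\dots,x_m$ with $m=\dim\VV$. For each $j$ with $m<j\le n$, project $\Vstar$ to the $(x_1,\dots,x_m,x_j)$-subspace; by the choice of basis the image $W_j\subseteq(\Cstar)^{m+1}$ is a hypersurface, cut out by a single polynomial $h_j$ that \cite{chistov_complexity_1984} constructs, together with a generic point, in complexity $(Ld^{n^2})^{O(1)}$. Since $W_j$ is again a group, $h_j$ is, up to a scalar, a Laurent binomial $x_1^{q_{j1}}\cdots x_m^{q_{jm}}x_j^{q_j}-1$, and Bézout's inequality gives $\deg W_j\le\deg\Vstar\le d^n$, so $|q_{ji}|\le d^n$. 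The variety $H\subseteq\Cstarn$ cut out by the $h_j$ then satisfies $\Vstar\subseteq H$, $\dim H=m$, and $H$ is a group; hence $\Vstar$ is an irreducible component of $H$ of finite index, so $(1,\dots,1)\in\Vstar$, and by \cite[Remarks 3.1, 5.2]{grigoriev_milman2012} every component of $H$ has the form $g\Vstar$ for some $g$ with root-of-unity coordinates. To descend from equations for $H$ to equations for $\Vstar$ itself, I would use the integer-linear-algebra routines of \cite{frumkin_application_1976,dumas_efficient_2001} to compute a $\ZZ$-basis of $\QQ(Q_{m+1},\dots,Q_n)\cap\ZZ^n$, where $Q_j$ is the exponent vector of $h_j$; this is a polynomial-time subroutine, and a Hadamard-type estimate on the saturation of a lattice spanned by vectors of height $\le d^n$ in $\ZZ^n$ yields basis entries bounded by $O(d^{n^2})$, which are the exponents $s_i$.

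Finally, for the shifted-toric case I would pick $g=(g_1,\dots,g_n)\in\Vstar$, run the above on $g^{-1}\Vstar$, and rewrite each output binomial $x_1^{s_1}\cdots x_n^{s_n}=1$ for $g^{-1}\Vstar$ as $x_1^{s_1}\cdots x_n^{s_n}=g_1^{s_1}\cdots g_n^{s_n}$ for $\Vstar$, the form claimed in the statement. I expect the main obstacle to be the bookkeeping of the degree and bit-length bounds: each invocation of \cite{chistov_complexity_1984}, the hypersurface degrees $d^n$, and the lattice-reduction steps are individually within budget, but one must verify that their composition neither pushes the exponents past $O(d^{n^2})$ nor the running time past $(Ld^{n^2})^{O(1)}$.
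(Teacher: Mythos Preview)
Your proposal is correct and follows essentially the same approach as the paper's own argument in Subsection~\ref{subsec:algebraicallyclosed}: irreducibility/closure tests and the group test via~\eqref{1} using \cite{chistov_algorithm_1986,grigoriev_factorization_1986,chistov_complexity_1984}, projection to $(x_1,\dots,x_m,x_j)$-hypersurfaces with the B\'ezout bound $\deg W_j\le d^n$, assembling the group $H$, and then saturating the exponent lattice via \cite{frumkin_application_1976,dumas_efficient_2001} to obtain binomials for $\Vstar$, with the shifted case handled by the $g^{-1}\Vstar$ trick. Your explicit Hadamard-type justification of the $|s_i|=O(d^{n^2})$ bound is in fact a bit more detailed than what the paper spells out.
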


One can extend the algorithm in Theorem~\ref{complex} so that it takes a
reducible variety $\VV$ and decomposes $V$ into irreducible components and then
tests whether its irreducible components are toric or shifted toric (following
the lines of the algorithm for each irreducible component of $\Vstar$
separately). Moreover, if the irreducible components of $\VV$ are exclusively
toric and shifted toric varieties and $\Vstar$ is a group, then the extended
algorithm yields the described representation of $\Vstar$ by means of binomials.
The complexity of the extended algorithm is still $(Ld^{n^2})^{O(1)}$.

Our discussion here can be straightforwardly generalized to any algebraically
closed field with characteristic zero. Independently, the algorithm can be
generalized to coefficients from a finite field extension of $\QQ$
\cite{chistov_algorithm_1986,grigoriev_factorization_1986}.

\subsection{Toricity over Real Closed Fields}\label{subsection:realclosed}
In this subsection we design an algorithm that recognizes toricity of a
semi-algebraic set over $\RR$. We refer to \cite{Basu2016} for the algorithms in
real algebraic geometry. Let $\Rplus:=\{\,z\in \RR\mid z>0\,\}$ denote the positive
orthant. Keeping the notations from Subsection~\ref{subsec:algebraicallyclosed}
consider the semi-algebraic set $T:=\{\,x\in \Rplusn\mid\text{$f_i(x)\ge 0$,
  $1\le i\le k$}\,\}$. Modify (\ref{1}) replacing $\Cstar$ by $\Rplus$ and
equalities $f_i=0$ by inequalities $f_i\ge 0$, respectively. Also keep the
first-order formula (\ref{1}). This formula can be verified by applying the
algorithms from \cite{grigoriev_solving_1988}. Clearly, (\ref{1}) is true if and
only if $T$ is a group (a torus). Thus, assume that (\ref{1}) is true. Then the
image of the coordinate-wise logarithm map $\log (T)\subseteq \RRn$ is a linear
subspace. Hence, in particular, $T$ is connected.

First, compute $m:=\dim (T)$ and produce $m$ coordinates such that the
projection $U$ of $T$ on $m$-dimensional space with these coordinates has the
full dimension $m$. Without loss of generality assume that these coordinates are
$x_1$, \dots,~$x_m$. Fix $j$, $m<j\le n$ and denote by $U_j$ the projection of
$T$ on the $(m+1)$-dimensional space with coordinates $x_1$, \dots,~$x_m$,
$x_j$. Denote by $p$ the projection map of the latter space along $x_j$ onto
$m$-dimensional space with the coordinates $x_1,\dots,x_m$. Then $p(U_j)=U$.
Since $\dim(U_j)=\dim(U)=m$, we have that $p(\log(U_j))=\log(U)=\RR^m$, and
therefore we conclude that any point of $U$ has a unique preimage of $p$ in
$U_j$. Moreover, $U_j$ is determined by a single binomial-type (analytic)
equation of the form
\begin{eqnarray}\label{2}
x_1^{t_{j1}}\cdots x_m^{t_{jm}}  x_j^{t_j}=c_j
\end{eqnarray}
for some reals $t_{j1}$, \dots,~$t_{jm}$, $t_j$, $c_j$. Since $U_j$ is a group
we get that $c_j=1$.

On the other hand, applying the algorithm from \cite{Basu2016}, one can
construct the projection $p:U_j\to U$ and conclude that equation (\ref{2}) is
algebraic, thus $t_{j1}$, \dots,~$t_{jm}$, $t_j \in \ZZ$. The algorithm also
yields $t_{j1}$, \dots,~$t_{jm}$, $t_j$. Note that without loss of generality
one can assume that they are relatively prime, otherwise divide by their
greatest common divisor. The complexity of the algorithm is again
$(Ld^{n^2})^{O(1)}$. Doing so for each $j$, $m<j\le n$, the algorithm yields
binomial equations of the form (\ref{2}) which determine $T$ uniquely. Similar
to subsection \ref{subsec:algebraicallyclosed} one can produce a $\ZZ$-basis of
the intersection of $\QQ$-linear space generated by vectors
$(t_{j1},\dots,t_{jm},0,\dots,0,t_j,0\dots,0)$ with the grid $\ZZ^n$. Then any
vector $(s_1,\dots,s_n)$ from this basis provides a binomial
$X_1^{s_1}\cdots X_n^{s_n}-1$ that vanishes on $T$. The above $\ZZ$-basis need
not be constructed, since binomials of the form (\ref{2}) already determine $T$
uniquely. We summarize the described algorithm in the following theorem.

\begin{theorem}\label{real}
  Let $f_1$, \dots,~$f_k\in \ZZ[x_1,\dots,x_n]$, $\deg(f_i)\le d$, $1\le i\le k$
  with bit-sizes of integer coefficients of $f_1$, \dots,~$f_k$ at most $L$. One
  can design an algorithm which tests whether the semi-algebraic set
  $T:=\{\,x\in \Rplusn\mid\text{$f_i(x)\ge 0$, $1\le i\le k$}\,\}$ is a group (a
  torus). In the positive case, the algorithm yields coordinates $x_{i_1}$,
  \dots,~$x_{i_m}$, where $m=\dim(T)$, such that the dimension of the projection
  of $T$ on the $m$-dimensional space with the coordinates $x_{i_1}$,
  \dots,~$x_{i_m}$ equals $m$. It furthermore yields for each
  $j\notin \{i_1,\dots,i_m\}$ a binomial equation of the form
  $x_1^{t_{j1}}\cdots x_m^{t_{jm}}\cdot x_j^{t_j}=1$ which vanishes on $T$, with
  relatively prime $t_{j1}$, \dots,~$t_{jm}$, $t_j \in \ZZ$, where
  $|t_{j1}|+\cdots+|t_{jm}|+|t_j|\le d^{O(n)}$. The complexity of the algorithm
  does not exceed $(Ld^{n^2})^{O(1)}$.
\end{theorem}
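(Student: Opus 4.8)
The plan is to follow the template of the complex case in Theorem~\ref{complex}, replacing the machinery for algebraically closed fields by its real-closed counterpart. First I would decide whether $T$ is a group at all, by checking the sentence obtained from (\ref{1}) after replacing every $\Cstar$ by $\Rplus$ and every equation $f_i=0$ by the inequality $f_i\ge 0$. This is a first-order sentence over $\RR$ with a bounded number of quantifier blocks, so Grigoriev--Vorobjov's algorithm for systems of polynomial inequalities \cite{grigoriev_solving_1988} decides it within $(Ld^{n^2})^{O(1)}$; if it fails, $T$ is not a group and we stop. From now on assume it holds.

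The structural core is that the coordinate-wise logarithm is a group isomorphism $\Rplusn\to\RRn$, so $\Lambda:=\log(T)$ is a semialgebraic subgroup of $(\RRn,+)$. A semialgebraic subgroup of $(\RRn,+)$ must be an $\RR$-linear subspace, since a nonzero discrete subgroup of a line is not semialgebraic; hence $\Lambda$ is a linear subspace of dimension $m:=\dim T$, and $T=\exp(\Lambda)$ is connected. Next, using the dimension and projection (quantifier-elimination) routines from \cite{Basu2016}, I would compute $m$ and select $m$ coordinates --- say $x_1$, \dots,~$x_m$ after renaming --- for which the projection $U$ of $T$ onto the corresponding $m$-space has full dimension $m$; equivalently, the linear projection of $\Lambda$ onto these coordinates is onto $\RR^m$.

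Now fix $j$ with $m<j\le n$, let $U_j$ be the projection of $T$ onto $x_1$, \dots,~$x_m$, $x_j$, and $\Lambda_j$ the corresponding projection of $\Lambda$. Since $\dim\Lambda_j=m$ and forgetting the $x_j$-coordinate maps $\Lambda_j$ onto $\RR^m$, this map restricts to a linear isomorphism, so $\Lambda_j$ is the graph of a linear form; exponentiating, on $U_j$ one has $x_j=x_1^{t_{j1}}\cdots x_m^{t_{jm}}$ for real $t_{j1}$, \dots,~$t_{jm}$. As $U_j$ is semialgebraic by Tarski--Seidenberg, the graph of $x\mapsto\prod_l x_l^{t_{jl}}$ can only be semialgebraic if the exponents are rational; clearing the common denominator then yields a polynomial binomial $x_1^{t_{j1}}\cdots x_m^{t_{jm}}\cdot x_j^{t_j}=c_j$ with relatively prime integers $t_{j1}$, \dots,~$t_{jm}$, $t_j$, and $c_j=1$ because $U_j$, being a projection of the group $T$, contains $\mathbf 1$. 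This is exactly the hypersurface equation of $U_j$ that the real-algebraic-geometry algorithm of \cite{Basu2016} outputs, from which the exponents are read off. Ranging over all $j$, these binomials cut out $T$ inside $\Rplusn$, because the analogous linear equations cut out $\Lambda$ inside $\RRn$.

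For the quantitative part, the degree of the hypersurface $U_j$ is bounded by $\deg T$, which by the Bezout inequality \cite{Shafarevich} is at most $d^{O(n)}$, so $|t_{j1}|+\cdots+|t_{jm}|+|t_j|\le d^{O(n)}$. The total running time is a fixed power of the costs of the subroutines --- the group test of \cite{grigoriev_solving_1988} and the dimension/projection computations of \cite{Basu2016} --- used $O(n)$ times, which stays within $(Ld^{n^2})^{O(1)}$. The step I expect to be the main obstacle is the precise algorithmic and quantitative control of the projections $U_j$: arguing that the a priori only analytic graph description arising from the linear structure of $\Lambda_j$ is genuinely cut out by a single polynomial binomial with integer exponents of the stated magnitude, and that the invoked real-algebraic-geometry algorithms deliver it within the claimed complexity bound.
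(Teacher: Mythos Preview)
Your proposal follows essentially the same route as the paper: decide the group property via the real analogue of (\ref{1}) using \cite{grigoriev_solving_1988}, pass to the logarithm to see $T$ as the exponential of a linear subspace, select $m$ coordinates giving a full-dimensional projection, and for each remaining coordinate $x_j$ read off a binomial from the $(m+1)$-dimensional projection $U_j$ via the real-algebraic-geometry routines of \cite{Basu2016}. The structure, the order of steps, and the cited tools all match.

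Two small technical slips are worth flagging. First, you assert that $\Lambda=\log(T)$ is a \emph{semialgebraic} subgroup of $(\RR^n,+)$; but $\log$ is not a semialgebraic map, so this does not follow. The conclusion that $\Lambda$ is a linear subspace is still correct: $T$ is semialgebraic, hence has finitely many connected components; the identity component $T^0$ is a connected subgroup of $\Rplusn\cong\RR^n$, so $\log(T^0)$ is a linear subspace; and the finite quotient $T/T^0$ embeds as a finite subgroup of the torsion-free group $\RR^n/\log(T^0)$, hence is trivial. Second, invoking B\'ezout for ``$\deg T$'' is loose because $T$ is defined by inequalities, not equations; the bound $|t_{j1}|+\cdots+|t_j|\le d^{O(n)}$ is better justified by the output-degree bounds of the projection/elimination algorithms of \cite{Basu2016} applied to $U_j$. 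Neither issue affects the overall strategy, and the paper's own argument is equally terse on both points.
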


Similar to Subsection \ref{subsec:algebraicallyclosed}, our results here can be
generalized to arbitrary real closed field.

\subsection{Membership in Binomial Varieties}\label{subsection:binomiality}
Let a group $G\subset (\overline{\QQ}^*)^n$ be given by binomial equations
$x_1^{a_{i,1}}\cdots x_n^{a_{i,n}}=1$, $1\le i\le k$, where $a_{i,j}\in \ZZ$,
$|a_{i,j}|\le d$. Let $v=(v_1,\dots,v_n)\in \QQ^n$ be a point such that the
absolute values of the numerators and denominators of $v_1$, \dots,~$v_n$ do not
exceed $M$. We design an algorithm which tests whether $v\in \overline G$.
Recall that $\overline G$ is a binomial variety, and it is a toric variety when
$G$ is irreducible.

\begin{theorem}\label{th:membership}
  There is an algorithm which tests whether a point $v$  belongs to a
  binomial variety $\overline G$ with its complexity bounded by 
  \begin{itemize}
  \item[(i)] $(k\cdot \log M \cdot (dn)^n)^{O(1)}$ and by
  \item[(ii)] $(k\cdot M\cdot n\cdot \log d)^{O(1)}$.
  \end{itemize}
\end{theorem}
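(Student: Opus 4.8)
The plan is to test membership $v \in \overline{G}$ by reducing it to a linear-algebra question over $\ZZ$. Since $G$ is defined by the binomials $x_1^{a_{i,1}}\cdots x_n^{a_{i,n}}=1$ for $1 \le i \le k$, and since $\overline{G}$ is a binomial variety with $\overline{G} = V(I(G))$, a point $v$ with nonzero rational coordinates lies in $\overline{G}$ if and only if it satisfies every binomial in $I(G) \cap \QQ[X^{\pm 1}]$. By Proposition~2.3(a) and Theorem~2.1(b) of \cite{eisenbud-sturmfels-binomials} (as invoked in the proof of Proposition~\ref{prop:groupstructure}), the saturated ideal is the lattice ideal $I_L$ of the lattice $L = \{\, u \in \ZZ^n \mid v^{u} = 1 \text{ for all } v \in G \,\}$, and $L$ is precisely the saturation (in $\ZZ^n$) of the sublattice $L_0$ spanned by the rows $a_i = (a_{i,1},\dots,a_{i,n})$. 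Thus $v \in \overline G$ if and only if $v^{u} = 1$ for every $u \in L$, equivalently for every $u$ in a $\ZZ$-basis of $L$. So the first step is to compute a $\ZZ$-basis of the saturation $L$ of $L_0$; this is done with the algorithms of \cite{frumkin_application_1976,dumas_efficient_2001} in polynomial time in the input size, producing a basis whose entries are bounded by $(dn)^{O(n)}$ via Hadamard-type bounds on the relevant minors.

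Having a basis $u^{(1)},\dots,u^{(r)}$ of $L$, membership becomes the finitely many tests $v^{u^{(j)}} = 1$ in $\QQ^*$. Here each $u^{(j)} \in \ZZ^n$ and $v \in \QQ^n$ with numerators and denominators bounded by $M$, so $v^{u^{(j)}}$ is a rational number, and we must decide whether it equals $1$. For bound (i), the key point is that the exponent entries are bounded by $(dn)^{O(n)}$, so the bit-length of the integers $v^{u^{(j)}}$ (written as a fraction) is $O((dn)^{O(n)} \cdot \log M)$ in the worst case; deciding whether such a rational equals $1$ can be done by working modulo several small primes (or by repeated squaring with controlled intermediate sizes), giving the bound $(k \cdot \log M \cdot (dn)^n)^{O(1)}$. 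For bound (ii), instead of exponentiating directly, one factors each $v_\ell$ and each prime appearing in $M$: since the numerators and denominators are at most $M$, trial division (or any elementary factoring) over the primes up to $M$ runs in time $(k \cdot M \cdot n \cdot \log d)^{O(1)}$, and then $v^{u^{(j)}} = 1$ reduces to checking that, for each prime $p \le M$, the linear form $\sum_\ell u^{(j)}_\ell \cdot \operatorname{ord}_p(v_\ell)$ vanishes over $\ZZ$ — a trivial computation once the $p$-adic valuations are tabulated. Taking the better of the two gives the stated complexity.

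The main obstacle, and the place where the two alternative bounds come from, is controlling the size of the quantities $v^{u^{(j)}}$: the exponents $u^{(j)}$ can be exponentially large in $n$, so a naive evaluation of $v^{u^{(j)}}$ produces integers of potentially exponential bit-length in $n$, which is too expensive for bound (ii) and only barely acceptable for bound (i). Bound (i) absorbs this by keeping everything in terms of $\log M$ and the exponent bound $(dn)^n$, accepting the exponential-in-$n$ factor that is already present in the lattice basis. Bound (ii) sidesteps large exponents entirely by passing to prime valuations, paying instead a factor $M$ (not $\log M$) for factoring but avoiding any dependence on the magnitude of the exponents beyond $\log d$ and the dimension $n$. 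A secondary point requiring care is the saturation step itself: one must argue that $I(G)$ localized at $x_1\cdots x_n$ is the lattice ideal of the \emph{saturated} lattice $L$ rather than merely of $L_0$, which is exactly the content of the cited structure theory for binomial ideals and uses that $G$ is a group so that all the partial-character values are $1$, as already exploited in Proposition~\ref{prop:groupstructure}; constructing a basis of $L$ from a basis of $L_0$ is precisely the two-stage application of \cite{frumkin_application_1976,dumas_efficient_2001} used in Subsection~\ref{subsec:algebraicallyclosed}.
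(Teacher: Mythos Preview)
Your argument has a genuine gap: it treats only the case $v\in(\QQ^*)^n$, i.e.\ all coordinates nonzero. In that case $\overline G\cap(\Cstar)^n=G$ (because $G$ is Zariski-closed in $(\Cstar)^n$), so $v\in\overline G$ is equivalent to $v\in G$, which in turn is just the original system $v^{a_i}=1$, $1\le i\le k$; no lattice saturation or basis change is needed at all, and the complexity is trivially within the stated bounds. The actual content of the theorem, and of the paper's proof, is the case where some coordinates of $v$ vanish. There one must decide whether $v$ lies in the Zariski closure $\overline G$ without lying in $G$. The paper handles this via \cite[Claim~5.3]{grigoriev_milman2012}: after permuting coordinates so that $v=(0,\dots,0,v_{s+1},\dots,v_n)$ with $v_{s+1}\cdots v_n\neq0$, membership $v\in\overline G$ is equivalent to the existence of (a) a one-parameter subgroup $\{(t^{b_1},\dots,t^{b_s},1,\dots,1)\}\subseteq G$ with all $b_j>0$, which is a linear-programming feasibility question on the exponent matrix, together with (b) a point $(u_1,\dots,u_s,v_{s+1},\dots,v_n)\in G$, which reduces via the Smith normal form of the $k\times s$ block $(a_{i,j})_{j\le s}$ to checking finitely many multiplicative relations among the $v_{s+1},\dots,v_n$ with integer exponents bounded by minors of that block. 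It is only at this last verification step that the dichotomy between bounds (i) and (ii) arises, exactly as you describe (binary arithmetic versus prime factorisation of the $v_j$). Your write-up omits steps (a) and (b) entirely.

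There is also a mathematical error in the part you do treat. You assert that the lattice $L=\{u\in\ZZ^n: g^u=1\text{ for all }g\in G\}$ is the saturation of $L_0=\sum_i\ZZ a_i$. Over $\overline\QQ$ (or $\CC$) this is false: Pontryagin-type duality between $\ZZ^n$ and $(\Cstar)^n$ is perfect on sublattices, so the double annihilator of $L_0$ is $L_0$ itself. For instance, with $n=1$ and $L_0=2\ZZ$ one has $G=\{\pm1\}$ and $L=2\ZZ$, not $\ZZ$. Using the saturated lattice in your test would wrongly exclude $v=-1$ from $\overline G$. The passage from $L_0$ to its saturation describes the identity component (the toric part) of $G$, not $G$.
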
 

\begin{proof}
  \sloppy Permuting the coordinates, assume w.l.o.g.~that
  $v=(0,\dots,0,v_{s+1},\dots,v_n)$, where $v_{s+1}\cdots v_n \neq 0$. Due to
  Claim 5.3 in \cite{grigoriev_milman2012}, $v\in \overline G$ if and only if
  there exist a point $u=(u_1,\dots,u_s,v_{s+1},\dots,v_n)\in G$ and positive
  $0<b_1,\dots,b_s\in \ZZ$ such that
\begin{eqnarray}\label{3}
v=\lim_{t\to 0}(u_1\cdot t^{b_1},\dots,u_s\cdot t^{b_s},v_{s+1},\dots,v_n)
\end{eqnarray}
and for all $t\neq 0$, i.e., a shift of a one-parametric subgroup, we have that
$(u_1\cdot t^{b_1},\dots,u_s\cdot t^{b_s},v_{s+1},\dots,v_n) \in G$. Then
(\ref{3}) is equivalent to the existence of $u\in G$ and a one-parametric
subgroup $\{\,(t^{b_1},\dots,t^{b_s},1,\dots,1) \mid t\neq 0\,\} \subseteq G$.
The existence of the latter is equivalent to the existence of a non-negative
vector $(b_1,\dots,b_s,0,\dots,0)$ orthogonal to the vectors
$(a_{i,1},\dots,a_{i,n})$, $1\le i\le k$. This can be checked by means of linear
programming.

The existence of a point $u\in G$ satisfying (\ref{3}) is equivalent to the
existence of non-zero $u_1$, \dots,~$u_s\in \overline{\QQ}^*$ satisfying the
binomial equations
\begin{eqnarray}\label{4}
  u_1^{a_{i,1}}\cdots u_s^{a_{i,s}}=v_{s+1}^{-a_{i,s+1}}\cdots v_n^{-a_{i,n}},\quad 1\le i\le k.
\end{eqnarray} 
One can apply the algorithm in \cite{GrigorievWeber2012a} to this system and
transform the $k\times s$ submatrix $A=(a_{i,j})$, $1\le i\le k$, $1\le j\le s$
to its Smith form. Then solvability of (\ref{4}) is equivalent to that the
right-hand side of (\ref{4}) fulfils (at most $k$) relations of the form
\begin{eqnarray}\label{5}
\prod_{1\le i\le k} (v_{s+1}^{-a_{i,s+1}}\cdots v_n^{-a_{i,n}})^{c_i} =1
\end{eqnarray}
for some integers $c_i$ being suitable minors of matrix $A$, hence
$|c_i|\le (ds)^{O(s)}$ by Hadamard's inequality.

One can verify relations (\ref{5}) using the binary form of the numerators and
denominators of $v_{s+1}$, \dots,~$v_n$. This leads to the complexity bound (i).
Alternatively, one can factorize the numerators and denominators of $v_{s+1}$,
\dots,~$v_n$ and execute calculations in terms of exponents of their prime
factors which leads to the complexity bound (ii). This completes the
verification of (\ref{3}) and the description of the algorithm.
\end{proof}

One can extend the complexity bound (i) for $v_j\in \overline{\QQ}$,
$1\le j\le n$ being algebraic numbers. In this case $\log M$ plays the role of
the bit-size of the representation of $v$.

\section{Conclusions and Future Work}\label{sec:conclusions}
We have taken a geometric approach to studying steady state varieties,
which---besides significant theoretical results---generated comprehensive
empirical data from computations on 129 networks from BioModels repository. We
are not aware of any comparable systematic large scale symbolic computations on
those data in the literature. We were indeed surprised by the success rate of
Gröbner basis and real quantifier elimination techniques with input sizes up to
71 variables. We find this most encouraging and believe that robust and
supported software tools for systems biology and medicine that include symbolic
computation components are not out of reach.

It was important to learn that real methods do not significantly fall behind
complex methods efficiency-wise. After all, chemical reaction network theory
takes place in the interior of the positive first orthant. In a way, our
consideration of $V^*$ in favor of $V$ here marks a first step in that direction
by looking at points in the variety rather than polynomials in the ideal. Only
real methods allow to go further and, e.g., identify prime components whose
varieties reach into the first orthant.

Our work here gives a number of quite concrete challenges to be considered in
subsequent work. To start with, one must complete the step from ``coset'' to
``shifted toric'' in practical software by producing suitable prime
decompositions beyond decompositions over $\QQ$. As mentioned above, real
methods allow to explicitly refer to the interior of the first orthant, and our
framework and the first-order descriptions we use should be refined in this
direction. On the complex side, one should also test suitable elimination
methods with the logic descriptions we developed here.

It is noteworthy that all quantifier elimination problems considered here were
decision problems, even without quantifier alternation. On the one hand, this
allows the application of methods and tools from Satisfiability Modulo Theories
Solving \cite{NieuwenhuisEtAl:06a,AbrahamAbbott:2016b}. On the other hand, it
shows that we are not yet using the full power of quantifier elimination
methods. One could, e.g., leave a subset of reaction rates parametric and study
invariance of shifted toricity under variation of those reaction rates.

Our input of 129 models considered here is in a way complete: We took all
available models from BioModels/ODEbase for which we could straightforwardly
produce polynomial vector fields. So far we did not consider systems with
rational vector fields. Such systems come with interesting challenges on the
algebraic side: While the variety $V$ is blind for the presence or non-presence
of polynomials factors from the denominators, those factors can still affect
shifted toricity $V^*$.


\subsection*{Acknowledgments}
This work has been supported by the bilateral project
ANR-17-CE40-0036/DFG-391322026 SYMBIONT
\cite{BoulierFages:18a,BoulierFages:18b}. The first author is grateful to grant
RSF-16-11-10075 and to MCCME for the inspiring atmosphere. Stephen Forrest at
Maplesoft and Arthur Norman at the University of Cambridge greatly helped with
Maple- and Reduce-related technical issues, respectively. Christoph Lüders at
the University of Bonn greatly supported the project with discussions around and
generation of suitable input data.

\appendix
\section{Computations on 129 Models from the BioModels
  Repository}\label{app:a}
We conducted our computations on a 2.40~GHz Intel Xeon E5-4640 with 512 GB RAM
and 32 physical cores providing 64 CPUs via hyper-threading. For
parallelization of the jobs for the individual models we used GNU Parallel
\cite{Tange:11a}. Results are stored in an Sqlite3 database file, which
contains considerably more information than can be presented here in print. It
is available as an ancillary file \texttt{computations.db} with this arXiv
preprint. Beyond our own computations the database imports data from BioModels
and ODEbase, so that our models and the history of our input can be reliably
tracked. For instance, we store the mappings between our variables and the
original species names and even the original SBML file with each model.

Among the information imported from ODEbase there is a binary flag indicating
whether or not a model has mass action kinetics \cite{voit2015150} according
to the following criterion \cite[Section 2.1.2]{feinberg-book}.
\begin{definition}[Mass Action Test]\label{def:ma}
  A system is considered a mass action system when the kinetic law is made up
  of the product of the concentrations of the reactant species to the power of
  their respective stoichiometry times a constant
\end{definition}

\subsection{Classifications of the Original Systems}\label{app:cb}
We start with the classification over $\CC$ and $\RR$ of the original, not
decomposed, systems using algorithms from Subsection~\ref{se:comp-c} and
Subsection~\ref{se:comp-r}, respectively. For comments on the tables from a
theoretical point of view compare those sections.

Beyond the data presented here we generally save the sets $X$, $\hat X$.
Furthermore, over $\CC$ we save the computed Gröbner bases $G$, $\hat G$,
$\tilde G$ and their term orders, and over $\RR$ we save witnesses for
$V_\RR^*\neq\emptyset$.

Recall that over $\CC$, $\hat X$ describes only compatible projection
spaces, while over $\RR$ it describes the canonical projection space.

\begin{remark}\label{rem:heuristic}
  Let $F \subseteq \QQ[X]$. Let $Y_\RR$, $Y_\CC \subseteq X$ describe the canonical projection
  spaces with respect to $V_\RR(F)$, $V_\CC(F)$, respectively. Let furthermore
  $\hat X$ describe any compatible projection space with respect to $V_\CC(F)$.
  Then $Y_\RR \subseteq Y_\CC \subseteq \hat X$ and thus
  $|Y_\RR| \leq |Y_\CC| \leq |\hat X|$. Using the fact that all
  sets are finite, $|\hat X| = |Y_\RR|$ implies $\hat X = Y_\CC$.
\end{remark}

Recall that $\hat X$ as computed in Algorithm~\ref{alg:pacc} describes only a
compatible projection space with respect to the complex variety, while 
$\hat X$ as computed in Algorithm~\ref{alg:pacr} describes the canonical
projection space with respect to the real variety. The idea with
Algorithm~\ref{alg:pacc} was to have a heuristic method to
efficiently obtain a description of the canonical projection space also there.
Remark~\ref{rem:heuristic} tells us that whenever we find equal numbers for
$|\hat X|$ over $\CC$ and $\RR$ in Table~\ref{tab:cbc} below, then that heuristic
method was successful. This is the case with all models where the computation
terminated over both $\CC$ and $\RR$ except for the models 243 and 289.

Since we obtain $\tXX$ in contrast to $\tOO$ with model 289, we know that
$\hat X$ describes a canonical projection space over $\CC$ also there. With
model 243, our obtained $\hat X$ indeed does not describe the canonical
projection space over $\CC$. Inspection of the computation shows that the
Gröbner basis $G$ in l.1 of Algorithm~\ref{alg:pacc} contains $x_6^2$. An
improved heuristic method could check for powers of variables occurring in
$G$. However, Example~\ref{ex:heu} shows that this would still be only
heuristic.

\begin{longtable}{@{\extracolsep{\fill}}|rr|rrrr|rrrr|}
  \captionabove{Applying $\ProjectAndClassify_K$ over $\CC$ and $\RR$. Model
  numbers \textit{nnn} stand for \texttt{BIOMD0000000\itshape{nnn}}. ``m/a''
  indicates mass action kinetics (Definition~\ref{def:ma}). $|X|$ and
  $|\hat X|$ are numbers of variables before and after projection,
  respectively. $\gamma$ is $\tGG$ for group, $\tCC$ fo coset, $\tOO$ for
  empty set, and $\tXX$ else; lower case letters indicate projection. Time
  columns give total CPU times in seconds or ``$\bot$'' for a timeout with a
  limit of 6 hours per model.\label{tab:cbc}}\\
  \hline
  & & \multicolumn{4}{c|}{Algorithm~\ref{alg:pacc} ($\CC$)}
  & \multicolumn{4}{c|}{Algorithm~\ref{alg:pacr} ($\RR$)}
  \\
  model & m/a & $|X|$ & $|\hat X|$
        & $\gamma$ & time (s) & $|X|$ & $|\hat X|$
              & $\gamma$ & time (s)\\
  \hline
  \endfirsthead
  \captionabove{Applying $\ProjectAndClassify_K$ over $\CC$ and $\RR$
  (continued)}\\
  & & \multicolumn{4}{c|}{Algorithm~\ref{alg:pacc} ($\CC$)}
  & \multicolumn{4}{c|}{Algorithm~\ref{alg:pacr} ($\RR$)}
  \\
  model & m/a & $|X|$ & $|\hat X|$
        & $\gamma$ & time (s) & $|X|$ & $|\hat X|$
              & $\gamma$ & time (s)\\
  \hline
  \endhead
  001  &  1  &  12  &  12  & $ \mathtt{C} $ &  10.32  &  12  &  12  & $ \mathtt{C} $ &  3.37\\
  002  &  1  &    &    &    &  $\bot$  &    &    &    &  $\bot$\\
  009  &  0  &  22  &  22  & $ \mathtt{C} $ &  85.96  &  22  &  22  & $ \mathtt{C} $ &  21.26\\
  011  &  0  &  22  &  22  & $ \mathtt{C} $ &  150.21  &  22  &  22  & $ \mathtt{C} $ &  11.20\\
  026  &  1  &  11  &  11  & $ \mathtt{X} $ &  3.99  &  11  &  11  & $ \mathtt{X} $ &  1.12\\
  028  &  1  &  16  &  16  & $ \mathtt{X} $ &  66.60  &    &    &    &  $\bot$\\
  030  &  1  &  18  &  18  & $ \mathtt{X} $ &  57.84  &    &    &    &  $\bot$\\
  035  &  0  &  9  &  9  & $ \mathtt{X} $ &  21.11  &  9  &  9  & $ \mathtt{X} $ &  0.25\\
  038  &  0  &    &    &    &  $\bot$  &    &    &    &  $\bot$\\
  040  &  0  &  3  &  3  & $ \mathtt{X} $ &  4.65  &  3  &  3  & $ \mathtt{X} $ &  0.05\\
  046  &  0  &    &    &    &  $\bot$  &    &    &    &  $\bot$\\
  050  &  0  &  9  &  0  & $ \mathtt{o} $ &  0.09  &  9  &  0  & $ \mathtt{o} $ &  0.02\\
  052  &  0  &  6  &  0  & $ \mathtt{o} $ &  0.09  &  6  &  0  & $ \mathtt{o} $ &  0.01\\
  057  &  0  &  6  &  6  & $ \mathtt{C} $ &  1.53  &  6  &  6  & $ \mathtt{C} $ &  0.11\\
  069  &  0  &  10  &  10  & $ \mathtt{X} $ &  14.09  &    &    &    &  $\bot$\\
  072  &  0  &  7  &  3  & $ \mathtt{o} $ &  0.63  &  7  &  3  & $ \mathtt{o} $ &  0.03\\
  077  &  0  &  7  &  7  & $ \mathtt{C} $ &  5.82  &  7  &  7  & $ \mathtt{C} $ &  0.14\\
  080  &  0  &  10  &  8  & $ \mathtt{o} $ &  3.72  &  10  &  8  & $ \mathtt{o} $ &  0.09\\
  082  &  0  &  10  &  8  & $ \mathtt{o} $ &  2.44  &  10  &  8  & $ \mathtt{o} $ &  0.14\\
  085  &  0  &    &    &    &  $\bot$  &    &    &    &  $\bot$\\
  086  &  0  &    &    &    &  $\bot$  &    &    &    &  $\bot$\\
  091  &  0  &  14  &  0  & $ \mathtt{o} $ &  0.05  &  14  &  0  & $ \mathtt{o} $ &  0.03\\
  092  &  0  &  3  &  3  & $ \mathtt{C} $ &  1.87  &  3  &  3  & $ \mathtt{C} $ &  0.04\\
  099  &  0  &  7  &  7  & $ \mathtt{C} $ &  2.48  &  7  &  7  & $ \mathtt{C} $ &  0.34\\
  101  &  1  &  6  &  6  & $ \mathtt{X} $ &  1.82  &  6  &  6  & $ \mathtt{X} $ &  0.09\\
  102  &  0  &  13  &  13  & $ \mathtt{X} $ &  226.67  &    &    &    &  $\bot$\\
  103  &  0  &  17  &  17  & $ \mathtt{X} $ &  20238.89  &    &    &    &  $\bot$\\
  104  &  0  &  4  &  4  & $ \mathtt{O} $ &  1.36  &  4  &  4  & $ \mathtt{O} $ &  0.01\\
  105  &  0  &  26  &  0  & $ \mathtt{o} $ &  0.24  &  26  &  0  & $ \mathtt{o} $ &  0.51\\
  108  &  0  &    &    &    &  $\bot$  &    &    &    &  $\bot$\\
  122  &  0  &  12  &  12  & $ \mathtt{X} $ &  8639.19  &    &    &    &  $\bot$\\
  123  &  0  &    &    &    &  $\bot$  &    &    &    &  $\bot$\\
  125  &  0  &  5  &  5  & $ \mathtt{X} $ &  2.26  &  5  &  5  & $ \mathtt{X} $ &  0.05\\
  137  &  0  &  21  &  20  & $ \mathtt{x} $ &  50.16  &  21  &  20  & $ \mathtt{x} $ &  2.91\\
  147  &  0  &    &    &    &  $\bot$  &    &    &    &  $\bot$\\
  150  &  0  &  4  &  4  & $ \mathtt{C} $ &  3.43  &  4  &  4  & $ \mathtt{C} $ &  0.04\\
  152  &  0  &    &    &    &  $\bot$  &    &    &    &  $\bot$\\
  153  &  0  &    &    &    &  $\bot$  &    &    &    &  $\bot$\\
  156  &  0  &  3  &  3  & $ \mathtt{C} $ &  0.96  &  3  &  3  & $ \mathtt{C} $ &  0.02\\
  158  &  0  &  3  &  3  & $ \mathtt{X} $ &  1.00  &  3  &  3  & $ \mathtt{X} $ &  0.02\\
  159  &  0  &  3  &  3  & $ \mathtt{C} $ &  2.49  &  3  &  3  & $ \mathtt{C} $ &  0.02\\
  163  &  0  &  16  &  16  & $ \mathtt{X} $ &  9.09  &  16  &  16  & $ \mathtt{X} $ &  1.75\\
  173  &  1  &    &    &    &  $\bot$  &    &    &    &  $\bot$\\
  178  &  0  &  4  &  0  & $ \mathtt{o} $ &  0.06  &  4  &  0  & $ \mathtt{o} $ &  0.00\\
  186  &  1  &  10  &  10  & $ \mathtt{O} $ &  4.63  &  10  &  10  & $ \mathtt{O} $ &  0.14\\
  187  &  1  &  10  &  10  & $ \mathtt{O} $ &  9.85  &  10  &  10  & $ \mathtt{O} $ &  0.12\\
  188  &  1  &  10  &  0  & $ \mathtt{o} $ &  0.10  &  10  &  0  & $ \mathtt{o} $ &  0.02\\
  189  &  1  &  7  &  0  & $ \mathtt{o} $ &  0.03  &  7  &  0  & $ \mathtt{o} $ &  0.01\\
  193  &  1  &  8  &  8  & $ \mathtt{X} $ &  7.11  &  8  &  8  & $ \mathtt{X} $ &  0.13\\
  194  &  1  &  5  &  5  & $ \mathtt{X} $ &  7.07  &  5  &  5  & $ \mathtt{X} $ &  0.03\\
  197  &  0  &  5  &  5  & $ \mathtt{X} $ &  9.93  &  5  &  5  & $ \mathtt{X} $ &  0.15\\
  198  &  1  &  9  &  5  & $ \mathtt{c} $ &  0.79  &  9  &  5  & $ \mathtt{c} $ &  0.06\\
  199  &  0  &  8  &  8  & $ \mathtt{C} $ &  2.79  &  8  &  8  & $ \mathtt{C} $ &  0.18\\
  200  &  0  &    &    &    &  $\bot$  &    &    &    &  $\bot$\\
  205  &  0  &    &    &    &  $\bot$  &    &    &    &  $\bot$\\
  220  &  0  &  56  &  46  & $ \mathtt{o} $ &  5.27  &  56  &  46  & $ \mathtt{o} $ &  88.15\\
  226  &  0  &  14  &  14  & $ \mathtt{X} $ &  14.19  &    &    &    &  $\bot$\\
  227  &  0  &  39  &  0  & $ \mathtt{o} $ &  0.14  &  39  &  0  & $ \mathtt{o} $ &  0.19\\
  229  &  0  &  7  &  7  & $ \mathtt{C} $ &  1.33  &  7  &  7  & $ \mathtt{C} $ &  0.20\\
  230  &  0  &  24  &  23  & $ \mathtt{o} $ &  47.72  &    &    &    &  $\bot$\\
  233  &  0  &  2  &  2  & $ \mathtt{X} $ &  1.03  &  2  &  2  & $ \mathtt{X} $ &  0.01\\
  243  &  0  &  19  &  12  & $ \mathtt{o} $ &  1.32  &  19  &  11  & $ \mathtt{o} $ &  5.21\\
  257  &  1  &  8  &  8  & $ \mathtt{X} $ &  2.65  &    &    &    &  $\bot$\\
  259  &  0  &  16  &  0  & $ \mathtt{o} $ &  0.38  &  16  &  0  & $ \mathtt{o} $ &  0.17\\
  260  &  0  &  16  &  0  & $ \mathtt{o} $ &  0.09  &  16  &  0  & $ \mathtt{o} $ &  0.16\\
  261  &  0  &  16  &  0  & $ \mathtt{o} $ &  0.09  &  16  &  0  & $ \mathtt{o} $ &  0.17\\
  262  &  0  &  9  &  1  & $ \mathtt{c} $ &  3.18  &  9  &  1  & $ \mathtt{c} $ &  0.05\\
  263  &  0  &  9  &  1  & $ \mathtt{c} $ &  1.11  &  9  &  1  & $ \mathtt{c} $ &  0.05\\
  264  &  0  &  11  &  9  & $ \mathtt{c} $ &  11.45  &  11  &  9  & $ \mathtt{c} $ &  0.37\\
  267  &  0  &  3  &  0  & $ \mathtt{o} $ &  0.09  &  3  &  0  & $ \mathtt{o} $ &  0.00\\
  270  &  0  &    &    &    &  $\bot$  &    &    &    &  $\bot$\\
  271  &  1  &  4  &  1  & $ \mathtt{c} $ &  0.20  &  4  &  1  & $ \mathtt{c} $ &  0.01\\
  272  &  1  &  4  &  1  & $ \mathtt{c} $ &  0.27  &  4  &  1  & $ \mathtt{c} $ &  0.01\\
  281  &  0  &  32  &  31  & $ \mathtt{o} $ &  1.20  &  32  &  31  & $ \mathtt{o} $ &  24.50\\
  282  &  1  &  3  &  3  & $ \mathtt{O} $ &  0.16  &  3  &  3  & $ \mathtt{O} $ &  0.01\\
  283  &  1  &  3  &  2  & $ \mathtt{o} $ &  0.28  &  3  &  2  & $ \mathtt{o} $ &  0.01\\
  286  &  0  &    &    &    &  $\bot$  &    &    &    &  $\bot$\\
  287  &  1  &  20  &  20  & $ \mathtt{X} $ &  33.76  &    &    &    &  $\bot$\\
  289  &  0  &  4  &  4  & $ \mathtt{X} $ &  0.84  &  4  &  0  & $ \mathtt{o} $ &  0.01\\
  292  &  0  &  2  &  0  & $ \mathtt{o} $ &  0.16  &  2  &  0  & $ \mathtt{o} $ &  0.01\\
  306  &  0  &  2  &  2  & $ \mathtt{C} $ &  0.25  &  2  &  2  & $ \mathtt{C} $ &  0.01\\
  307  &  0  &  2  &  0  & $ \mathtt{o} $ &  0.14  &  2  &  0  & $ \mathtt{o} $ &  0.01\\
  310  &  0  &  1  &  0  & $ \mathtt{o} $ &  0.02  &  1  &  0  & $ \mathtt{o} $ &  0.00\\
  311  &  0  &  1  &  0  & $ \mathtt{o} $ &  0.02  &  1  &  0  & $ \mathtt{o} $ &  0.00\\
  312  &  0  &  2  &  0  & $ \mathtt{o} $ &  0.02  &  2  &  0  & $ \mathtt{o} $ &  0.00\\
  314  &  0  &  10  &  7  & $ \mathtt{o} $ &  0.73  &  10  &  7  & $ \mathtt{o} $ &  0.07\\
  315  &  1  &    &    &    &  $\bot$  &    &    &    &  $\bot$\\
  321  &  0  &  3  &  0  & $ \mathtt{o} $ &  0.02  &  3  &  0  & $ \mathtt{o} $ &  0.00\\
  332  &  0  &    &    &    &  $\bot$  &  70  &  64  & $ \mathtt{o} $ &  395.97\\
  333  &  0  &  49  &  43  & $ \mathtt{o} $ &  573.10  &  49  &  43  & $ \mathtt{o} $ &  120.98\\
  334  &  0  &    &    &    &  $\bot$  &  69  &  63  & $ \mathtt{o} $ &  271.40\\
  335  &  1  &  29  &  28  & $ \mathtt{o} $ &  97.52  &    &    &    &  $\bot$\\
  344  &  0  &    &    &    &  $\bot$  &    &    &    &  $\bot$\\
  357  &  1  &  8  &  4  & $ \mathtt{o} $ &  0.20  &  8  &  4  & $ \mathtt{o} $ &  0.10\\
  359  &  0  &  8  &  6  & $ \mathtt{o} $ &  0.21  &  8  &  6  & $ \mathtt{o} $ &  0.08\\
  360  &  0  &  8  &  6  & $ \mathtt{o} $ &  0.93  &  8  &  6  & $ \mathtt{o} $ &  0.07\\
  361  &  0  &  8  &  7  & $ \mathtt{o} $ &  1.04  &  8  &  7  & $ \mathtt{o} $ &  0.04\\
  362  &  1  &  29  &  28  & $ \mathtt{o} $ &  1304.71  &  29  &  28  & $ \mathtt{o} $ &  429.42\\
  363  &  0  &  3  &  0  & $ \mathtt{o} $ &  0.07  &  3  &  0  & $ \mathtt{o} $ &  0.00\\
  364  &  1  &  12  &  10  & $ \mathtt{o} $ &  1.03  &  12  &  10  & $ \mathtt{o} $ &  0.54\\
  365  &  1  &  30  &  24  & $ \mathtt{o} $ &  212.01  &    &    &    &  $\bot$\\
  407  &  0  &    &    &    &  $\bot$  &    &    &    &  $\bot$\\
  413  &  1  &  5  &  5  & $ \mathtt{X} $ &  1.05  &  5  &  5  & $ \mathtt{X} $ &  0.06\\
  416  &  0  &  32  &  32  & $ \mathtt{X} $ &  25.87  &  32  &  32  & $ \mathtt{X} $ &  4.17\\
  430  &  0  &    &    &    &  $\bot$  &  23  &  23  & $ \mathtt{X} $ &  13.99\\
  431  &  0  &    &    &    &  $\bot$  &  27  &  27  & $ \mathtt{X} $ &  31.90\\
  439  &  0  &  20  &  20  & $ \mathtt{X} $ &  52.51  &  20  &  20  & $ \mathtt{X} $ &  15.42\\
  459  &  0  &  3  &  3  & $ \mathtt{C} $ &  0.50  &  3  &  3  & $ \mathtt{C} $ &  0.09\\
  460  &  0  &  3  &  3  & $ \mathtt{X} $ &  0.52  &  3  &  3  & $ \mathtt{X} $ &  0.04\\
  475  &  0  &  22  &  20  & $ \mathtt{o} $ &  3.26  &  22  &  20  & $ \mathtt{o} $ &  7.92\\
  478  &  0  &  29  &  28  & $ \mathtt{o} $ &  33.38  &  29  &  28  & $ \mathtt{o} $ &  16.85\\
  479  &  0  &    &    &    &  $\bot$  &    &    &    &  $\bot$\\
  483  &  0  &  6  &  6  & $ \mathtt{X} $ &  1.26  &  6  &  6  & $ \mathtt{X} $ &  0.11\\
  484  &  0  &  1  &  1  & $ \mathtt{C} $ &  0.14  &  1  &  1  & $ \mathtt{C} $ &  0.01\\
  485  &  0  &  1  &  1  & $ \mathtt{X} $ &  1.22  &  1  &  1  & $ \mathtt{X} $ &  0.12\\
  486  &  1  &  2  &  2  & $ \mathtt{C} $ &  0.63  &  2  &  2  & $ \mathtt{C} $ &  0.02\\
  487  &  1  &  6  &  6  & $ \mathtt{C} $ &  0.69  &  6  &  6  & $ \mathtt{C} $ &  0.16\\
  491  &  1  &  57  &  57  & $ \mathtt{G} $ &  12.21  &    &    &    &  $\bot$\\
  492  &  1  &  52  &  52  & $ \mathtt{G} $ &  17.96  &    &    &    &  $\bot$\\
  504  &  0  &    &    &    &  $\bot$  &    &    &    &  $\bot$\\
  519  &  0  &  3  &  3  & $ \mathtt{C} $ &  4.71  &  3  &  3  & $ \mathtt{C} $ &  0.18\\
  546  &  0  &  3  &  0  & $ \mathtt{o} $ &  0.07  &  3  &  0  & $ \mathtt{o} $ &  0.00\\
  559  &  0  &  71  &  0  & $ \mathtt{o} $ &  0.32  &  71  &  0  & $ \mathtt{o} $ &  0.91\\
  581  &  0  &    &    &    &  $\bot$  &  25  &  25  & $ \mathtt{X} $ &  6.00\\
  584  &  0  &  9  &  9  & $ \mathtt{C} $ &  0.60  &  9  &  9  & $ \mathtt{C} $ &  0.13\\
  619  &  1  &  8  &  0  & $ \mathtt{o} $ &  0.07  &  8  &  0  & $ \mathtt{o} $ &  0.01\\
  629  &  0  &  5  &  5  & $ \mathtt{C} $ &  0.41  &  5  &  5  & $ \mathtt{C} $ &  0.08\\
  637  &  1  &  12  &  12  & $ \mathtt{X} $ &  492.76  &    &    &    &  $\bot$\\
  647  &  1  &  11  &  11  & $ \mathtt{X} $ &  6.03  &  11  &  11  & $ \mathtt{X} $ &  0.37\\
  \hline
\end{longtable}

There are 15 models where the complex classification in Table~\ref{tab:cbc}
succeeded but the real classification timed out: 028, 030, 069, 102, 103, 122,
226, 230, 257, 287, 335, 365, 491, 492, 637. Among those, models 491 and 492
have classification $\tGG$ and models 230, 335, and 265 have classification
$\too$ over $\CC$, from which we can conclude that they have the same
classification over $\RR$, respectively.
Vice versa, there are 5 models where real classification succeeded but complex
classification timed out: 332, 334, 430, 431, 581.
%
There is one single model where we succeeded over both $\CC$ and $\RR$ but
obtained different classifications: model 289 has $\tXX$ over $\CC$ but $\too$
over $\RR$.

Table~\ref{tab:statistics-b} collects some statistical information about the
computations. Figure~\ref{fig:analysis-b} provides some analysis of the
computation times. Notice that many computations finish quite quickly.

\begin{table}[!h]
  \centering
  \captionabove{Statistical information about the computations in
    Table~\ref{tab:cbc}.\label{tab:statistics-b}}
  \begin{tabular}{lrr}
    \hline
    & Algorithm~\ref{alg:pacc} ($\CC$) & Algorithm~\ref{alg:pacr} ($\CC$)\\
    \hline
    time limit & 6\,h & 6\,h\\
    \#\,models & 129 & 129\\
    \#\,successful computations  & 104 & 94\\
    success rate & 80.62\% & 72.87\%\\
    median(time) & 1.33\,s & 0.09\,s\\
    \hline
  \end{tabular}
\end{table}

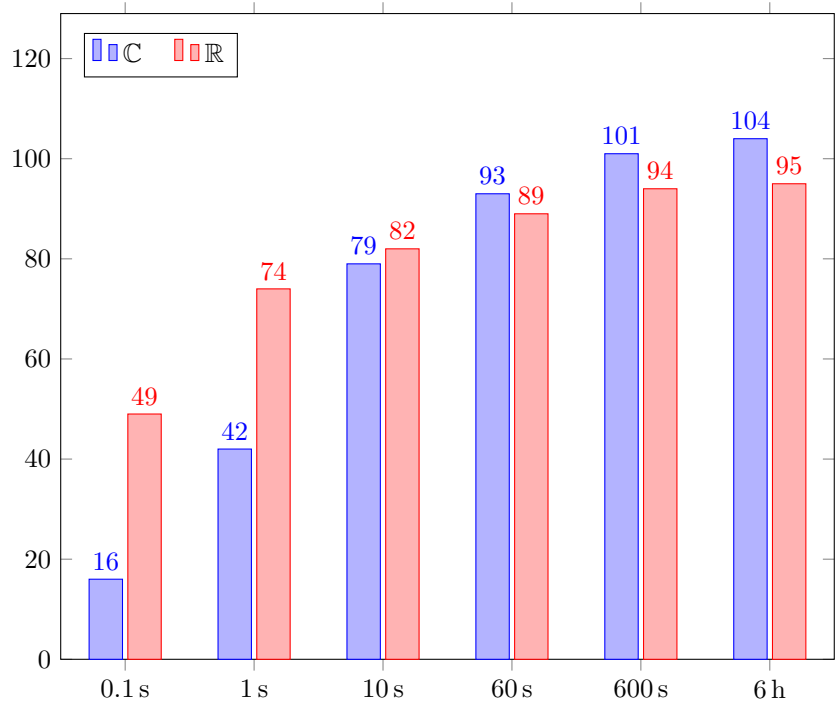
\begin{figure}[!h]
  \centering
  \pgfplotsset{width=0.8\linewidth, compat=1.14}
  \begin{tikzpicture}
    \begin{axis}[
      ymin=0,
      ymax=129,
      ybar,
      bar width=1.25em,
      legend pos=north west,
      legend style={legend columns=2},
      symbolic x coords={0.1\,s, 1\,s, 10\,s, 60\,s, 600\,s, 6\,h},
      xtick=data,
      nodes near coords,
      nodes near coords align={vertical}
      ]
      \addplot coordinates {
        (0.1\,s, 16) (1\,s, 42) (10\,s,79)
        (60\,s,93) (600\,s,101)
        (6\,h, 104)
      };
      \addplot coordinates {
        (0.1\,s, 49) (1\,s, 74) (10\,s,82)
        (60\,s,89) (600\,s,94)
        (6\,h, 95)
      };
      \legend{$\CC\quad$, $\RR$}
    \end{axis}
  \end{tikzpicture}
  \caption{Numbers of problems solved within certain time limits by
    Algorithm~\ref{alg:pacc} over $\CC$ (left) and Algorithm~\ref{alg:pacr}
    over $\RR$ (right). The total number of problems is
    129.\label{fig:analysis-b}}
\end{figure}

\subsection{Classifications of Rational Prime Decompositions}\label{app:cp}
Recall that Algorithm~\ref{alg:dacc} and Algorithm~\ref{alg:dacr} compute
prime decompositions $\mathcal{P}$ over $\QQ$ and then apply our
classification approach to each prime component individually. This yields
lists $\mathcal{X}$, $\mathcal{\hat X}$ containing in turn lists $X$ and
$\hat X$ of variables before and after projection, respectively, as well as a
list $\Gamma$ of classifications $\gamma$. We have
$|\mathcal{X}|=|\mathcal{\hat X}|=|\Gamma|=|\mathcal{P}|$, and elements can be
matched by position. Since this information is too comprehensive to be
displayed in a table, we give only $|\mathcal{P}|$ and summarize the numbers
of occurrences of the various classifications in $\Gamma$. Our database, of
course, stores the complete information.

\begingroup\setlength{\tabcolsep}{5pt}
\begin{longtable}{@{\extracolsep{\fill}}|rr|rr|rr|rr|}
  \captionabove{Applying $\DecomposeAndClassify_K$ over $\CC$ and $\RR$. Model
  numbers \textit{nnn} stand for \texttt{BIOMD0000000\itshape{nnn}}. ``m/a''
  indicates mass action kinetics (Definition~\ref{def:ma}). $|\mathcal{P}|$ is
  the number of prime components over $\QQ$. $\Gamma_\text{summary}$
  summarizes the classification of the components using $\tGG$ for group,
  $\tCC$ fo coset, $\tOO$ for empty set, and $\tXX$ else; lower case letters
  indicate projection. Time columns give total CPU times in seconds or
  ``$\bot$'' for a timeout with a limit of 6 hours per
  model. \label{tab:pc}}\\
  \hline
  &
  &
  &
  & \multicolumn{2}{c|}{Algorithm~\ref{alg:dacc} ($\CC$)}
  & \multicolumn{2}{c|}{Algorithm~\ref{alg:dacr} ($\RR$)}\\
  model & m/a
  & $|\mathcal{P}|$ & time (s)
  & $\Gamma_\text{summary}$ & time (s) & $\Gamma_\text{summary}$ & time (s)\\
  \hline
  \endfirsthead
  \captionabove{Applying $\DecomposeAndClassify_K$ over $\CC$ and $\RR$
  (continued)}\\
  &
  &
  &
  & \multicolumn{2}{c|}{Algorithm~\ref{alg:dacc} ($\CC$)}
  & \multicolumn{2}{c|}{Algorithm~\ref{alg:dacr} ($\RR$)}\\
  model & m/a
  & $|\mathcal{P}|$ & time (s)
  & $\Gamma_\text{summary}$ & time (s) & $\Gamma_\text{summary}$ & time (s)\\
  \hline
  \endhead
  001  &  1  &  1  &  4.49  & $ \mathtt{C} $ &  12.64  & $ \mathtt{C} $ &  5.00\\
  002  &  1  &  1  &  192.42  & $ \mathtt{X} $ &  318.37  &    &  $\bot$\\
  009  &  0  &  28  &  199.03  & $ \mathtt{C} + 13\mathtt{c} + 14\mathtt{o} $ &  238.71  & $ \mathtt{C} + 13\mathtt{c} + 14\mathtt{o} $ &  255.23\\
  011  &  0  &  20  &  167.45  & $ \mathtt{C} + 9\mathtt{c} + 10\mathtt{o} $ &  197.90  & $ \mathtt{C} + 9\mathtt{c} + 10\mathtt{o} $ &  212.07\\
  026  &  1  &  2  &  4.74  & $ \mathtt{o} + \mathtt{X} $ &  5.95  & $ \mathtt{o} + \mathtt{X} $ &  5.36\\
  028  &  1  &  2  &  135.05  & $ \mathtt{o} + \mathtt{X} $ &  162.90  &    &  $\bot$\\
  030  &  1  &  2  &  157.89  & $ \mathtt{o} + \mathtt{X} $ &  175.57  &    &  $\bot$\\
  035  &  0  &  1  &  12.31  & $ \mathtt{X} $ &  18.26  & $ \mathtt{X} $ &  12.76\\
  038  &  0  &  1  &  391.05  & $ \mathtt{X} $ &  463.38  &    &  $\bot$\\
  040  &  0  &  2  &  1.64  & $ \mathtt{o} + \mathtt{X} $ &  2.14  & $ \mathtt{o} + \mathtt{X} $ &  1.67\\
  046  &  0  &  2  &  764.98  & $ \mathtt{X} + \mathtt{x} $ &  817.90  &    &  $\bot$\\
  050  &  0  &  1  &  0.59  & $ \mathtt{o} $ &  0.61  & $ \mathtt{o} $ &  0.60\\
  052  &  0  &  1  &  0.95  & $ \mathtt{o} $ &  0.98  & $ \mathtt{o} $ &  0.95\\
  057  &  0  &  1  &  0.63  & $ \mathtt{C} $ &  5.44  & $ \mathtt{C} $ &  0.70\\
  069  &  0  &  2  &  56.34  & $ 2\mathtt{x} $ &  61.34  &    &  $\bot$\\
  072  &  0  &  2  &  0.26  & $ 2\mathtt{c} $ &  0.55  & $ 2\mathtt{c} $ &  0.27\\
  077  &  0  &  1  &  1.05  & $ \mathtt{C} $ &  6.26  & $ \mathtt{C} $ &  1.15\\
  080  &  0  &  7  &  9.43  & $ 3\mathtt{c} + 4\mathtt{o} $ &  11.43  & $ 3\mathtt{c} + 4\mathtt{o} $ &  9.62\\
  082  &  0  &  7  &  8.45  & $ 3\mathtt{c} + 4\mathtt{o} $ &  13.70  & $ 3\mathtt{c} + 4\mathtt{o} $ &  8.64\\
  085  &  0  &    &  $\bot$  &    &  $\bot$  &    &  $\bot$\\
  086  &  0  &    &  $\bot$  &    &  $\bot$  &    &  $\bot$\\
  091  &  0  &  1  &  0.02  & $ \mathtt{o} $ &  0.06  & $ \mathtt{O} $ &  0.02\\
  092  &  0  &  2  &  0.75  & $ \mathtt{C} + \mathtt{o} $ &  1.36  & $ \mathtt{C} + \mathtt{o} $ &  0.77\\
  099  &  0  &  1  &  0.26  & $ \mathtt{C} $ &  0.92  & $ \mathtt{C} $ &  0.47\\
  101  &  1  &  1  &  0.36  & $ \mathtt{X} $ &  5.90  & $ \mathtt{X} $ &  0.43\\
  102  &  0  &  1  &  38.46  & $ \mathtt{X} $ &  249.72  &    &  $\bot$\\
  103  &  0  &  1  &  3090.38  & $ \mathtt{X} $ &  20628.00  &    &  $\bot$\\
  104  &  0  &  4  &  0.63  & $ 4\mathtt{o} $ &  0.64  & $ 4\mathtt{o} $ &  0.63\\
  105  &  0  &  1  &  0.37  & $ \mathtt{o} $ &  0.55  & $ \mathtt{O} $ &  0.37\\
  108  &  0  &  2  &  40.86  & $ 2\mathtt{X} $ &  68.74  &    &  $\bot$\\
  122  &  0  &  1  &  326.19  & $ \mathtt{X} $ &  363.45  &    &  $\bot$\\
  123  &  0  &    &  $\bot$  &    &  $\bot$  &    &  $\bot$\\
  125  &  0  &  1  &  0.18  & $ \mathtt{X} $ &  1.22  & $ \mathtt{X} $ &  0.21\\
  137  &  0  &  5  &  180.68  & $ 5\mathtt{x} $ &  220.13  & $ 5\mathtt{x} $ &  190.34\\
  147  &  0  &    &  $\bot$  &    &  $\bot$  &    &  $\bot$\\
  150  &  0  &  1  &  1.12  & $ \mathtt{C} $ &  2.86  & $ \mathtt{C} $ &  1.15\\
  152  &  0  &    &  $\bot$  &    &  $\bot$  &    &  $\bot$\\
  153  &  0  &    &  $\bot$  &    &  $\bot$  &    &  $\bot$\\
  156  &  0  &  2  &  0.53  & $ \mathtt{C} + \mathtt{o} $ &  0.77  & $ \mathtt{C} + \mathtt{o} $ &  0.54\\
  158  &  0  &  1  &  0.22  & $ \mathtt{X} $ &  1.06  & $ \mathtt{X} $ &  0.24\\
  159  &  0  &  1  &  0.20  & $ \mathtt{C} $ &  0.60  & $ \mathtt{C} $ &  0.23\\
  163  &  0  &  1  &  9.69  & $ \mathtt{X} $ &  19.50  & $ \mathtt{X} $ &  10.36\\
  173  &  1  &  1  &  295.07  & $ \mathtt{x} $ &  1719.63  &    &  $\bot$\\
  178  &  0  &  1  &  3.66  & $ \mathtt{o} $ &  3.67  & $ \mathtt{o} $ &  3.66\\
  186  &  1  &  5  &  4.73  & $ 2\mathtt{c} + 3\mathtt{x} $ &  7.78  & $ 2\mathtt{c} + 3\mathtt{x} $ &  5.09\\
  187  &  1  &  5  &  18.41  & $ 2\mathtt{c} + 3\mathtt{x} $ &  29.25  & $ 2\mathtt{c} + 3\mathtt{x} $ &  18.79\\
  188  &  1  &  1  &  0.27  & $ \mathtt{o} $ &  0.29  & $ \mathtt{O} $ &  0.27\\
  189  &  1  &  1  &  0.02  & $ \mathtt{o} $ &  0.04  & $ \mathtt{O} $ &  0.02\\
  193  &  1  &  1  &  6.50  & $ \mathtt{X} $ &  7.12  & $ \mathtt{X} $ &  6.62\\
  194  &  1  &  1  &  1.38  & $ \mathtt{X} $ &  1.98  & $ \mathtt{X} $ &  1.41\\
  197  &  0  &  1  &  4.18  & $ \mathtt{X} $ &  6.96  & $ \mathtt{X} $ &  4.24\\
  198  &  1  &  1  &  0.69  & $ \mathtt{c} $ &  3.16  & $ \mathtt{c} $ &  0.72\\
  199  &  0  &  1  &  0.35  & $ \mathtt{C} $ &  2.63  & $ \mathtt{C} $ &  0.50\\
  200  &  0  &    &  $\bot$  &    &  $\bot$  &    &  $\bot$\\
  205  &  0  &    &  $\bot$  &    &  $\bot$  &    &  $\bot$\\
  220  &  0  &    &  $\bot$  &    &  $\bot$  &    &  $\bot$\\
  226  &  0  &  1  &  17.10  & $ \mathtt{X} $ &  25.93  &    &  $\bot$\\
  227  &  0  &  1  &  0.03  & $ \mathtt{o} $ &  0.15  & $ \mathtt{O} $ &  0.03\\
  229  &  0  &  2  &  1.60  & $ \mathtt{C} + \mathtt{c} $ &  6.77  & $ \mathtt{C} + \mathtt{c} $ &  1.73\\
  230  &  0  &  5  &  181.14  & $ 5\mathtt{x} $ &  249.78  &    &  $\bot$\\
  233  &  0  &  3  &  0.14  & $ 2\mathtt{C} + \mathtt{o} $ &  2.46  & $ 2\mathtt{C} + \mathtt{o} $ &  0.15\\
  243  &  0  &  8  &  9.13  & $ 8\mathtt{o} $ &  9.19  & $ 8\mathtt{o} $ &  9.18\\
  257  &  1  &  2  &  0.63  & $ \mathtt{o} + \mathtt{X} $ &  2.29  & $ \mathtt{o} + \mathtt{X} $ &  590.61\\
  259  &  0  &  1  &  2.93  & $ \mathtt{o} $ &  2.98  & $ \mathtt{o} $ &  2.95\\
  260  &  0  &  1  &  1.41  & $ \mathtt{o} $ &  1.46  & $ \mathtt{o} $ &  1.42\\
  261  &  0  &  1  &  0.84  & $ \mathtt{o} $ &  0.87  & $ \mathtt{o} $ &  0.86\\
  262  &  0  &  1  &  1.83  & $ \mathtt{c} $ &  4.43  & $ \mathtt{c} $ &  1.85\\
  263  &  0  &  1  &  2.48  & $ \mathtt{c} $ &  2.66  & $ \mathtt{c} $ &  2.49\\
  264  &  0  &  1  &  3.12  & $ \mathtt{c} $ &  7.28  & $ \mathtt{c} $ &  3.31\\
  267  &  0  &  1  &  0.10  & $ \mathtt{o} $ &  0.11  & $ \mathtt{o} $ &  0.10\\
  270  &  0  &  1  &  14511.34  & $ \mathtt{X} $ &  15819.74  &    &  $\bot$\\
  271  &  1  &  1  &  3.87  & $ \mathtt{c} $ &  5.31  & $ \mathtt{c} $ &  3.88\\
  272  &  1  &  1  &  0.12  & $ \mathtt{c} $ &  1.11  & $ \mathtt{c} $ &  0.12\\
  281  &  0  &  3144  &  8264.69  & $ 1008\mathtt{c} + 2136\mathtt{o} $ &  8317.80  & $ 1008\mathtt{c} + 2136\mathtt{o} $ &  8655.75\\
  282  &  1  &  2  &  0.36  & $ 2\mathtt{o} $ &  0.37  & $ 2\mathtt{o} $ &  0.37\\
  283  &  1  &  2  &  0.10  & $ 2\mathtt{o} $ &  0.11  & $ 2\mathtt{o} $ &  0.11\\
  286  &  0  &    &  $\bot$  &    &  $\bot$  &    &  $\bot$\\
  287  &  1  &  1  &  12.78  & $ \mathtt{X} $ &  27.29  &    &  $\bot$\\
  289  &  0  &  2  &  1.15  & $ \mathtt{o} + \mathtt{X} $ &  2.84  & $ 2\mathtt{o} $ &  1.15\\
  292  &  0  &  1  &  1.20  & $ \mathtt{o} $ &  1.20  & $ \mathtt{O} $ &  1.20\\
  306  &  0  &  2  &  0.53  & $ \mathtt{C} + \mathtt{o} $ &  1.04  & $ \mathtt{C} + \mathtt{o} $ &  0.54\\
  307  &  0  &  1  &  0.02  & $ \mathtt{o} $ &  0.03  & $ \mathtt{o} $ &  0.02\\
  310  &  0  &  1  &  0.02  & $ \mathtt{o} $ &  0.03  & $ \mathtt{o} $ &  0.02\\
  311  &  0  &  1  &  0.02  & $ \mathtt{o} $ &  0.02  & $ \mathtt{o} $ &  0.02\\
  312  &  0  &  1  &  0.04  & $ \mathtt{o} $ &  0.05  & $ \mathtt{o} $ &  0.04\\
  314  &  0  &  3  &  0.94  & $ 3\mathtt{c} $ &  1.11  & $ 3\mathtt{c} $ &  0.99\\
  315  &  1  &    &  $\bot$  &    &  $\bot$  &    &  $\bot$\\
  321  &  0  &  1  &  0.07  & $ \mathtt{o} $ &  0.08  & $ \mathtt{o} $ &  0.07\\
  332  &  0  &    &  $\bot$  &    &  $\bot$  &    &  $\bot$\\
  333  &  0  &    &  $\bot$  &    &  $\bot$  &    &  $\bot$\\
  334  &  0  &    &  $\bot$  &    &  $\bot$  &    &  $\bot$\\
  335  &  1  &    &  $\bot$  &    &  $\bot$  &    &  $\bot$\\
  344  &  0  &    &  $\bot$  &    &  $\bot$  &    &  $\bot$\\
  357  &  1  &  2  &  0.23  & $ 2\mathtt{o} $ &  0.25  & $ 2\mathtt{o} $ &  0.24\\
  359  &  0  &  5  &  0.96  & $ 2\mathtt{c} + 3\mathtt{o} $ &  2.80  & $ 2\mathtt{c} + 3\mathtt{o} $ &  1.02\\
  360  &  0  &  4  &  0.72  & $ 2\mathtt{c} + 2\mathtt{o} $ &  1.48  & $ 2\mathtt{c} + 2\mathtt{o} $ &  0.77\\
  361  &  0  &  2  &  0.48  & $ 2\mathtt{c} $ &  1.01  & $ 2\mathtt{c} $ &  0.60\\
  362  &  1  &    &  $\bot$  &    &  $\bot$  &    &  $\bot$\\
  363  &  0  &  1  &  0.26  & $ \mathtt{o} $ &  0.27  & $ \mathtt{o} $ &  0.26\\
  364  &  1  &  4  &  2.23  & $ 2\mathtt{c} + 2\mathtt{o} $ &  3.22  & $ 2\mathtt{c} + 2\mathtt{o} $ &  2.37\\
  365  &  1  &    &  $\bot$  &    &  $\bot$  &    &  $\bot$\\
  407  &  0  &    &  $\bot$  &    &  $\bot$  &    &  $\bot$\\
  413  &  1  &  1  &  1.07  & $ \mathtt{X} $ &  1.59  & $ \mathtt{X} $ &  1.11\\
  416  &  0  &  3  &  36.24  & $ \mathtt{X} + 2\mathtt{x} $ &  43.95  & $ \mathtt{X} + 2\mathtt{x} $ &  41.76\\
  430  &  0  &  20  &  112.47  & $ 8\mathtt{c} + 10\mathtt{o} + \mathtt{X} + \mathtt{x} $ &  221.05  & $ 8\mathtt{c} + 10\mathtt{o} + \mathtt{X} + \mathtt{x} $ &  12709.47\\
  431  &  0  &    &  $\bot$  &    &  $\bot$  &    &  $\bot$\\
  439  &  0  &  9  &  32.38  & $ 2\mathtt{c} + 2\mathtt{o} + \mathtt{X} + 4\mathtt{x} $ &  40.10  &    &  $\bot$\\
  459  &  0  &  1  &  0.24  & $ \mathtt{C} $ &  0.50  & $ \mathtt{C} $ &  0.25\\
  460  &  0  &  1  &  1.27  & $ \mathtt{X} $ &  1.49  & $ \mathtt{X} $ &  1.30\\
  475  &  0  &  30  &  30.46  & $ 14\mathtt{c} + 4\mathtt{o} + 12\mathtt{x} $ &  37.62  & $ 14\mathtt{c} + 4\mathtt{o} + 12\mathtt{x} $ &  43.66\\
  478  &  0  &    &  $\bot$  &    &  $\bot$  &    &  $\bot$\\
  479  &  0  &    &  $\bot$  &    &  $\bot$  &    &  $\bot$\\
  483  &  0  &  1  &  0.23  & $ \mathtt{X} $ &  0.59  & $ \mathtt{X} $ &  0.28\\
  484  &  0  &  1  &  0.10  & $ \mathtt{C} $ &  0.14  & $ \mathtt{C} $ &  0.11\\
  485  &  0  &  1  &  0.16  & $ \mathtt{X} $ &  0.23  & $ \mathtt{X} $ &  0.23\\
  486  &  1  &  1  &  0.15  & $ \mathtt{C} $ &  0.27  & $ \mathtt{C} $ &  0.16\\
  487  &  1  &  1  &  1.40  & $ \mathtt{C} $ &  1.78  & $ \mathtt{C} $ &  1.49\\
  491  &  1  &  1  &  2.55  & $ \mathtt{G} $ &  22.49  & $ \mathtt{G} $ &  47.84\\
  492  &  1  &  1  &  1.59  & $ \mathtt{G} $ &  10.04  & $ \mathtt{G} $ &  35.49\\
  504  &  0  &    &  $\bot$  &    &  $\bot$  &    &  $\bot$\\
  519  &  0  &  3  &  1.60  & $ \mathtt{C} + \mathtt{c} + \mathtt{o} $ &  4.15  & $ \mathtt{C} + \mathtt{c} + \mathtt{o} $ &  1.62\\
  546  &  0  &  1  &  0.13  & $ \mathtt{o} $ &  0.14  & $ \mathtt{o} $ &  0.13\\
  559  &  0  &  1  &  0.98  & $ \mathtt{o} $ &  1.18  & $ \mathtt{O} $ &  0.99\\
  581  &  0  &    &  $\bot$  &    &  $\bot$  &    &  $\bot$\\
  584  &  0  &  1  &  0.26  & $ \mathtt{C} $ &  2.36  & $ \mathtt{C} $ &  0.38\\
  619  &  1  &  1  &  0.86  & $ \mathtt{o} $ &  0.88  & $ \mathtt{o} $ &  0.88\\
  629  &  0  &  1  &  1.03  & $ \mathtt{C} $ &  1.70  & $ \mathtt{C} $ &  1.06\\
  637  &  1  &  3  &  8383.02  & $ \mathtt{X} + 2\mathtt{x} $ &  8464.64  &    &  $\bot$\\
  647  &  1  &  1  &  5.95  & $ \mathtt{X} $ &  6.70  & $ \mathtt{X} $ &  6.23\\
  \hline
\end{longtable}
\endgroup

There are 17 models where the complex classification in Table~\ref{tab:pc}
succeeded but the real classification timed out: 002, 028, 030, 038, 046, 069,
102, 103, 108, 122, 173, 226, 230, 270, 287, 439, 637. Vice versa, there are
no models where the classification succeeded over $\RR$ but not over $\CC$.
There are 8 models where we succeeded over both $\CC$ and $\RR$ but obtained
different classifications: 091, 105, 188, 189, 227, 289, 292, 559. All those
differences are visible in the summaries $\Gamma_{\text{summary}}$ in
Table~\ref{tab:pc}. Model 289 is $\too+\tXX$ over $\CC$ but $2\too$ over
$\RR$. We have addressed this difference already with the computations in
Appendix~\ref{app:cb}. With all other models listed above the difference is
$\too$ over $\CC$ in contrast to $\tOO$ over $\RR$.

Table~\ref{tab:statistics-p} collects some statistical information about the
computations. Figure~\ref{fig:analysis-p} provides some analysis of the
computation times. Notice that many computations finish quite quickly.

\begin{table}[!h]
  \centering
  \captionabove{Statistical information about the computations in
    Table~\ref{tab:pc}.\label{tab:statistics-p}}
  \begin{tabular}{lrr}
    \hline
    & Algorithm~\ref{alg:dacc} ($\CC$) & Algorithm~\ref{alg:dacr} ($\CC$)\\
    \hline
    time limit & 6\,h & 6\,h\\
    \#\,models & 129 & 129\\
    \#\,successful computations  & 105 & 88\\
    success rate & 81.40\% & 68.22\%\\
    median(time) & 2.80\,sec & 0.99\,sec\\
    \hline
  \end{tabular}
\end{table}

\begin{figure}[!h]
  \centering
  \pgfplotsset{width=0.8\linewidth, compat=1.14}
  \begin{tikzpicture}
    \begin{axis}[
      ymin=0,
      ymax=129,
      ybar,
      bar width=1.25em,
      legend pos=north west,
      legend style={legend columns=2},
      symbolic x coords={0.1\,s, 1\,s, 10\,s, 60\,s, 600\,s, 6\,h},
      xtick=data,
      nodes near coords,
      nodes near coords align={vertical}
      ]
      \addplot coordinates {
        (0.1\,s, 7) (1\,s, 30) (10\,s,73)
        (60\,s,86) (600\,s,99)
        (6\,h, 105)
      };
      \addplot coordinates {
        (0.1\,s, 8) (1\,s, 45) (10\,s,75)
        (60\,s,82) (600\,s,86)
        (6\,h, 88)
      };
      \legend{$\CC\quad$, $\RR$}
    \end{axis}
  \end{tikzpicture}
  \caption{Numbers of problems solved within certain time limits by
    Algorithm~\ref{alg:dacc} over $\CC$ (left) and Algorithm~\ref{alg:dacr}
    over $\RR$ (right). The total number of problems is
    129.\label{fig:analysis-p}}
\end{figure}
\end{document}